\author{
  Lars Arge\thanks{
    MADALGO, Aarhus University,
    \texttt{[large,f.staals]@cs.au.dk}
  }
  \and
  Frank Staals\footnotemark[1]
}
\title{Dynamic Geodesic Nearest Neighbor Searching in a Simple Polygon}
\newtheorem{theorem} {Theorem}
\newtheorem{lemma}[theorem]{Lemma}
\newtheorem{corollary}[theorem] {Corollary}
\newtheorem{observation}[theorem]{\textbf{Observation}}
\DeclareFontFamily{OT1}{pzc}{}
\DeclareFontShape{OT1}{pzc}{m}{it}{<-> s * [0.900] pzcmi7t}{}
\DeclareMathAlphabet{\mathpzc}{OT1}{pzc}{m}{it}
\newcommand{\mkmcal}[1]{\ensuremath{\mathcal{#1}}\xspace}
\newcommand{\VD}{\mkmcal{V}}
\newcommand{\F}{\mkmcal{F}}
\renewcommand{\P}{\mkmcal{P}}
\newcommand{\mkmbb}[1]{\ensuremath{\mathbb{#1}}\xspace}
\newcommand{\R}{\mkmbb{R}}
\newcommand{\eps}{\varepsilon}
\newcommand{\etal}{et al.\xspace}
\newcommand{\geod}{\scaleobj{1.15}{\varsigma}\xspace}
\newcommand{\geodlen}{\scaleobj{1.7}{\varsigma}\xspace}
\newcommand{\acall}[2]{\textsc{#1}(#2)\xspace}
\renewcommand*{\@fnsymbol}[1]{\ensuremath{\ifcase#1\or *\or \mathsection\or \mathparagraph\or
    \dagger\or \ddagger\or \|\or **\or \dagger\dagger
    \or \ddagger\ddagger \else\@ctrerr\fi}}
\begin{document}
\maketitle

\begin{abstract}
  We present an efficient dynamic data structure that supports geodesic nearest
  neighbor queries for a set of point sites $S$ in a static simple polygon
  $P$. Our data structure allows us to insert a new site in $S$, delete a site
  from $S$, and ask for the site in $S$ closest to an arbitrary query point
  $q \in P$. All distances are measured using the geodesic distance, that is,
  the length of the shortest path that is completely contained in $P$. Our data
  structure supports queries in $O(\sqrt{n}\log n\log^2 m)$ time, where $n$ is
  the number of sites currently in $S$, and $m$ is the number of vertices of
  $P$, and updates in $O(\sqrt{n}\log^3 m)$ time. The space usage is
  $O(n\log m + m)$. If only insertions are allowed, we can support queries in
  worst-case $O(\log^2 n\log^2 m)$ time, while allowing for $O(\log n\log^3 m)$
  amortized time insertions. We can achieve the same running times in case
  there are both insertions and deletions, but the order of these operations is
  known in advance.
\end{abstract}


\thispagestyle{empty}
\clearpage
\setcounter{page}{1}

\section{Introduction}
\label{sec:Introduction}

Nearest neighbor searching is a classic problem in computational geometry in
which we are given a set of point \emph{sites} $S$, and we wish to preprocess these
points such that for a query point $q$, we can efficiently find the site
$s \in S$ closest to $q$. 
We consider the case where $S$ is a \emph{dynamic} set of points inside a
simple polygon $P$. That is, we may insert a new site into $S$ or delete an
existing one. We measure the distance between two points $p$ and $q$ by their
\emph{geodesic distance} $\geodlen(p,q)$: the length of the \emph{geodesic}
$\geod(p,q)$. The geodesic $\geod(p,q)$ is the shortest path connecting $p$ and
$q$ that is completely contained in $P$. 

\paragraph{Motivation.} Our motivation for studying dynamic geodesic nearest
neighbor searching originates from a problem in
Ecology~\cite{burrows2014geographical,ordonez2013climatic}. We are given a
threshold $\eps$, and two sets of points in $\R^2$: a set of ``red'' points
$R$, representing the locations at which an animal or plant species lived many
years ago, and and a set of ``blue'' points $B$, representing 
locations where the species could occur today. Each point $p \in R\cup B$ also
has a real value $p_v$, representing an environmental value such as
temperature. The problem is to find, for every species (red point), the closest
current location (blue point) where it can migrate to, provided that the
environmental value (temperature) is similar to its original location,
i.e.~differs by at most $\eps$.

In the setting described above, it is easy to solve the problem in
$O(n\log^2 n)$ time, where $n$ is the total size of $R$ and $B$. Simply build a
balanced binary search tree that stores the blue points in its leaves, ordered
by their $v$-values, and associate each internal node with the Voronoi diagram
of its descendants. For each red point $r$ we can then find the closest blue
point in $O(\log^2 n)$ time by selecting the nodes that together represent the
interval $[r_v-\eps,r_v+\eps]$, and locating the closest point in each
associated Voronoi diagram. However, the geographical environment may limit
migration. For example, if the species considered is a land-based animal like a
deer then it cannot cross a large water body. Hence, we would like to consider
the problem in a more realistic environment. We restrict the movement of the
species to a simple polygon $P$ modeling the land, and measure distances using
the geodesic distance.

Directly applying the previous approach in this new setting, this time building
geodesic Voronoi diagrams, incurs a cost proportional to the size of the
polygon, $m$, in every node of the tree. Thus this approach has a running time
of $\Omega(nm)$. If, instead, we sweep a window of width $2\eps$ over the range
of values, while maintaining our (offline) geodesic nearest neighbor data
structure storing the set of blue points whose value lies in the window, we can
solve the problem in only $O(n(\log^2 n\log^2 m + \log^3 m) + m)$ time. This is
a significant improvement over the previous method.

\paragraph{Related Work.} A well known solution for nearest neighbor searching
with a fixed set of $n$ sites in $\R^2$ is to build the Voronoi diagram and
preprocess it for planar point location. This yields an optimal solution that
allows for $O(\log n)$ query time using $O(n)$ space and $O(n \log n)$
preprocessing time~\cite{bkos2008}. 
Voronoi diagrams have also been studied in case the set of sites is restricted
to lie in a polygon $P$, and we measure the distance between two points $p$ and
$q$ by their geodesic distance
$\geodlen(p,q)$. Aronov~\cite{aronov1989geodesic} shows that when $P$ is simple
and has $m$ vertices, the geodesic Voronoi diagram has complexity $O(n+m)$ and
can be computed in $O((n+m)\log(n+m)\log n)$ time. Papadopoulou and
Lee~\cite{papadopoulou1998geodesic} present an improved algorithm that runs in
$O((n+m)\log(n+m))$ time. Hershberger and Suri~\cite{hershberger1999sssp} give
an $O(m \log m)$ time implementation of the \emph{continuous dijkstra}
technique for the construction of a \emph{shortest path map}. The shortest path
map supports $O(\log m)$ time geodesic distance queries between a fixed source
point $s$ and an arbitrary query point $q$, even in a polygon with holes. When
running their algorithm ``simultaneously'' on all source points (sites) in $S$,
their algorithm constructs the geodesic Voronoi diagram, even in a polygon with
holes, in $O((n+m)\log(n+m))$ time. These results all allow for $O(\log(n+m))$
time nearest neighbor queries. Unfortunately, these results are efficient only
when the set of sites $S$ is fixed, as inserting or deleting even a single site
may cause a linear number of changes in the Voronoi diagram.

To support nearest neighbor queries, it is, however, not necessary to
explicitly maintain the (geodesic) Voronoi diagram. Bentley and
Saxe~\cite{bentley1980decomposable} show that nearest neighbor searching is a
\emph{decomposable search problem}. That is, we can find the answer to a query
by splitting $S$ into groups, computing the solution for each group
individually, and taking the solution that is best over all groups. This
observation has been used in several other approaches for nearest neighbor
searching with the Euclidean
distance~\cite{agarwal1995dynamic,dobkin1991maintenance}.\footnote{Indeed, it
  is also used in our initial solution to the migration problem.} However, even
with this observation, it is hard to get both polylogarithmic update and query
time. Only recently, Chan~\cite{chan2010dynamic_ch} managed to achieve
such results by maintaining the convex hull of a set of points in $\R^3$. Via a
well-known lifting transformation this also allows (Euclidean) nearest neighbor
queries for points in $\R^2$. Chan's solution uses $O(n)$ space, and allows for
$O(\log^2 n)$ queries, while supporting insertions and deletions in
$O(\log^3 n)$ and $O(\log^6 n)$ amortized time, respectively. Very recently,
Kaplan \etal~\cite{dynamic2017kaplan} managed to reduce
the deletion time to $O(\log^5 n)$. In addition, they obtain polylogarithmic
update and query times for more general, constant complexity, distance
functions. Note however that the function describing the geodesic distance may
have complexity $\Theta(m)$, and thus these results do not transfer easily to
our setting.

In the geodesic case, directly combining the decomposable search problem
approach with the static geodesic Voronoi diagrams described above does not
lead to an efficient solution. Similar to in our migration problem, this leads
to an $\Omega(m)$ cost corresponding to the complexity of the polygon on every
update. Simultaneously and independently from us Oh and
Ahn~\cite{oh_ahn2017voronoi} developed an approach that answers queries in
$O(\sqrt{n}\log(n+m))$ time, and updates in $O(\sqrt{n}\log n\log^2 m)$
time. Some of their ideas are similar to ours.




\paragraph{Our Results.} We develop a dynamic data structure to support nearest
neighbor queries for a set of sites $S$ inside a (static) simple polygon $P$. Our
data structure allows us to locate the site in $S$ closest to a query point
$q \in P$, to insert a new site $s$ into $S$, and to delete a site from
$S$. 
Our data structure supports queries in $O(\sqrt{n}\log n\log^2 m)$ time, and
updates in $O(\sqrt{n}\log^3 m)$ time, where $n$ is the number of sites
currently in $S$ and $m$ is the number of vertices of $P$. The space usage is
$O(n\log m + m)$.

As with other decomposable search problems~\cite{bentley1980decomposable}, we
can adapt our data structure to improve the query and update time if there are
no deletions. In this insertion-only setting, queries take worst-case
$O(\log^2 n\log^2 m)$ time, and insertions take amortized $O(\log n\log^3 m)$
time. Furthermore, we show that we can achieve the same running times in case
there are both insertions and deletions, but the order of these operations is
known in advance. The space usage of this version is $O(n\log n\log m + m)$.

\begin{figure}[tb]
  \centering
  \includegraphics{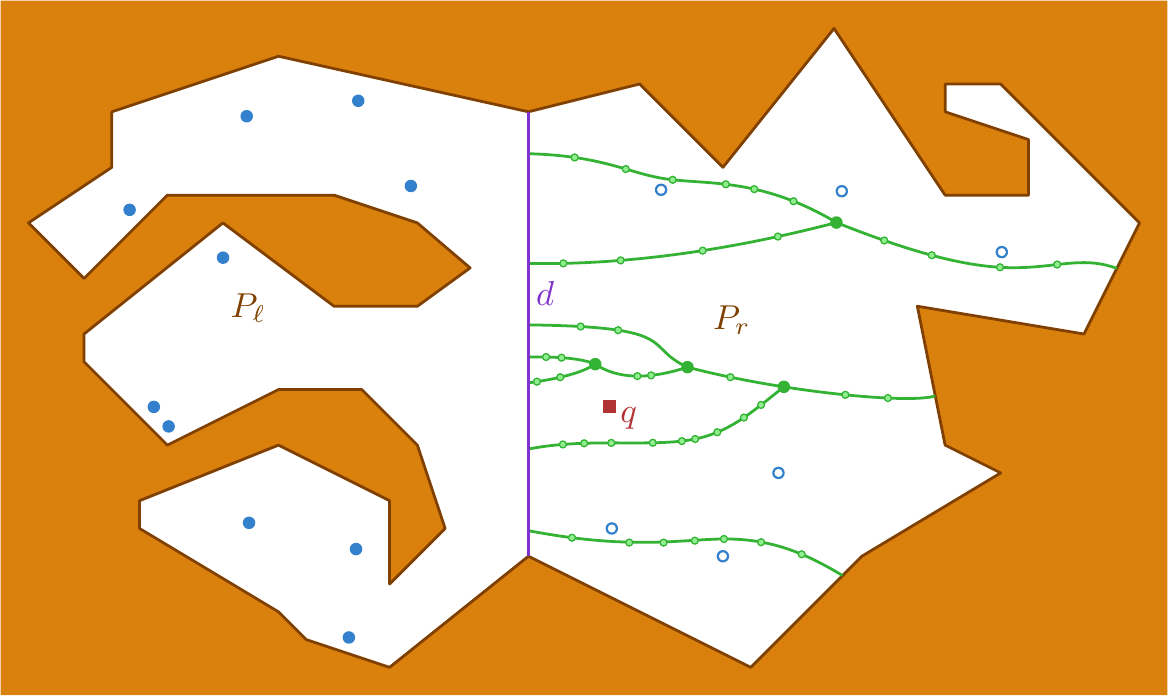}
  \caption{A sketch of the global approach. The diagonal $d$ splits $P$ into
    $P_\ell$ and $P_r$. The $k$ sites in $P_\ell$ induce a forest $\VD$ in
    $P_r$ with $O(k)$ degree three nodes, but potentially $O(m)$ degree two
    nodes. Using \VD we can locate the site closest to $q$ among the sites in
    $P_\ell$.}
  \label{fig:global_approach}
\end{figure}

\paragraph{The Global Approach.} The general idea in our approach is to
recursively partition the polygon into two roughly equal size sub-polygons
$P_\ell$ and $P_r$ that are separated by a diagonal. Additionally, we partition
the sites that lie in $P_\ell$ into a small number of subsets. The Voronoi
diagram that such a subset $S^*$ induces in the \emph{other} half of the
polygon, $P_r$, is a forest \VD. See Fig.~\ref{fig:global_approach} for an
illustration. We show that we can efficiently construct a compact
representation of this forest that supports planar point location queries. 
We handle the sites that lie in $P_r$ analogously. When we get a query point
$q \in P_r$, our forests allow us to find the site in $P_\ell$ closest to $q$
efficiently. To find the site in $P_r$ closest to $q$, we recursively query in
sub-polygon $P_r$. When we add or remove a site $s$ we have to rebuild the
forests associated with only few subsets containing $s$. We show that we can
recompute each forest \VD efficiently.

We give a more detailed description of the approach in
Section~\ref{sec:Global_Approach}. The core of our solution is that we can
represent, and construct, the Voronoi diagram \VD that a set of sites $S^*$ in
$P_\ell$ induces in $P_r$ in time proportional to the size $k$ of the subset
$S^*$. In particular, our representation has a size $O(k)$ and can be built in
time $O(k \log^2 m)$. We show this in Section~\ref{sec:rebuilding}. The key to
achieving this result is a representation of the bisector $b_{st}$ of two sites
$s$ and $t$. Our representation of $b_{st}$, presented in
Section~\ref{sec:Bisector}, allows us to find the intersection of $b_{st}$ with
another bisector $b_{tu}$ in $P_r$ efficiently, and can be obtained from the input polygon in
$O(\log^2 m)$ time.  We combine all of the components into a fully dynamic data
structure in Section~\ref{sec:Together}. Furthermore, we show that we can get
improved query and update times in the insertion-only and
offline-cases.

\section{An Overview of the Data Structure}
\label{sec:Global_Approach}

As in previous work on geodesic Voronoi diagrams~\cite{aronov1989geodesic,
  papadopoulou1998geodesic}, we assume that $P$ and $S$ are in general
position. That is, (i) no
two sites $s$ and $t$ in $S$ (ever) have the same geodesic distance to a vertex
of $P$, and (ii) no three points (either sites or vertices) are colinear. Note
that (i) implies that no bisector $b_{st}$ between sites $s$ and $t$ contains
a vertex of $P$.


We start by preprocessing $P$ for two-point shortest path queries using the
data structure by Guibas and Hershberger~\cite{guibas1989query} (see also the
follow up note of Hershberger~\cite{hershberger_new_1991}). This takes $O(m)$
time and allows us to compute the geodesic distance $\geodlen(p,q)$ between any
pair of query points $p,q \in P$ in $O(\log m)$ time. We then construct a
balanced decomposition of $P$ into sub-polygons~\cite{guibas1987balanced}. A
balanced decomposition is a binary tree in which each node represents a
sub-polygon $P'$ of $P$, together with a diagonal $d$ of $P'$ that splits $P'$
into two sub-polygons $P_\ell$ and $P_r$ that have roughly the same number of
vertices. As a result, the height of the tree, and thus the number of
\emph{levels} in the decomposition, is $O(\log m)$. The root of the tree
represents the polygon $P$ itself.

Consider a diagonal $d$ of $P'$ that splits $P'$ into $P_\ell$ and $P_r$, and
let $S_\ell = S \cap P_\ell$ denote the set of sites in $P_\ell$. The Voronoi
diagram of $S_\ell$ in $P_r$ is a forest with $O(k)$, where $k=|S_\ell|$, nodes
of degree one or three, and $O(m)$ nodes of degree
two~\cite{aronov1989geodesic}. The degree three nodes correspond to
intersection points of two bisectors, and the degree one nodes correspond to
intersection points of a bisector with the polygon boundary. See
Fig.~\ref{fig:global_approach} for an illustration. We refer to the topological
structure of the forest, i.e.~the forest with only the degree one and three
nodes, of $S_\ell$ as $\VD$.

At every level of the decomposition, we partition the sites in $S_\ell$ in
$O(\sqrt{n})$ subsets, each of size $O(\sqrt{n})$. For each subset $S^*$ we
explicitly build the embedded forest $\VD$ that represents its Voronoi diagram
in $P_r$ using an algorithm for constructing a Hamiltonian abstract Voronoi
diagram~\cite{klein1994hamiltonian_vd}. More specifically, for every degree one
or three node, we compute the location of the intersection point that it is
representing, and to which other nodes (of degree one or three) it is connected
to. Furthermore, we preprocess $\VD$ for planar point location. Note that the
edges in this planar subdivision correspond to pieces of bisectors, and thus
are actually chains of hyperbolic arcs, each of which may have a high internal
complexity. We do not explicitly construct these chains, but show that there is
an oracle that can decide if a query point $q \in P_r$ lies above or below a
chain (edge of the planar subdivision) in $O(\log m)$ time. It follows that our
representation of $\VD$ has size $O(k)$, where $k=|S^*|$, and supports planar
point location queries in $O(\log k\log m)$ time. We handle the sites
$S_r = S \cap P_r$ in $P_r$ analogously.

Once $P'$ is a triangle, corresponding to a leaf in the balanced decomposition,
the geodesic between any pair of points in $P'$ is a single line segment, and
thus the geodesic distance equals the Euclidean distance. In this case, we
maintain the sites in $P'$ in a dynamic Euclidean nearest neighbor data
structure such as the one of Chan~\cite{chan2010dynamic_ch} or the much simpler
data structure of Bentley and Saxe~\cite{bentley1980decomposable}.

Since every site $s$ is stored in exactly one subset at every level of the
decomposition, and the Voronoi diagram for each such subset has linear
size, the data structure uses $O(n\log m + m)$ space.

\paragraph{Handling a Query.} Consider a nearest neighbor query with point
$q \in P'$ at a node of the balanced decomposition corresponding to sub-polygon
$P'$, and let $P_\ell$ and $P_r$ be the sub-polygons into which $P'$ is
split. In case that $q \in P_r$, we find the site $s$ in $P_\ell$
closest to $q$, and recursively query the data structure for the nearest
neighbor of $q$ in sub-polygon $P_r$. We handle the case that $q \in P_\ell$
analogously. Once $P'$ is a triangle we find the site closest to $q$ by
querying the Euclidean nearest neighbor searching data structure associated
with $P'$. For each of the $O(\log m)$ levels of the balanced decomposition
this gives us a candidate closest site, and we return the one that is closest
over all.

Since we partitioned the set of sites $S_\ell$ that lie in $P_\ell$ into
$O(\sqrt{n})$ subsets, we can find the site $s \in S_\ell$ closest to $q$ by a
point location query in the Voronoi diagram associated with each of these
subsets. Each such a query takes $O(\log n\log m)$ time, and thus we can find
$s$ in $O(\sqrt{n}\log n\log m)$. The final query in the Euclidean nearest
neighbor data structure can easily be handled in $O(\sqrt{n}\log n)$
time~\cite{bentley1980decomposable}. Since we have $O(\log m)$ levels in the
balanced decomposition, the total query time is $O(\sqrt{n}\log n\log^2 m)$.

\paragraph{Handling Updates.} Consider inserting a new site $s$ into the
data structure, or removing $s$ from $S$. The site $s$ needs to be, or is, stored
in exactly one subset at every level of the decomposition. Suppose that $s$
needs to be, or is, in the subset $S^*$ of $S_\ell$ at some level of the
decomposition. We then simply rebuild the forest \VD associated with $S^*$. In
Section~\ref{sec:rebuilding} we will show that we can do this in
$O(|\VD|\log^2 m)$ time. Since the subset containing $s$, and thus its
corresponding forest \VD, has size $O(\sqrt{n})$, the cost per level is
$O(\sqrt{n}\log^2 m)$. Inserting $s$ into or deleting $s$ from the final
Euclidean nearest neighbor data structure can be done in $O(\sqrt{n})$
time~\cite{bentley1980decomposable}. It follows that the total update time is
$O(\sqrt{n}\log^3 m)$.

\section{Representing a Bisector}
\label{sec:Bisector}

\newcommand{\br}{b^*}

Assume without loss of generality that the diagonal $d$ that splits $P$ into
$P_\ell$ and $P_r$ is a vertical line-segment, and let $s$ and $t$ be two sites
in $S_\ell$. In this section we show that there is a representation of
$\br_{st} = b_{st} \cap P_r$, the part of the bisector $b_{st}$ that lies in
$P_r$, that allows efficient random access to the bisector vertices. Moreover,
we can obtain such a representation using a slightly modified version of the
two-point shortest path data structure of Guibas and
Hershberger~\cite{guibas1989query}.

Let $s$ be a site in $S_\ell$, and consider the shortest path tree $T$ rooted
at $s$. Let $e=\overline{uv}$ be an edge of $T$ for which $v$ is further away from $s$
than $u$. The half-line starting at $v$ that is colinear with, and extending
$e$ has its first intersection with the boundary $\partial P$ of $P$ in a point $w$. We refer to
the segment $\overline{vw}$ as the \emph{extension segment} of
$v$~\cite{aronov1989geodesic}. Let $E_s$ denote the set of all extension
segments of all vertices in $T$.

\begin{figure}[tb]
  \centering
  \includegraphics[page=2]{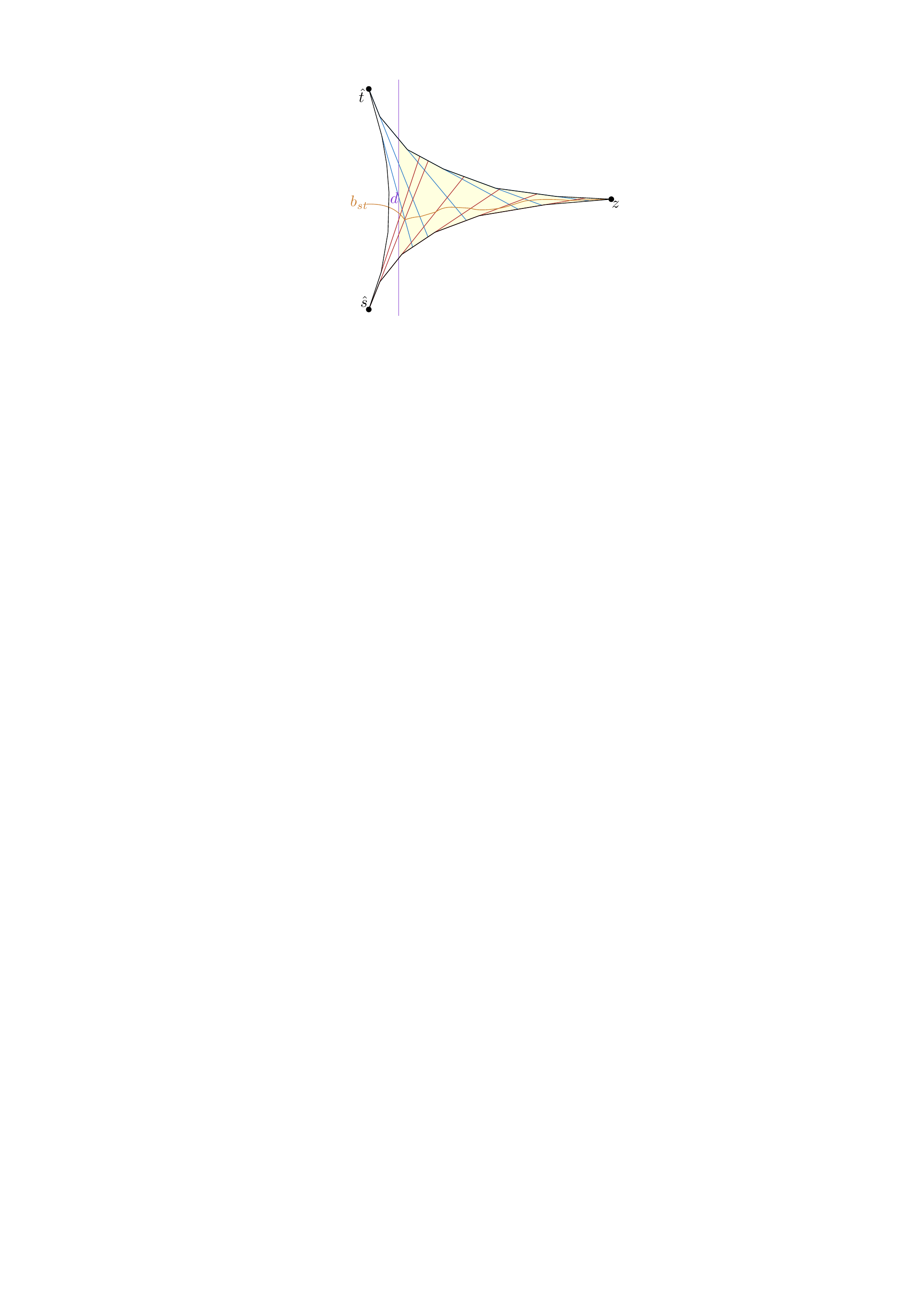}
  \quad
  \includegraphics[page=3]{bisector_in_pseudo-triangle}
  \caption{(a) The polygon $\P(p,s,t)$ bounded by the shortest paths between
    $s$, $p$, and $t$ is a pseudo-triangle $\hat{\P}(p,s,t)$ with polylines
    attached to its corners $\hat{s}$, $\hat{p}$, and $\hat{t}$. It contains
    the funnel $\F(p,s,t)$. (b) The clipped extension segments in $F_s^t$ are all
    pairwise disjoint, and end at the chain from $t$ to $z$.}
  \label{fig:bisector_in_funnel}
\end{figure}

Consider two sites $s,t \in S_\ell$, and its bisector $b_{st}$. We then have

\begin{lemma}[Lemma 3.22 of Aronov~\cite{aronov1989geodesic}]
  \label{lem:bisector_prop}
  The bisector $b_{st}$ is a smooth curve connecting two points on $\partial P$
  and having no other points in common with $\partial P$. It is the
  concatenation of $O(m)$ straight and hyperbolic arcs. The points along
  $b_{st}$ where adjacent pairs of these arcs meet, i.e.,~the vertices of
  $b_{st}$, are exactly the intersections of $b_{st}$ with the segments of $E_s$ or
  $E_t$.
\end{lemma}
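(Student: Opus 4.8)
The plan is to study the function $f\colon P \to \R$ given by $f(p) = \geodlen(s,p) - \geodlen(t,p)$ and to show that its zero set $b_{st} = f^{-1}(0)$ has the stated shape. The basic tool is that the extension segments $E_s$ cut $P$ into cells, each of which consists of the points whose geodesic from $s$ makes its last turn at one fixed vertex $a$ of $T$ (or has $s$ itself as this \emph{anchor}); on such a cell $\geodlen(s,p) = \geodlen(s,a) + |ap|$. Overlaying the cell decompositions coming from $E_s$ and from $E_t$ yields a subdivision of $P$ on every cell of which both the $s$-anchor $a$ and the $t$-anchor $b$ are constant, so that there $f(p) = \bigl(\geodlen(s,a) - \geodlen(t,b)\bigr) + |ap| - |bp|$. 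The zero set of such a function is an arc of one branch of a hyperbola with foci $a$ and $b$ (a straight segment, namely the perpendicular bisector of $a$ and $b$, when the two additive constants happen to agree). Hence inside any cell $b_{st}$ consists of straight and hyperbolic arcs, and what remains is to control where consecutive arcs meet, to bound their number, and to pin down the global topology of $b_{st}$.

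For the meeting points, the cells of the overlay are separated precisely by pieces of segments of $E_s \cup E_t$, so the arc description of $b_{st}$ can only change where $b_{st}$ crosses such a segment, and there it genuinely does change: these crossings are the vertices. I would show that $b_{st}$ is nevertheless $C^1$ (smooth) there. Crossing an extension segment of $E_s$ replaces the $s$-anchor $a$ by the next vertex $a'$ of $T$, and $a$, $a'$ and the crossing point are colinear by the very definition of an extension segment, so $\nabla\geodlen(s,\cdot)$ — the unit vector pointing from the current anchor to $p$ — is continuous across the crossing, and $\nabla\geodlen(t,\cdot)$ is continuous there trivially. Hence $\nabla f$ is continuous on all of $P$, and general position forces $\nabla f \neq 0$ along $b_{st}$: if the two unit gradient vectors coincided at a point of $b_{st}$, its two anchors and that point would be colinear, and chasing the prefix property of shortest paths then forces a polygon vertex (or $s$, or $t$) to be equidistant from $s$ and $t$, which general position excludes. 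The implicit function theorem now gives that $b_{st}$ is a smooth curve, with no vertices in the interior of a cell and vertices exactly at the intersections with $E_s$ and $E_t$.

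The two remaining points — that there are only $O(m)$ arcs, and that $b_{st}$ is a single curve meeting $\partial P$ in exactly two points and nowhere else — are where I expect the real work to lie. For the count, the clean route is to restrict attention to the part of $P$ that both sites reach through the funnels $\F(p,s,t)$ and to use that the \emph{clipped} extension segments lying in this region are pairwise disjoint (Fig.~\ref{fig:bisector_in_funnel}(b)); since there are $O(m)$ of them and the chains bounding the funnels are convex, each is crossed only a bounded number of times, so $b_{st}$ has $O(m)$ vertices — ruling out a single extension segment being crossed many times is the delicate part here. For the global statement one walks around $\partial P$, observes that $f$ changes sign an even number of times, and combines this with the fact that in a simple polygon the region $\{\,p : f(p) < 0\,\}$ (the part of $P$ nearer to $s$ than to $t$) is connected and simply connected, to conclude that $b_{st}$ is a single simple arc with its two endpoints on $\partial P$ and no other contact with $\partial P$. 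As the statement is precisely Lemma~3.22 of Aronov~\cite{aronov1989geodesic}, I would in the end invoke that reference for the details of these last two steps and only spell out the local analysis above.
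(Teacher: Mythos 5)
The paper does not prove this lemma at all: it is imported wholesale as Lemma~3.22 of Aronov~\cite{aronov1989geodesic}, so there is no ``paper's own proof'' to compare against, only Aronov's. Evaluated on its own terms, your local analysis is sound and is in fact the standard (and Aronov's) route: overlaying the two shortest-path-map decompositions so that each of $s$ and $t$ has a fixed anchor on every cell, observing that the zero set of $|ap|-|bp|+\text{const}$ on a cell is a branch of a hyperbola (degenerating to a line when the constants cancel), and using the colinearity of $u$, $v$, and the crossing point on the extension segment of $v$ to get $C^1$ continuity of $\nabla\geodlen(s,\cdot)$ across the cell boundaries. That is the right decomposition and the right local picture, and the observation that vertices of $b_{st}$ occur exactly where the anchor pair changes, i.e.\ on $E_s\cup E_t$, is correct.

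The two places you flag as ``the real work'' are genuine gaps, and the funnel-convexity idea does not by itself close the first. The bound you need is not that each clipped extension segment is crossed ``a bounded number of times'' but that each is crossed \emph{at most once}; this is Aronov's Corollary~3.29, which the present paper also invokes separately (right after Observation~\ref{obs:extension_segments_pw_disjoint}), and it is proved from the monotonicity of $f$ along any shortest path from $s$ --- exactly the paper's Lemma~\ref{lem:bisector_sp_intersect}. Pairwise disjointness of the clipped segments plus convexity of the funnel chains does not rule out a single segment being crossed several times; you need the at-most-one-crossing property along shortest paths. Similarly the global statement (one simple arc, endpoints on $\partial P$, no other boundary contact) is obtained in Aronov by showing $\{p : f(p)<0\}$ is open and connected via the same monotonicity, not by a parity count of sign changes on $\partial P$ alone. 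Deferring both to the reference is acceptable given that the paper does the same, but as a standalone proof your sketch is not complete without them. One smaller point: the claim that general position forces $\nabla f\neq 0$ on $b_{st}$ deserves the short case analysis rather than ``chasing the prefix property'': if the two unit gradients agree at $p\in b_{st}$, the anchors $a,b$ lie on a common ray from $p$; if, say, $a$ lies between $b$ and $p$, then $a\in\geod(t,p)$, so $a$ is also the $t$-anchor of $p$, giving $a=b$, and $f(p)=0$ then yields $\geodlen(s,a)=\geodlen(t,a)$, contradicting general-position assumption~(i).
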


\begin{lemma}[Lemma~3.28 of Aronov~\cite{aronov1989geodesic}]
  \label{lem:bisector_sp_intersect}
  For any point $p \in P$, the bisector $b_{st}$ intersects the shortest path
  $\geod(s,p)$ in at most a single point.
\end{lemma}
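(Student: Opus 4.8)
The plan is to reduce the statement to a monotonicity property of a scalar function along $\geod(s,p)$. Parametrize $\geod(s,p)$ by arc length from $s$, writing $x(\tau)$ for the point at distance $\tau$ from $s$ along the path, $\tau \in [0,L]$ with $L=\geodlen(s,p)$; since a sub-path of a geodesic is a geodesic we have $\geodlen(s,x(\tau))=\tau$. Define $f(\tau)=\geodlen(t,x(\tau))-\geodlen(s,x(\tau))=\geodlen(t,x(\tau))-\tau$. Because $\geod(s,p)$ is a simple curve, the intersection of $b_{st}$ with $\geod(s,p)$ is in bijection with $\{\tau : f(\tau)=0\}$, so it suffices to bound the number of such $\tau$.

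First I would show that $f$ is non-increasing. For $\tau_1\le\tau_2$ the triangle inequality and the fact that the sub-path of $\geod(s,p)$ between $x(\tau_1)$ and $x(\tau_2)$ has length $\tau_2-\tau_1$ give $\geodlen(t,x(\tau_2))\le \geodlen(t,x(\tau_1))+ (\tau_2-\tau_1)$, hence $f(\tau_2)\le f(\tau_1)$. Since $f$ is continuous, non-increasing, and $f(0)=\geodlen(s,t)>0$, its zero set is either empty, a single point, or a closed interval $[\alpha,\beta]$ with $0<\alpha\le\beta\le L$. It therefore remains only to rule out $\alpha<\beta$.

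So suppose $f\equiv 0$ on $[\alpha,\beta]$ with $\alpha<\beta$ and put $v=x(\beta)$. Then $\geodlen(t,x(\alpha))=\alpha$ and $\geodlen(t,v)=\beta=\geodlen(t,x(\alpha))+\geodlen(x(\alpha),v)$, so the sub-path $x([\alpha,\beta])$ is a common terminal portion of both $\geod(s,v)$ and $\geod(t,v)$. In the shortest path tree rooted at $v$ the paths to $s$ and to $t$ thus share a prefix of positive length; let $w$ be the point at which they separate, so $\geodlen(v,w)\ge\beta-\alpha>0$. Since $\geodlen(s,v)=\geodlen(t,v)$ and $s\ne t$, neither $\geod(s,v)$ nor $\geod(t,v)$ can pass through the other site, so $w\notin\{s,t\}$; and since the two paths leave $w$ in distinct directions, at least one of them changes direction at $w$, forcing $w$ to be a vertex of $P$. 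But $w$ lies on both $\geod(s,v)$ and $\geod(t,v)$, so $\geodlen(s,w)=\beta-\geodlen(w,v)=\geodlen(t,w)$ — a vertex of $P$ geodesically equidistant from $s$ and $t$, contradicting the general position assumption. Hence $\alpha=\beta$, and $b_{st}$ meets $\geod(s,p)$ in at most one point.

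The step I expect to need the most care is the degenerate case just treated, where $b_{st}$ would have to ``run along'' a stretch of $\geod(s,p)$: one must identify the branch point $w$ of the two shortest paths and argue it is a genuine vertex of $P$ (so that the general position hypothesis bites), rather than an interior point where the two paths merely happen to coincide — which is exactly what ``their outgoing directions at $w$ differ'' rules out. Everything else is routine bookkeeping with the triangle inequality and the tree structure of shortest paths in a simple polygon.
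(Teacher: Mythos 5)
The paper states this as Lemma 3.28 of Aronov~\cite{aronov1989geodesic} and cites it without reproducing a proof, so there is no in-paper argument to compare against; you have supplied your own, and it is correct. The engine is the observation that along $\geod(s,p)$ the distance from $s$ grows at unit rate while the distance from $t$ is $1$-Lipschitz, so $f(\tau)=\geodlen(t,x(\tau))-\tau$ is continuous and non-increasing; its zero set is therefore a (possibly empty or degenerate) closed subinterval, and it remains only to rule out an interval of positive length. Your branch-point argument for that degenerate case is the right one: if $f\equiv 0$ on $[\alpha,\beta]$ with $\alpha<\beta$, then $\geod(s,v)$ and $\geod(t,v)$, where $v=x(\beta)$, share a terminal subarc of length at least $\beta-\alpha$; the last common point $w$, reached from $v$ along a common direction but left in distinct directions, is distinct from $v$, $s$, and $t$, hence an interior turning point of at least one of the two geodesics, hence a reflex vertex of $P$; and the sharing forces $\geodlen(s,w)=\geodlen(t,w)$, contradicting general-position assumption (i) of Section~\ref{sec:Global_Approach}. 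You were right to flag this as the delicate step --- it is exactly where the hypothesis is indispensable, since without it the bisector can be two-dimensional and overlap a shortest path over a positive-length arc, and the lemma is simply false. One thing worth saying explicitly when you write this up: the identification of $b_{st}\cap\geod(s,p)$ with $\{\tau:f(\tau)=0\}$ relies on $\geod(s,p)$ being a simple arc, which holds because shortest paths in a simple polygon are unique.
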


\begin{figure}[tb]
  \centering
  \includegraphics{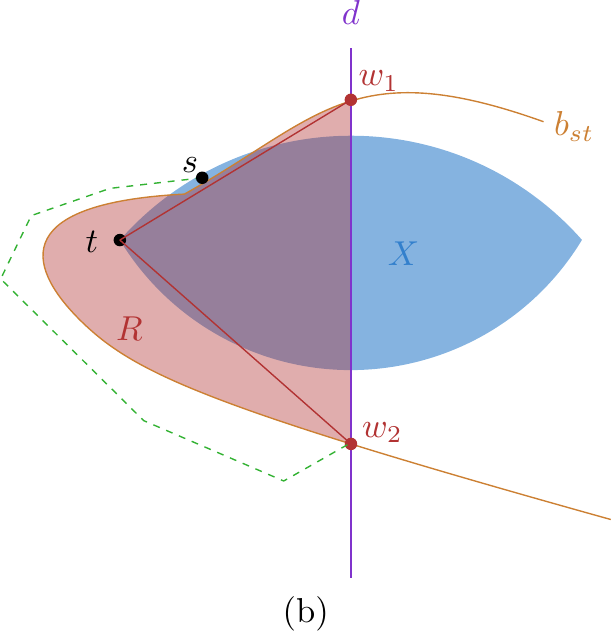}
  \caption{The
    geodesic distance from $t$ to $w_1$ and $w_2$ equals its Euclidean
    distance. The shortest path from $s$ to $w_2$ (dashed, green) has to go
    around $R$, and is thus strictly longer than $\|tw_2\|$.}
  \label{fig:intersection_diagonal}
\end{figure}


Consider a point $p$ on $\partial P_r$ and let $\P(p,s,t)$ be the polygon
defined by the shortest paths $\geod(s,p)$, $\geod(p,t)$, and
$\geod(t,s)$. This polygon $\P(p,s,t)$ is a pseudo-triangle $\hat{\P}(p,s,t)$
whose corners $\hat{s}$, $\hat{t}$, and $\hat{p}$, are connected to $s$, $t$,
and $p$ respectively, by arbitrary polylines.

Let $s'$ and $t'$ be the intersection points between $d$ and the geodesics
$\geod(p,s)$ and $\geod(p,t)$, respectively, and assume without loss of generality that $s'_y
\leq t'_y$. The restriction of $\P(p,s,t)$ to $P_r$ is a
\emph{funnel} $\F(p,s,t)$, bounded by $\geod(t',p)$, $\geod(p,s')$, and
$\overline{s't'}$. See Fig.~\ref{fig:bisector_in_funnel}(a). Note that 
$\geod(s,t)$ is contained in $P_\ell$.

Clearly, if $b_{st}$ intersects $P_r$ then it intersects $d$. There is at most one such intersection point:


\begin{lemma}
  \label{lem:bst_intersect_d}
  The bisector $b_{st}$ intersects $d$ in at most one point $w$.
\end{lemma}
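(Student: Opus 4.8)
The plan is to argue by contradiction: assume $|b_{st}\cap d|\ge 2$ and fix two distinct points $w_1,w_2\in b_{st}\cap d$. First I would record two easy facts. Since $s,t\in S_\ell\subseteq P_\ell$ and $d$ is a chord bounding $P_\ell$, every shortest path in $P$ from $s$ or $t$ to a point of $d$ stays inside $P_\ell$ (any excursion into $P_r$ starts and ends on $d$ and can be replaced by the straight sub-segment of $d$, which by uniqueness of geodesics means the path never leaves $P_\ell$); hence $\geod(s,w_1)$, $\geod(s,w_2)$, $\geod(t,w_1)$, $\geod(t,w_2)$ all lie in $P_\ell$. And since $w_1,w_2$ lie on the bisector, $\geodlen(s,w_i)=\geodlen(t,w_i)$ for $i=1,2$.

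The core of the argument is the following observation. Suppose the ``crossed'' pair of geodesics $\geod(s,w_2)$ and $\geod(t,w_1)$ have a point $x$ in common. Then $x\in\geod(s,w_2)$ gives $\geodlen(s,w_2)=\geodlen(s,x)+\geodlen(x,w_2)$ and $x\in\geod(t,w_1)$ gives $\geodlen(t,w_1)=\geodlen(t,x)+\geodlen(x,w_1)$, so, using the triangle inequality together with the equidistances above,
\begin{align*}
  \geodlen(s,w_1)+\geodlen(t,w_2)
    &\le \geodlen(s,x)+\geodlen(x,w_1)+\geodlen(t,x)+\geodlen(x,w_2)\\
    &= \geodlen(s,w_2)+\geodlen(t,w_1) = \geodlen(s,w_1)+\geodlen(t,w_2),
\end{align*}
so both applications of the triangle inequality are tight. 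Tightness of $\geodlen(s,w_1)\le\geodlen(s,x)+\geodlen(x,w_1)$ means $\geod(s,x)$ followed by $\geod(x,w_1)$ is a shortest $s$--$w_1$ path, hence $x\in\geod(s,w_1)$ by uniqueness of geodesics in a simple polygon; together with $x\in\geod(t,w_1)$ and $\geodlen(s,w_1)=\geodlen(t,w_1)$ this gives $\geodlen(s,x)=\geodlen(t,x)$, i.e.\ $x\in b_{st}$. But then $b_{st}$ meets $\geod(s,w_1)$ in both $x$ and $w_1$, so Lemma~\ref{lem:bisector_sp_intersect} (applied with $p=w_1$) forces $x=w_1$; since also $x\in\geod(s,w_2)$, the same lemma with $p=w_2$ forces $x=w_2$. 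Hence $w_1=w_2$, a contradiction. The identical argument applies if instead $\geod(s,w_1)$ and $\geod(t,w_2)$ have a common point.

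It remains to show that one of these two cross-pairs of geodesics must meet, and this is the step I expect to be the main obstacle. The intuition is that $w_1$ and $w_2$ lie on the single chord $d$ whereas $s$ and $t$ lie together on the $P_\ell$-side, so $w_1$ and $w_2$ are consecutive along $\partial P_\ell$; in the cyclic order around the (geodesic) convex hull of $\{s,t,w_1,w_2\}$ the pair $\{w_1,w_2\}$ is therefore not interleaved with $\{s,t\}$, which already for straight segments forces exactly one of the matchings $\{s\text{--}w_1,\,t\text{--}w_2\}$ and $\{s\text{--}w_2,\,t\text{--}w_1\}$ to consist of crossing segments. To make this precise for geodesics I would examine how the ``$Y$''-shaped curve $\geod(s,w_1)\cup\geod(s,w_2)$ -- which meets $\partial P_\ell$ exactly in $w_1$ and $w_2$ and whose two branches share a common prefix out of $s$ -- subdivides $P_\ell$, and determine on which side of it the point $t$ (and hence the path $\geod(t,w_1)$) lies, remembering that crossing the shared prefix counts as crossing $\geod(s,w_2)$. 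Figure~\ref{fig:intersection_diagonal} illustrates the configuration to keep in mind: $t$ sees $w_1$ and $w_2$ directly while $\geod(s,w_2)$ is forced around the reflex region $R$ and must therefore meet $\geod(t,w_1)$. A short case analysis along these lines, combined with the observation of the previous paragraph, completes the proof; note also that this argument needs no separate treatment of the degenerate possibility that $b_{st}$ overlaps part of $d$, since any two distinct points of such an overlap already furnish the required $w_1\neq w_2$.
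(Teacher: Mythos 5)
Your ``core observation'' is correct as far as it goes: if $\geod(s,w_2)$ and $\geod(t,w_1)$ share a point $x$, tightness of both triangle inequalities places $x$ on all four geodesics and on $b_{st}$, and Lemma~\ref{lem:bisector_sp_intersect} then forces $x=w_1=w_2$. The trouble is that the step you label ``the main obstacle'' is not merely hard -- it is impossible, by essentially the same lemma. Since $w_2\in b_{st}$ and Lemma~\ref{lem:bisector_sp_intersect} gives $|b_{st}\cap\geod(s,w_2)|\le 1$, we have $b_{st}\cap\geod(s,w_2)=\{w_2\}$; so the connected set $\geod(s,w_2)\setminus\{w_2\}$ never meets $b_{st}$ and, containing $s$, lies entirely in the open $s$-side of the bisector, while symmetrically $\geod(t,w_1)\setminus\{w_1\}$ lies in the open $t$-side. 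Hence the cross-pair can only meet at $w_1$ or $w_2$, and either one already gives $w_1=w_2$ (e.g.\ $w_1\in\geod(s,w_2)\cap b_{st}=\{w_2\}$). In other words, ``the cross-pair meets'' is \emph{equivalent} to $w_1=w_2$; any interleaving/parity argument that ``derived'' it would be smuggling in the conclusion, and no topological case analysis of the $Y$-shape can close this gap, because under the standing assumption $w_1\ne w_2$ the cross-pair geodesics genuinely do not meet.

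The paper's contradiction therefore comes from metric geometry rather than from forced crossings, and you will need an argument of that kind. The paper takes the arc of $b_{st}$ between $w_1$ and $w_2$ in $P_\ell$, closes it with $\overline{w_1w_2}\subset d$ to get an obstacle-free region $R$, and uses Lemma~\ref{lem:bisector_sp_intersect} to conclude that $\geod(t,w_1)=\overline{tw_1}$ and $\geod(t,w_2)=\overline{tw_2}$ lie in $R$. The equidistance condition then confines $s$ to the lens $X=D_1\cap D_2$ of the disks of radius $\|tw_i\|$ about $w_i$, and more precisely to a pocket of $X\setminus R$. Finally, Lemma~\ref{lem:bisector_sp_intersect} prevents $\geod(s,w_2)$ from re-entering $R$, so that path must route around $R$, yielding the \emph{strict} inequality $\geodlen(s,w_2)>\|tw_2\|=\geodlen(t,w_2)$, contradicting $w_2\in b_{st}$. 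You assembled the right ingredients (equidistance, Lemma~\ref{lem:bisector_sp_intersect}, the geometry of $P_\ell$), but the decisive step has to be a length comparison of this sort, not a crossing-parity claim.
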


\begin{proof}
  Assume, by contradiction, that $b_{st}$ intersects $d$ in two points $w_1$
  and $w_2$, with $w_1$ above $w_2$. See
  Fig.~\ref{fig:intersection_diagonal}(b). Note that by
  Lemma~\ref{lem:bisector_prop}, $b_{st}$ cannot intersect $\partial P_\ell$,
  and thus $d$, in more than two points. Thus, the part of $b_{st}$ that lies
  in $P_\ell$ between $w_1$ and $w_2$ does not intersect $\partial
  P_\ell$. Observe that this implies that the region $R$ enclosed by this part
  of the curve, and the part of the diagonal from $w_1$ to $w_2$
  (i.e. $\overline{w_1w_2}$) is empty. Moreover, since the shortest paths from
  $t$ to $w_1$ and to $w_2$ intersect $b_{st}$ only once
  (Lemma~\ref{lem:bisector_sp_intersect}) region $R$ contains the shortest
  paths $\geod(t,w_1)=\overline{tw_1}$ and $\geod(t,w_2)=\overline{tw_2}$.

  Since $s$ has the same geodesic distance to $w_1$ and $w_2$ as $t$, $s$ must
  lie in the intersection $X$ of the disks $D_i$ with radius $\|tw_i\|$
  centered at $w_i$, for $i \in 1,2$. It now follows that $s$ lies in one of
  the connected sets, or ``pockets'', of $X \setminus R$. Assume without loss
  of generality that it lies in a pocket above $t$ (i.e.~$s_y > t_y$). See
  Fig.~\ref{fig:intersection_diagonal}. We now again use
  Lemma~\ref{lem:bisector_sp_intersect}, and get that $\geod(s,w_2)$ intersects
  $b_{st}$ only once, namely in $w_2$. It follows that the shortest path from
  $s$ to $w_2$ has to go around $R \ni t$, and thus has length strictly larger
  than $\|tw_2\|$. Contradiction.
\end{proof}

Since $b_{st}$ intersects $d$ only once (Lemma~\ref{lem:bst_intersect_d}), and
there is a point of $b_{st}$ on $\geod(s,t) \subset P_\ell$, it follows that
there is at most one point $z$ where $b_{st}$ intersects $\partial P_r$ the
outer boundary of $P_r$, i.e.~$\partial P_r \setminus d$. Observe that
therefore $z$ is a corner of the pseudo-triangle $\hat{\P}(z,s,t)$, and that
$\F(z,s,t) \subseteq \hat{\P}(z,s,t)$. Let $\br_{st}=b_{st} \cap P_r$ and
orient it from $w$ to $z$. We assign $b_{st}$ the same orientation.

\begin{lemma}
  \label{lem:bisector_in_funnel}
  (i) The bisector $b_{st}$ does not intersect $\geod(s,z)$ or $\geod(t,z)$ in
  any point other than $z$. (ii) The part of the bisector $b_{st}$ that lies in
  $P_r$ is contained in $\F(z,s,t)$.
\end{lemma}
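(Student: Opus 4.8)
The plan is to obtain (i) immediately from Lemma~\ref{lem:bisector_sp_intersect} and then deduce (ii) by a planarity argument. For (i): Lemma~\ref{lem:bisector_sp_intersect} says $b_{st}$ meets $\geod(s,p)$ in at most one point for every $p\in P$; taking $p=z$ and using that $z$ lies on both $b_{st}$ and $\geod(s,z)$ forces $b_{st}\cap\geod(s,z)=\{z\}$, and since $b_{st}=b_{ts}$ the symmetric statement for $\geod(t,z)$ follows by swapping $s$ and $t$. For (ii) the idea is to show that $\br_{st}$ meets $\partial\F(z,s,t)$ only in its two endpoints $w$ and $z$, conclude that $\br_{st}$ minus these endpoints lies either wholly inside or wholly outside $\F(z,s,t)$, and then rule out ``outside'' by inspecting the curve near $w$.

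In more detail: $\partial\F(z,s,t)$ is the union of the sub-path $\geod(s',z)$ of $\geod(s,z)$, the sub-path $\geod(t',z)$ of $\geod(t,z)$, and the segment $\overline{s't'}\subseteq d$. By (i), $b_{st}$ meets the first two arcs only in $z$; by Lemma~\ref{lem:bst_intersect_d} it meets $d$, hence $\overline{s't'}$, only in $w$. I would next check that $w$ lies in the relative interior of $\overline{s't'}$: since $s'$ lies on $\geod(s,z)$ and $\geodlen(s,z)=\geodlen(t,z)$ we get $\geodlen(s,s')=\geodlen(t,z)-\geodlen(s',z)\le\geodlen(t,s')$ by the triangle inequality, and symmetrically $\geodlen(t,t')\le\geodlen(s,t')$; excluding the degenerate case $s'=t'$, both inequalities are strict, so $\geodlen(s,\cdot)-\geodlen(t,\cdot)$ changes sign along $d$ between $s'$ and $t'$ and hence the unique point of $b_{st}\cap d$, namely $w$, lies strictly between them. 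Thus $\br_{st}\cap\partial\F(z,s,t)=\{w,z\}$, so $\br_{st}$ without its endpoints is a connected arc disjoint from $\partial\F(z,s,t)$ and therefore lies in the interior or in the exterior of $\F(z,s,t)$. To decide which, note that $b_{st}$ has a point on $\geod(s,t)\subset P_\ell$ and the endpoint $z\in P_r$, so it crosses $d$ transversally at $w$, which by general position is not a vertex of $P$; hence just past $w$ (towards $z$) the curve $\br_{st}$ lies strictly inside $P_r$, on the side of the line through $d$ on which $P_r$ locally lies. Since $w$ is a relative-interior point of the edge $\overline{s't'}$ of $\F(z,s,t)$, a neighbourhood of $w$ on that side of $d$ is contained in $\F(z,s,t)$, so $\br_{st}$ is inside $\F(z,s,t)$ just past $w$; with the dichotomy this yields $\br_{st}\subseteq\F(z,s,t)$.

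The hard part is (ii), and specifically making the planarity step watertight: it needs $\F(z,s,t)$ to be a simply connected region with a Jordan boundary for the inside/outside dichotomy to apply, and it needs $w$ to be a genuine interior point of the boundary edge on $d$, so the degenerate configurations $s'=t'$ and $w\in\{s',t'\}$ must be excluded via general position or handled separately. The already-observed fact that $z$ is a corner of the pseudo-triangle $\hat{\P}(z,s,t)$ is exactly what rules out a degenerate spike of $\F(z,s,t)$ at $z$ and keeps the topology clean; near $w$, the cleanliness comes from the other two boundary arcs touching $d$ only at $s',t',z\ne w$.
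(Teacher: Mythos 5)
Your proof is correct and takes essentially the same route as the paper: part (i) is identical, via Lemma~\ref{lem:bisector_sp_intersect} applied with $p=z$, and part (ii) rests on the same core fact that $b_{st}\cap P_r$ meets $\partial\F(z,s,t)$ only at $w$ and $z$, so the open arc lies entirely inside or entirely outside the funnel. You decide ``inside'' by a local check near $w$, while the paper closes by observing that the legs $\geod(s,z)$ and $\geod(t,z)$ lie strictly on opposite sides of $b_{st}$ and hence must be separated by it; these are equivalent finishing moves, and your write-up has the virtue of making the Jordan-domain dichotomy and the need for $w$ to lie in the relative interior of $\overline{s't'}$ fully explicit where the paper leaves them implicit.
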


\begin{proof}
  By Lemma~\ref{lem:bisector_sp_intersect} the shortest path from $s$ to any
  point $v \in P$, so in particular to $z$, intersects $b_{st}$ in at most one
  point. Since, by definition, $z$ lies on $b_{st}$, the shortest path
  $\geod(s,z)$ does not intersect $b_{st}$ in any other point. The same applies
  for $\geod(t,z)$, thus proving (i). For (ii) we observe that any internal
  point of $\geod(s,z)$ is closer to $s$ than to $t$, and any internal point of
  $\geod(t,z)$ closer to $t$ than to $s$. Thus, $\geod(s,z)$ and $\geod(t,z)$
  must be separated by $b_{st}$. It follows that $b_{st} \cap P_r$ lies inside
  $\F(z,s,t)$.
\end{proof}

\begin{lemma}
  \label{lem:extension_segments}
  All vertices of $\br_{st}$ lie on extension segments of the vertices
  in the pseudo-triangle $\hat{\P}(z,s,t)$.
\end{lemma}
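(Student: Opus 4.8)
The plan is to combine Lemma~\ref{lem:bisector_prop}, which tells us that every vertex of $b_{st}$ lies on an extension segment in $E_s$ or $E_t$, with Lemma~\ref{lem:bisector_in_funnel}(ii), which confines $\br_{st}$ to the funnel $\F(z,s,t) \subseteq \hat{\P}(z,s,t)$. The point is that although $E_s$ and $E_t$ are sets of extension segments of \emph{all} vertices of the shortest path trees rooted at $s$ and $t$ throughout $P$, only those whose extension segments actually reach into $\F(z,s,t)$ can contribute a vertex of $\br_{st}$; and I claim these are exactly the extension segments of vertices of the pseudo-triangle $\hat{\P}(z,s,t)$.

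First I would fix a vertex $x$ of the shortest path tree $T$ rooted at $s$ (the case of $t$ is symmetric), and suppose its extension segment $\sigma$ intersects $\br_{st}$; in particular $\sigma$ enters $P_r$ and meets $\F(z,s,t)$. I want to show $x$ is a vertex of $\hat{\P}(z,s,t)$, i.e.\ $x$ lies on one of the three bounding shortest paths $\geod(s,z)$, $\geod(z,t)$, or $\geod(t,s)$. The key geometric observation is that $\sigma = \overline{xw}$ is the prolongation of the last edge of $\geod(s,x)$; thus $\geod(s,x) \cup \sigma$ is itself a shortest path from $s$ to the point $w$ on $\partial P$ (it bends only at reflex vertices of $P$ in the correct direction, and the final segment is straight). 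So $\geod(s,w)$ passes through $x$, and it enters $P_r$ and enters the funnel $\F(z,s,t)$. Now I invoke Lemma~\ref{lem:bisector_in_funnel}(i)–(ii): inside $P_r$, the bisector $b_{st}$ separates $\geod(s,z)$ from $\geod(t,z)$, and $\br_{st}$ together with the funnel sides bounds $\F(z,s,t)$. A shortest path from $s$ that enters the funnel must, by the structure of funnels (shortest paths from the apex region fan out monotonically), run along the $s$-side of the funnel boundary until it leaves — but the $s$-side of $\F(z,s,t)$ inside $P_r$ is a subpath of $\geod(s,z)$, which is a side of $\hat{\P}(z,s,t)$. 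Tracing $\geod(s,w)$ backwards from where it enters $P_r$ through $d$, it coincides with $\geod(s,z')$ for the relevant portion, hence with a subpath of the funnel/pseudo-triangle boundary; so $x$, lying on $\geod(s,w)$ at or before the point where it reaches $\partial P_r$, lies on $\geod(s,z)$ or on $\geod(s,t)$ (the part of the boundary in $P_\ell$), i.e.\ on the boundary of $\hat{\P}(z,s,t)$. Therefore $x$ is a vertex of the pseudo-triangle, as claimed.

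The main obstacle I anticipate is the careful case analysis near the diagonal $d$ and near the corners $\hat{s}$, $\hat{t}$, $\hat{z}$: an extension segment could in principle start in $P_\ell$, cross $d$, and only then meet $\F(z,s,t)$, so I must argue that the relevant vertex $x$ still lies on the pseudo-triangle boundary (this is where I use that $\geod(s,x)\cup\sigma$ is a genuine shortest path, and that $\br_{st}$ meets $d$ exactly once by Lemma~\ref{lem:bst_intersect_d}, pinning down how $\geod(s,w)$ crosses into $P_r$). I would also need to handle degenerate positions using the general-position assumption (ii) from Section~\ref{sec:Global_Approach}, which guarantees no extension segment passes through a vertex of the bisector in a non-transversal way. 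Once these boundary cases are dispatched, the statement follows: every vertex of $\br_{st}$ is, by Lemma~\ref{lem:bisector_prop}, the intersection of $\br_{st}$ with some segment of $E_s \cup E_t$, and by the argument above that segment is the extension segment of a vertex of $\hat{\P}(z,s,t)$.
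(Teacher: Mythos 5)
Your overall strategy — show that the starting vertex $x$ of any extension segment meeting $\br_{st}$ must lie on one of the three geodesics bounding $\hat{\P}(z,s,t)$, using that $\geod(s,x)\cup\sigma$ is a shortest path entering the pseudo-triangle — is the right one, and it is essentially the contrapositive of the paper's argument (which assumes $u\notin\hat{\P}(z,s,t)$ and derives a contradiction). However, there is a genuine gap at the crucial step. You assert that ``a shortest path from $s$ that enters the funnel must \dots run along the $s$-side of the funnel boundary,'' appealing vaguely to ``the structure of funnels.'' This is precisely what needs to be proved, and the funnel-fanning intuition does not deliver it: nothing in the combinatorics of shortest-path trees by itself prevents $\geod(s,w')$ from first crossing $\partial\hat{\P}(z,s,t)$ through the chain $\geod(t,z)$ rather than through $\geod(s,z)\cup\geod(s,t)$.

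The paper closes exactly this hole with a separate case. If the entry point $p$ of $\geod(s,v)$ into $\hat{\P}(z,s,t)$ lies on $\geod(t,z)$, then $p$ is at least as close to $t$ as to $s$, so the prefix of $\geod(s,v)$ from $s$ to $p$ already crosses $b_{st}$; since $v$ is itself on $b_{st}$, the path $\geod(s,v)$ would cross $b_{st}$ twice, contradicting Lemma~\ref{lem:bisector_sp_intersect}. You cite Lemma~\ref{lem:bisector_in_funnel}, which is a consequence of Lemma~\ref{lem:bisector_sp_intersect}, but you never invoke the single-crossing property itself, and it is exactly this property that rules out entry through the $t$-side. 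Once entry is forced through $\geod(s,z)\cup\geod(s,t)$, the rest of your argument (uniqueness of shortest paths from $s$ implies $\geod(s,p)$ coincides with a prefix of the bounding geodesic, so $x$ lies on it) is sound and matches the paper's first case, which the paper phrases as a shortcutting argument on the region bounded by two ``distinct'' $s$-to-$p$ paths. To repair your proof, replace the funnel-fanning sentence with the explicit two-case split on where $\geod(s,w')$ first meets $\partial\hat{\P}(z,s,t)$ and handle the $\geod(t,z)$ case via Lemma~\ref{lem:bisector_sp_intersect}.
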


\begin{proof}
  Assume by contradiction that $v \neq w$ is a vertex of
  $\br_{st}=b_{st} \cap P_r$ that is not defined by an extension segment of a
  vertex in $\hat{\P}(z,s,t)$. Instead, let $e \in E_s$ be the extension
  segment containing $v$, and let $u \in P \setminus \hat{\P}(z,s,t)$ be the
  starting vertex of $e$. So $\geod(s,v)$ has $u$ as its last
  internal vertex.

  \begin{figure}[tb]
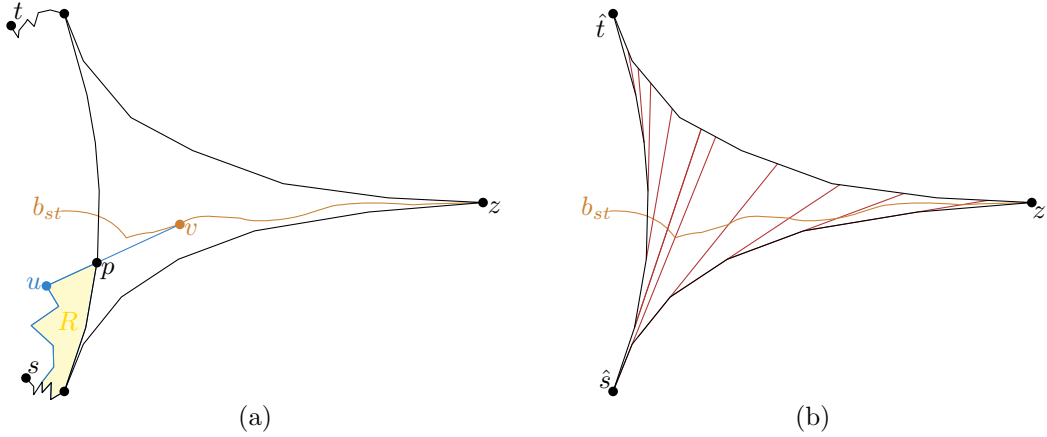

    \centering
    \includegraphics[page=4]{bisector_in_pseudo-triangle}
    \quad
    \includegraphics[page=3]{bisector_in_pseudo-triangle}
    \caption{(a) Point $v$ lies inside $\hat{\P}(z,s,t)$, so a shortest path
      from $s$ to $v$ that uses a vertex $u$ outside of $\hat{\P}(z,s,t)$
      intersects $\partial\hat{\P}(z,s,t)$ in a point $p$. This either
      yields two distinct shortest paths from $s$ to $p$, or requires the
      shortest path from $s$ to $p$ via $u$ to intersect $b_{st}$ twice. Both
      yield a contradiction. (b) The extension segments in $F_s^t$ are all pairwise disjoint,
    and end at the chain from $t$ to $z$.}
    \label{fig:extension_segments}
  \end{figure}

  By Lemma~\ref{lem:bisector_in_funnel}, $\br_{st}$ is contained in $\F(z,s,t)$
  and thus in $\hat{\P}(z,s,t)$. Hence, $v \in \hat{\P}(z,s,t)$. Since
  $v \in \hat{\P}(z,s,t)$, and $u \not\in \hat{\P}(z,s,t)$ the shortest path from
  $s$ to $v$ intersects $\partial \hat{\P}(z,s,t)$ in some point $p$. See
  Fig.~\ref{fig:extension_segments}(a). We then distinguish two cases: either $p$
  lies on $\geod(s,z) \cup \geod(s,t)$, or $p$ lies on $\geod(t,z)$.

  In the former case this means there are two distinct shortest paths between
  $s$ and $p$, that bound a region $R$ that is non-empty, that is, it has
  positive area. Note that this region exists, even if $u$ lies on the shortest
  path from $s$ to its corresponding corner $\hat{s}$ in $\hat{\P}(z,s,t)$ but
  not on $\hat{\P}(z,s,t)$ itself (i.e.
  $u \in \geod(s,t)\cup\geod(s,z) \setminus \hat{\P}(z,s,t)$. Since $P$ is a
  simple polygon, this region $R$ is empty of obstacles, and we can shortcut
  one of the paths to $p$. This contradicts that such a path is a shortest
  path.

  In the latter case the point $p$ lies on $\geod(t,z)$, which means that it is
  at least as close to $t$ as it is to $s$. Since $s$ is clearly closer to $s$
  than to $t$, this means that the shortest path from $s$ to $v$ (that visits
  $u$ and $p$) intersects $b_{st}$ somewhere between $s$ and $p$. Since it again
  intersects $b_{st}$ at $v$, we now have a contradiction: by
  Lemma~\ref{lem:bisector_sp_intersect}, any shortest path from $s$ to $v$
  intersects $b_{st}$ at most once. The lemma follows.
\end{proof}

Let $F_s^t = e_1,..,e_g$ denote the extension segments of the vertices of
$\geod(t,s)$ and $\geod(s,z)$, ordered along $\hat{\P}(z,s,t)$, and clipped to
$\hat{\P}(z,s,t)$. See Fig.~\ref{fig:bisector_in_funnel}(b). We define $F_t^s$
analogously.

\begin{lemma}
  \label{lem:extension_segments_F}
  All vertices of $\br_{st}$ lie on clipped extension segments in
  $F_s^t \cup F_t^s$.
\end{lemma}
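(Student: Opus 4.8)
The plan is to bootstrap from Lemma~\ref{lem:extension_segments}, which already tells us that every vertex $v$ of $\br_{st}$ lies on an extension segment $e\in E_s\cup E_t$, and that $e$ is the extension segment of some vertex $b$ of $P$ lying in the pseudo-triangle $\hat{\P}(z,s,t)$. What remains is to show that $b$ lies on one of the two sides of the funnel incident to the root of $e$: if $e\in E_s$ we must show $b$ lies on $\geod(t,s)$ or $\geod(s,z)$ (so that the clipped copies of $e$ are pieces of $F_s^t$), and if $e\in E_t$ that $b$ lies on $\geod(t,s)$ or $\geod(t,z)$ (so that they are pieces of $F_t^s$). By symmetry it suffices to treat the case $e\in E_s$. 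We will use repeatedly that, because $v$ lies on the extension segment of $b$, appending the segment $\overline{bv}$ to $\geod(s,b)$ yields a shortest $s$--$v$ path; in particular $\geod(s,v)$ passes through $b$ and has $\geod(s,b)$ as a prefix.

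The first step is to pin $b$ down to the boundary of the geodesic triangle $\P(z,s,t)$. Being a vertex of $P$, $b$ lies on $\partial P$, hence not in the interior of $P$, and a fortiori not in the interior of $\P(z,s,t)\subseteq P$; since $b\in\hat{\P}(z,s,t)\subseteq\P(z,s,t)$, this forces $b\in\partial\P(z,s,t)=\geod(s,t)\cup\geod(s,z)\cup\geod(t,z)$. If $b$ lies on $\geod(s,t)$ or $\geod(s,z)$ we are done, so assume $b$ is a point of $\geod(t,z)$ lying on neither of the other two sides. Then $b$ is distinct from $t$ (which lies on $\geod(s,t)$) and from $z$ (which, like every point of $b_{st}$, is not a vertex of $P$ by the general position assumption), and by assumption $b$ is not a vertex of $\geod(s,z)$.

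The heart of the argument is to derive a contradiction from this last situation. Parametrize $\geod(t,z)$ by arc length $\ell\in[0,\geodlen(t,z)]$ measured from $z$; the point $p_\ell$ at parameter $\ell$ satisfies $\geodlen(t,p_\ell)=\geodlen(t,z)-\ell$, while the triangle inequality together with $z\in b_{st}$ gives $\geodlen(s,p_\ell)\ge\geodlen(s,z)-\ell=\geodlen(t,z)-\ell$, with equality only if $p_\ell$ also lies on the (unique) shortest path $\geod(s,z)$. Since $b\notin\geod(s,z)$, this yields $\geodlen(t,b)<\geodlen(s,b)$, i.e.\ $b$ lies strictly on the side of $b_{st}$ closer to $t$; here we use that $b_{st}$, being a Jordan arc joining two points of $\partial P$ (Lemma~\ref{lem:bisector_prop}), separates $P$ into a part closer to $s$ and a part closer to $t$. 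But then $\geod(s,b)$ joins $s$, which lies strictly on the $s$-side, to $b$ on the $t$-side, so it must cross $b_{st}$ in some point $c$; since $c$ lies on the prefix $\geod(s,b)$ of $\geod(s,v)$ and $c\ne b$ (no vertex of $P$ lies on $b_{st}$), while $v$ lies strictly beyond $b$ on $\geod(s,v)$, we get $c\ne v$, so $\geod(s,v)$ meets $b_{st}$ in two distinct points --- contradicting Lemma~\ref{lem:bisector_sp_intersect}. Hence $b$ is a vertex of $\geod(t,s)$ or of $\geod(s,z)$, so $e$ is one of the extension segments whose clipped pieces constitute $F_s^t$; and since $v\in\br_{st}\subseteq\F(z,s,t)\subseteq\hat{\P}(z,s,t)$ by Lemma~\ref{lem:bisector_in_funnel}(ii), the vertex $v$ lies on one of these pieces. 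The case $e\in E_t$ is entirely symmetric and places $v$ on a clipped piece of $F_t^s$.

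I expect the step just carried out to be the main obstacle: recognizing that an extension segment rooted at a vertex of the $t$--$z$ side of the funnel can never reach $\br_{st}$. It requires only that this side lies strictly on $t$'s side of $b_{st}$ --- which follows from $z\in b_{st}$ and the triangle inequality, or alternatively from Lemma~\ref{lem:bisector_in_funnel}(i) --- together with the single-crossing property of Lemma~\ref{lem:bisector_sp_intersect}. The remainder, namely identifying $F_s^t$ and $F_t^s$ with the extension segments of the funnel sides incident to $s$ and to $t$, and disposing of the degenerate cases $b\in\{t,z\}$, is routine bookkeeping.
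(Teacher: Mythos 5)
Your proof is correct, but the mechanism you use for the central step differs from the paper's. Both proofs agree on the reduction: by Lemma~\ref{lem:extension_segments} and Lemma~\ref{lem:bisector_in_funnel}, every vertex $v$ of $\br_{st}$ lies in $\F(z,s,t)$ on an extension segment rooted at a vertex $b$ of $\partial\hat{\P}(z,s,t)$, so the only thing left to rule out is that $e\in E_s$ is rooted at a vertex $b$ of the chain $\geod(t,z)$ (and symmetrically for $E_t$). At this point the paper makes a purely \emph{geometric} claim: for such $b$, the $s$-extension segment is disjoint from $\hat{\P}(z,s,t)$ altogether, because $\geod(s,b)$ (passing through the apex $\hat{s}$) reaches $b$ from inside the pseudo-triangle, so its extension immediately leaves it. You instead give a \emph{metric} argument via the single-crossing property: a vertex $b$ of $\geod(t,z)$ that lies on neither other chain satisfies $\geodlen(t,b)<\geodlen(s,b)$ (triangle inequality through $z\in b_{st}$, with strictness from uniqueness of $\geod(s,z)$), so $\geod(s,b)$ already crosses $b_{st}$ once, and since $\geod(s,v)=\geod(s,b)\cdot\overline{bv}$ ends on $b_{st}$ at $v\neq b$, this would produce two crossings of $b_{st}$ by $\geod(s,v)$, contradicting Lemma~\ref{lem:bisector_sp_intersect}. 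Your route proves a slightly weaker intermediate fact (no bisector vertex on such an extension segment, rather than full disjointness of the segment from the pseudo-triangle), but that is exactly what the lemma needs, and it sidesteps the geometric reasoning about how $\geod(s,b)$ enters $b$, which the paper leaves implicit. The two approaches are of comparable length; yours leans harder on Lemma~\ref{lem:bisector_sp_intersect}, the paper's leans on the funnel/apex structure of the pseudo-triangle.
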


\begin{proof}
  By Lemma~\ref{lem:extension_segments} all vertices of $b_{st}$ in $P_r$ lie
  on $\hat{\P}(z,s,t)$. Furthermore, by Lemma~\ref{lem:bisector_in_funnel} all
  these vertices lie in $\F(z,s,t)$. Hence, it suffices to clip all extension
  segments to $\hat{\P}(z,s,t)$ (or even $\F(z,s,t)$). For all vertices on
  $\geod(t,z)$ the extension segments (with respect to $s$) are disjoint from
  $\hat{\P}(z,s,t)$. It follows that for site $s$, only the clipped extension
  segments from vertices on $\geod(s,t)$ and $\geod(s,z)$ are
  relevant. Analogously, for site $t$, only the clipped extension segments on
  $\geod(s,t)$ and $\geod(t,z)$ are relevant.
\end{proof}

\begin{observation}
  \label{obs:extension_segments_pw_disjoint}
  The extension segments in $F_s^t$ are all pairwise disjoint, start on
  $\geod(s,t)$ or $\geod(s,z)$, and end on $\geod(t,z)$.
\end{observation}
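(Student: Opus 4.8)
The plan is to use two facts about the shortest path map of $s$, both standard and present in Aronov~\cite{aronov1989geodesic}: the \emph{defining property} of an extension segment --- if $e=\overline{vw}$ is the extension segment (with respect to $s$) of a vertex $v$, then every point $x$ on $e$ has $v$ as the last interior vertex of $\geod(s,x)$, so $\geod(s,x)$ is $\geod(s,v)$ followed by the straight segment $\overline{vx}$ --- and the uniqueness of the geodesic $\geod(s,x)$ for any $x\in P$, which holds under general position. With these, the claim that every segment of $F_s^t$ starts on $\geod(s,t)$ or $\geod(s,z)$ is almost immediate: every segment of $F_s^t$ meeting $\hat{\P}(z,s,t)$ --- and by Lemma~\ref{lem:extension_segments} these are the only ones that can carry a vertex of $\br_{st}$ --- is the extension segment of a vertex lying on one of the two sides of the pseudo-triangle bounded by $\geod(s,t)$ and $\geod(s,z)$, and such a segment issues into the interior of $\F(z,s,t)$, so after clipping it still emanates from that vertex.

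For disjointness, if two distinct members $\overline{v_iw_i},\overline{v_jw_j}$ of $F_s^t$ met at a point $x$, then by the defining property $\geod(s,x)$ would equal both $\geod(s,v_i)$ followed by $\overline{v_ix}$ and $\geod(s,v_j)$ followed by $\overline{v_jx}$, i.e.~two geodesics from $s$ to $x$ with distinct last interior vertices $v_i\neq v_j$, which is impossible. For the last part, fix $e=\overline{vw}\in F_s^t$ with $v$ on $\geod(s,t)$ or $\geod(s,z)$, let $u$ be the vertex preceding $v$ on $\geod(s,v)$, and follow $e$ from $v$ into $\hat{\P}(z,s,t)$ until it first leaves at a point $q\in\partial\hat{\P}(z,s,t)$; then $q$ is on $\geod(s,t)$, $\geod(s,z)$, or $\geod(t,z)$. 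Suppose, for contradiction, that $q\neq v$ is on $\geod(s,t)$ (the case of $\geod(s,z)$ is symmetric). As $q$ lies on the ray from $v$ in direction $v-u$, the path ``$\geod(s,v)$ followed by $\overline{vq}$'' passes through $v$ without turning. But $\geod(s,q)$ equals the prefix of $\geod(s,t)$ ending at $q$, and this prefix either does not pass through $v$ (if $q$ precedes $v$ on $\geod(s,t)$, or $v\notin\geod(s,t)$) --- in which case it differs from any path through $v$ --- or it passes through $v$ and turns there, as $v$ is a genuine vertex of $\geod(s,t)$ and hence of $\geod(s,q)$. Either way $\geod(s,q)$ would admit two distinct descriptions, contradicting uniqueness. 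Hence $q\in\geod(t,z)$, and, as argued below, the portion of $e$ inside $\hat{\P}(z,s,t)$ is exactly $\overline{vq}$, whose far endpoint lies on $\geod(t,z)$.

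The last part is where the real care lies. One must treat vertices $v$ on the common prefix $\geod(s,\hat{s})$ of $\geod(s,t)$ and $\geod(s,z)$ separately, since there the prefix of $\geod(s,t)$ ending at $q$ can pass through $v$, so one genuinely needs the turn at $v$. One must also check that $e$ restricted to $\hat{\P}(z,s,t)$ is a single segment: $e$ cannot re-enter $\hat{\P}(z,s,t)$ after leaving it through $\geod(t,z)$, because a re-entry point $x'$ would have $\geod(s,x')$ run (along $e$) through a portion of $e$ lying outside $\hat{\P}(z,s,t)$, whereas $\geod(s,x')$ stays inside $\hat{\P}(z,s,t)$ --- two geodesics from $s$ do not cross, so $\geod(s,x')$ cannot cross $\geod(s,t)$ or $\geod(s,z)$, and a geodesic whose endpoints both lie on $\geod(t,z)$ must coincide with a sub-path of $\geod(t,z)$. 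The remaining ingredients --- that the extension segments are the internal edges of the shortest path map of $s$, that each issues into the funnel, and that the cells (and $\F(z,s,t)$) are simply connected --- are standard and quotable from Aronov~\cite{aronov1989geodesic}, so the proof reduces to repeating the one-line uniqueness argument in each case.
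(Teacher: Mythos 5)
Your two tools — the defining property of an extension segment (for $x$ on the extension segment of $v$, $\geod(s,x)$ is $\geod(s,v)$ followed by $\overline{vx}$, straight at $v$) and uniqueness of geodesics — are exactly the right ones, and your arguments for pairwise disjointness and for ``does not exit through $\geod(s,t)$ or $\geod(s,z)$'' are sound: two distinct extension segments meeting at $x$, or an exit point $q$ on $\geod(s,t)$ or $\geod(s,z)$, each yield two geodesics from $s$ to the same point. The paper gives no proof for this observation, so this is a reasonable way to flesh it out.

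There is, however, a gap in the first step. You assert that every segment of $F_s^t$ is the extension segment of a vertex lying on one of the two sides $\geod(\hat s,\hat t)$ or $\geod(\hat s,z)$ of the pseudo-triangle, but $F_s^t$ is by definition built from \emph{all} vertices of $\geod(t,s)$ and $\geod(s,z)$, which also includes vertices on the common prefix $\geod(s,\hat s)$ and on the tails $\geod(t,\hat t)$, $\geod(z,\hat z)$. Your uniqueness argument only shows an extension segment cannot cross $\geod(s,t)$ or $\geod(s,z)$ at a point other than its own vertex; it does \emph{not} rule out that the extension segment of a vertex $v\in\geod(s,\hat s)\setminus\{\hat s\}$ enters and leaves $\hat{\P}(z,s,t)$ entirely through $\geod(t,z)$. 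If that happened, the clipped segment would start \emph{and} end on $\geod(t,z)$, contradicting the observation. Your ``treat the common prefix separately'' remark addresses a different sub-case (the turn-vs.-no-turn issue when $q$ is on the same geodesic as $v$), not this one. The missing ingredient is a short argument that for any $x\in\hat{\P}(z,s,t)$ the geodesic $\geod(s,x)$ passes through $\hat s$ (since it stays in $\P(z,s,t)$ and cannot cross $\geod(s,t)$ or $\geod(s,z)$), so its last interior vertex is $\hat s$ or a vertex on one of the two sides; hence a spur-vertex $v\ne\hat s$ cannot be the last interior vertex of $\geod(s,x)$, and its extension segment is disjoint from $\hat{\P}(z,s,t)$. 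With that added (and the analogous remark for the tails), your proof is complete.
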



\begin{figure}[tb]
  \centering
  \includegraphics[page=3]{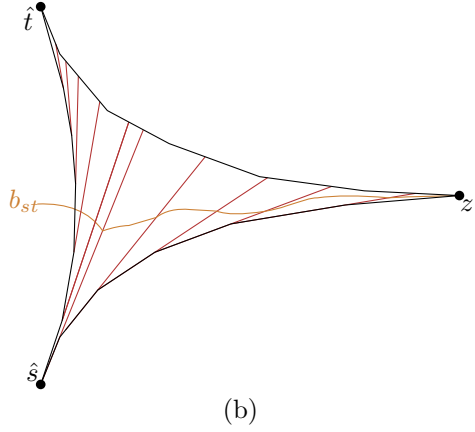}
  \caption{The extension segments in $F_s^t$ are all pairwise disjoint,
    and end at the chain from $t$ to $z$.}
  \label{fig:pseudo-triangle_slopes}
\end{figure}

By Corollary~3.29 of Aronov~\cite{aronov1989geodesic} every (clipped) extension
segment in $r \in F_s^t \cup F_t^s$ intersects $b_{st}$ (and thus $\br_{st}$)
at most once. Therefore, every such extension segment $r$ splits the bisector
in two. Together with Lemma~\ref{lem:extension_segments_F} and
Observation~\ref{obs:extension_segments_pw_disjoint} this now give us
sufficient information to efficiently binary search among the vertices of
$\br_{st}$ when we have (efficient) access to $\hat{\P}(z,s,t)$.

\begin{lemma}
  \label{lem:intersect_bst}
  Consider extension segments $e_i$ and $e_j$, with $i \leq j$, in
  $F_s^t$. If $e_i$ intersects $b_{st}$ then so does $e_j$.
\end{lemma}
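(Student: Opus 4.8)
The plan is to recast the claim as a monotonicity statement about where, along the boundary of the pseudo-triangle $\hat{\P}(z,s,t)$, the starting endpoint of $e_i$ lies relative to the bisector $b_{st}$. The three boundary chains of $\hat{\P}(z,s,t)$ lie on $\geod(s,t)$, on $\geod(s,z)$, and on $\geod(t,z)$, joining its corners $\hat s$, $\hat t$, $z$ in pairs; call them the $\geod(s,t)$-, $\geod(s,z)$-, and $\geod(t,z)$-chain. First I would pin down $b_{st}$ inside $\hat{\P}(z,s,t)$. By Lemma~\ref{lem:bisector_sp_intersect}, $b_{st}$ meets $\geod(s,t)$ in exactly one point $m_{st}$, and by Lemma~\ref{lem:bisector_in_funnel}(i) it avoids the shared subpaths $\geod(s,\hat s)\subseteq\geod(s,z)$ and $\geod(\hat t,t)\subseteq\geod(t,z)$; hence $m_{st}$ lies in the relative interior of the $\geod(s,t)$-chain (and, by Lemma~\ref{lem:bisector_prop}, is not a vertex of $P$). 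Applying Lemmas~\ref{lem:bisector_sp_intersect} and~\ref{lem:bisector_in_funnel}(i) once more, $b_{st}$ meets $\partial\hat{\P}(z,s,t)$ in no point other than $m_{st}$ and the corner $z$, so $\alpha := b_{st}\cap\hat{\P}(z,s,t)$ is a single arc from $m_{st}$ to $z$; it splits $\hat{\P}(z,s,t)$ into a region $\hat{\P}_s$ bounded by $\alpha$, the $\geod(s,z)$-chain, and the part of the $\geod(s,t)$-chain between $m_{st}$ and $\hat s$, and a region $\hat{\P}_t$ bounded by $\alpha$, the $\geod(t,z)$-chain, and the part of the $\geod(s,t)$-chain between $m_{st}$ and $\hat t$. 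That the $\geod(s,z)$-chain and the $\geod(t,z)$-chain lie on opposite sides of $\alpha$ is exactly the separation established in the proof of Lemma~\ref{lem:bisector_in_funnel}(ii): interior points of $\geod(s,z)$ are strictly closer to $s$, those of $\geod(t,z)$ strictly closer to $t$.

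Next I would translate ``$e_i$ intersects $b_{st}$'' into a condition on the endpoints of $e_i$. By Observation~\ref{obs:extension_segments_pw_disjoint}, the clipped segment $e_i\subseteq\hat{\P}(z,s,t)$ is a chord with one endpoint $a_i$ on the $\geod(s,t)$- or $\geod(s,z)$-chain and the other endpoint $b_i$ on the $\geod(t,z)$-chain. The $\geod(t,z)$-chain is part of $\partial\hat{\P}_t$ and meets $\alpha$ only at $z$, so for $b_i\ne z$ (the generic case) we have $b_i\in\overline{\hat{\P}_t}\setminus\alpha$. Since $e_i$ meets $b_{st}$ in at most one point (Corollary~3.29 of~\cite{aronov1989geodesic}) and $e_i\cap b_{st}=e_i\cap\alpha$, the segment $e_i$ meets $b_{st}$ if and only if $a_i$ and $b_i$ lie on opposite sides of $\alpha$, equivalently if and only if $a_i\in\hat{\P}_s$. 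It therefore suffices to prove that $\{\,i : a_i\in\hat{\P}_s\,\}$ is a final segment of $\{1,\dots,g\}$.

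This last step I would read off the ordering of $F_s^t$ along $\hat{\P}(z,s,t)$: the points $a_1,\dots,a_g$ appear in this order along the connected chain obtained by traversing the $\geod(s,t)$-chain from the corner $\hat t$ to the corner $\hat s$ and then the $\geod(s,z)$-chain from $\hat s$ to $z$ (for the reverse orientation the inequality in the statement merely flips). Along this traversal a point lies in $\hat{\P}_t$ until it passes $m_{st}$ --- which, by general position, is distinct from every $a_i$ --- and lies in $\hat{\P}_s$ from there on: first for the remaining part of the $\geod(s,t)$-chain and then for all of the $\geod(s,z)$-chain. Hence $a_i\in\hat{\P}_s$ implies $a_j\in\hat{\P}_s$ for every $j\ge i$, which together with the previous paragraph gives the lemma. \textbf{The main obstacle} I anticipate is the first step: verifying carefully that $b_{st}\cap\hat{\P}(z,s,t)$ is a single arc realizing exactly this separation, and disposing of the degenerate positions $a_i=m_{st}$ and $b_i=z$; once that topological picture is fixed, the monotonicity follows immediately from Observation~\ref{obs:extension_segments_pw_disjoint} and the ``meets $b_{st}$ at most once'' property.
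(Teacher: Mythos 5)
Your proof is correct and follows essentially the same route as the paper's: both establish that $b_{st}$ crosses $\partial\hat{\P}(z,s,t)$ only at a point on $\geod(s,t)$ and at $z$, so it splits $\hat{\P}(z,s,t)$ into an $s$-side and a $t$-side with $\geod(t,z)$ entirely on the $t$-side; then both reduce the claim to the monotonicity of the starting points of $e_1,\dots,e_g$ along the chain from $\hat t$ to $\hat s$ to $z$, noting that once this traversal passes the single crossing $m_{st}$ the starting points stay on the $s$-side. Your write-up is more verbose and spends extra care on topological details (that $b_{st}\cap\hat{\P}$ is one arc, the degenerate positions $a_i=m_{st}$, $b_i=z$) that the paper handles implicitly, but the substance and the key lemmas invoked are the same.
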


\begin{proof}
  It follows from Lemma~\ref{lem:bisector_in_funnel} that $b_{st}$ intersects
  $\partial \hat{\P}(z,s,t)$ only in $z$ and in a point $w'$ on $\geod(s,t)$. Thus,
  $b_{st}$ partitions $\hat{\P}(z,s,t)$ into an $s$-side, containing $\geod(s,z)$,
  and a $t$-side, containing $\geod(t,z)$. Since the extension segments also
  partition $\hat{\P}(z,s,t)$ it then follows that the extension segments in
  $F_s^t$ intersect $b_{st}$ if and only if their starting point lies in
  the $s$-side and their ending point lies in the $t$-side. By
  Observation~\ref{obs:extension_segments_pw_disjoint} all segments in
  $F_s^t$ end on $\geod(t,z)$. Hence, they end on the $t$-side. We
  finish the proof by showing that if $e_i \in F_s^t$ starts on the
  $s$-side, so must $e_j \in F_s^t$, with $j \geq i$.

  The extension segments of vertices in $\geod(s,z)$ trivially have their start
  point on the $s$-side. It thus follows that they all intersect $b_{st}$. For
  the extension segments of vertices in $\geod(s,t)$ the ordering is such that
  the distance to $s$ is monotonically decreasing. Hence, if $e_i$ intersects
  $b_{st}$, and thus starts on the $s$-side, so does $e_j$, with $j \geq i$.
\end{proof}

\begin{lemma}
  \label{lem:fan_intersects}
  Consider extension segments $e_i$ and $e_j$, with $i \leq j$, in $F_s^t$. If
  $e_i$ intersects $\br_{st}$ then so does $e_j$.
\end{lemma}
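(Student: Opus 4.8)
\section*{Proof plan for Lemma~\ref{lem:fan_intersects}}

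The plan is to derive Lemma~\ref{lem:fan_intersects} from Lemma~\ref{lem:intersect_bst} by tracking \emph{where} on $b_{st}$ the extension segments of $F_s^t$ cross it. Recall from Lemma~\ref{lem:bst_intersect_d} and the discussion after it that, inside the pseudo-triangle $\hat{\P}(z,s,t)$, the bisector $b_{st}$ runs from a point $w'$ on $\geod(s,t)$, crosses the diagonal $d$ exactly once at $w$, and leaves at $z$; hence $\br_{st}=b_{st}\cap P_r$ is exactly the terminal sub-arc of $b_{st}$ from $w$ to $z$ (in the orientation we fixed). So it suffices to prove the following ordering statement: if the extension segments of $F_s^t$ that meet $b_{st}$ are $e_k,\dots,e_g$ (a suffix, by Lemma~\ref{lem:intersect_bst}), then their intersection points with $b_{st}$ --- unique by Corollary~3.29 of Aronov~\cite{aronov1989geodesic} --- occur along $b_{st}$ in index order, i.e.\ from $w'$ towards $z$. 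Granting this: if $e_i$ meets $\br_{st}$, its crossing point lies on the arc from $w$ to $z$, and then for any $j\ge i$ Lemma~\ref{lem:intersect_bst} already gives that $e_j$ meets $b_{st}$, while the ordering puts its crossing point even closer to $z$, hence still on $\br_{st}$; if $e_i$ does not meet $b_{st}$ at all there is nothing to prove.

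I would establish the ordering by a planarity argument in the topological disk $\hat{\P}(z,s,t)$. By Lemma~\ref{lem:bisector_in_funnel}, $b_{st}$ meets $\partial\hat{\P}(z,s,t)$ only at $w'$ and $z$, so it is a simple chord splitting the disk into an $s$-part, whose boundary arc $A_s$ consists of the portion of $\geod(s,t)$ between $s$ and $w'$ followed by $\geod(s,z)$, and a $t$-part. From the proof of Lemma~\ref{lem:intersect_bst}, the segments $e_k,\dots,e_g$ are exactly those of $F_s^t$ starting on $A_s$, and --- by the ``distance to $s$ decreasing'' ordering used there --- their starting points appear along $A_s$ in index order, running from the $w'$-end through $s$ towards $z$; by Observation~\ref{obs:extension_segments_pw_disjoint} their other endpoints lie on $\geod(t,z)$ (the $t$-side), and the segments are pairwise disjoint. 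Now fix $i<j$ in $\{k,\dots,g\}$. Since $e_i$ and $e_j$ are disjoint chords joining $A_s$ to $\geod(t,z)$ with the start of $e_j$ on the $z$-side of the start of $e_i$ along $A_s$, non-interleaving of disjoint chords forces $e_j$ to lie entirely in the component of $\hat{\P}(z,s,t)\setminus e_i$ whose closure contains $z$. As $b_{st}$ crosses $e_i$ only once, the part of $b_{st}$ from that crossing to $z$ lies in the closure of that component, so the unique crossing of $b_{st}$ with $e_j$ lies on this part, i.e.\ strictly after the crossing with $e_i$. This is the claimed order.

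I expect the only real work to be in this last, purely topological, step: confirming that pairwise disjointness of the clipped extension segments, together with the monotone order of their endpoints on $\partial\hat{\P}(z,s,t)$, forces the nested configuration in which $e_j$ (for $j\ge i$) is confined to the $z$-side of $e_i$. Two small points need care: first, that the index order of $F_s^t$ really runs from the $w'$-end of $A_s$ towards $z$ (this is precisely the distance-to-$s$-decreasing order from Lemma~\ref{lem:intersect_bst} together with the placement of $\geod(s,z)$ after $\geod(s,t)$ in the ordering of $F_s^t$); and second, that all far endpoints lie on the single arc $\geod(t,z)$, which is disjoint from $b_{st}$, so the disjoint chords cannot interleave there either --- which is what lets the Jordan-curve argument go through. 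Everything else --- the reduction to Lemma~\ref{lem:intersect_bst} and the identification of $\br_{st}$ with the sub-arc of $b_{st}$ from $w$ to $z$ --- is immediate.
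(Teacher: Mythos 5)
Your proposal is correct and follows essentially the same route as the paper's proof: reduce to Lemma~\ref{lem:intersect_bst} to get that $e_j$ meets $b_{st}$, then use pairwise disjointness of the segments in $F_s^t$ to conclude that $e_j$'s crossing lies on the sub-arc of $b_{st}$ between $e_i$'s crossing point $p$ and $z$, and finally invoke Lemma~\ref{lem:bst_intersect_d} to see that this sub-arc stays in $P_r$. The only difference is that you spell out the Jordan-curve/non-interleaving argument that the paper compresses into one sentence (``Since the extension segments in $F_s^t$ are pairwise disjoint, it follows that\ldots''), which is a faithful unpacking rather than a different approach.
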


\begin{proof}
  From Lemma~\ref{lem:intersect_bst} it follows that if $e_i$ intersects
  $b_{st}$ then so does $e_j$, with $j \geq i$. So, we only have to show that
  if $e_i$ intersects $b_{st}$ in $P_r$ then so does $e_j$. Since the
  extension segments in $F_s^t$ are pairwise disjoint, it follows that
  if $e_i$ intersects $b_{st}$, say in point $p$ then $e_j$, with $j \geq i$
  must intersect $b_{st}$ on the subcurve between $p$ and $z$. Since $b_{st}$
  intersects $d$ at most once (Lemma~\ref{lem:bst_intersect_d}), and
  $p \in P_r$, it follows that this part of the curve, and thus its
  intersection with $e_j$, also lies in $P_r$.
\end{proof}

\begin{corollary}
  \label{cor:suffix}
  The segments in $F_s^t$ that define a vertex in $\br_{st}$ form a suffix
  $G_s^t$ of $F_s^t$. That is, there is an index $a$ such that
  $G_s^t=e_a,..,e_g$ is exactly the set of extension segments in $F_s^t$ that
  define a vertex of $\br_{st}$.
\end{corollary}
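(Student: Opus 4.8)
The plan is to derive the corollary directly from the monotonicity established in Lemma~\ref{lem:fan_intersects}. First I would recall that, by Corollary~3.29 of Aronov~\cite{aronov1989geodesic}, every clipped extension segment $e \in F_s^t$ crosses $\br_{st}$ at most once; hence $e$ \emph{defines a vertex of} $\br_{st}$ if and only if $e$ \emph{intersects} $\br_{st}$. So the statement reduces to showing that $\{\, e_i \in F_s^t : e_i \cap \br_{st} \neq \emptyset \,\}$ is a suffix of the ordered sequence $e_1, \dots, e_g$.

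Next I would invoke Lemma~\ref{lem:fan_intersects}: for $i \le j$, if $e_i$ intersects $\br_{st}$ then so does $e_j$. In other words, the predicate ``$e_i$ intersects $\br_{st}$'' is monotone in $i$ (once true, it stays true). I would then let $a$ be the smallest index in $\{1, \dots, g\}$ for which $e_a$ intersects $\br_{st}$, and set $a = g+1$ (so that $G_s^t = \emptyset$) in case no extension segment of $F_s^t$ meets $\br_{st}$. By the choice of $a$, none of $e_1, \dots, e_{a-1}$ intersects $\br_{st}$, while by Lemma~\ref{lem:fan_intersects} all of $e_a, \dots, e_g$ do. Therefore $G_s^t = e_a, \dots, e_g$ is exactly the set of extension segments in $F_s^t$ that define a vertex of $\br_{st}$, and it is a suffix of $F_s^t$, as claimed.

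There is essentially no hard step here: the corollary is a bookkeeping consequence of Lemma~\ref{lem:fan_intersects} together with Aronov's ``at most once'' property. The only points that warrant a line of care are (i) confirming the equivalence between ``$e$ intersects $\br_{st}$'' and ``$e$ defines a vertex of $\br_{st}$'', so that the claim about vertices really is a claim about intersections, and (ii) handling the degenerate case in which $\br_{st}$ uses no extension segment of $F_s^t$ at all, where the suffix is empty; both are immediate.
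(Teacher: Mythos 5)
Your proof is correct and takes essentially the same route as the paper, which states the corollary without proof as an immediate consequence of the monotonicity in Lemma~\ref{lem:fan_intersects}; your ``take the smallest index $a$ with $e_a \cap \br_{st} \neq \emptyset$'' argument is exactly what that entails. One small imprecision in attribution: the equivalence between ``$e$ intersects $\br_{st}$'' and ``$e$ defines a vertex of $\br_{st}$'' comes from Lemma~\ref{lem:bisector_prop} (the vertices of $b_{st}$ are \emph{exactly} its intersections with extension segments), whereas Aronov's Corollary~3.29 only gives that such an intersection is unique, which is needed for the later counting in Lemma~\ref{lem:random_access_bst} rather than for the suffix claim itself.
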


When we have $\hat{\P}(z,s,t)$ and the point $w$, we can find the value $a$
from Corollary~\ref{cor:suffix} in $O(\log m)$ time as follows. We binary
search along $\geod(t,s)$ to find the first vertex $u_{a'}$ such that $u_{a'}$
is closer to $s$ then to $t$. For all vertices after $u_{a'}$, its extension
segment intersects $b_{st}$ in $\hat{\P}(z,s,t)$. To find the first segment
that intersects $b_{st}$ in $P_r$, we find the first index $a \geq a'$
for which the extension segment intersects $d$ below $w$. In total this takes
$O(\log m)$ time.

\begin{wrapfigure}[8]{r}{0.35\textwidth}
  \centering
  \vspace{-1.8\baselineskip}
  \includegraphics{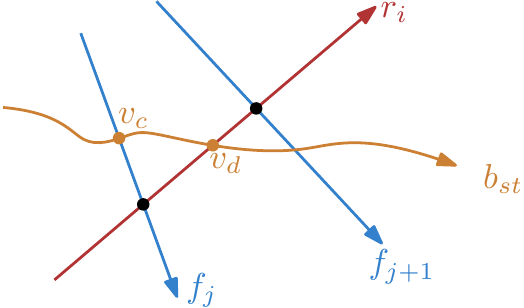}
  \caption{The bisector vertex $v_c$ on $f_j$ occurs before $v_d$ on $r_i$.}
  \label{fig:intersection_order}
\end{wrapfigure}

Let $G_s^t = r_1,..,r_{g'} = e_a,..,e_g$ be the ordered set of extension
segments that intersect $\br_{st}$. Similarly, let $G_t^s=f_1,..,f_{h'}$ be the
suffix of extension segments from $F_t^s$ that define a vertex of $\br_{st}$.

\begin{observation}
  \label{obs:intersection_order}
  Let $r_i$ be an extension segment in $G_s^t$, and let $v_d$ be the vertex of
  $b_{st}$ on $r_i$. Let $f_j$ be the last extension segment in $G_t^s$ such
  that $f_j$ intersects $r_i$ in a point closer to $s$ than to $t$. See
  Fig.~\ref{fig:intersection_order}. The vertex $v_c$ of $b_{st}$ corresponding
  to $f_j$ occurs before $v_d$, that is $c < d$.
\end{observation}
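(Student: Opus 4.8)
I would prove this by a short distance computation rather than purely topologically. Let $u_d$ be the source vertex of the extension segment $r_i$ (so $r_i$ extends the last edge of $\geod(s,u_d)$ beyond $u_d$), let $u_c$ be the source vertex of $f_j$, and set $x = r_i\cap f_j$, so by hypothesis $\geodlen(s,x)<\geodlen(t,x)$, i.e.\ $x$ lies on the $s$-side of $b_{st}$.

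First I would fix the order of the relevant points along $r_i$ and $f_j$. Since $r_i\in G_s^t$, Lemma~\ref{lem:intersect_bst} (with the analogue of Observation~\ref{obs:extension_segments_pw_disjoint}) says $r_i$ runs from its source $u_d$, which lies on the $s$-side of $b_{st}$, to an endpoint on the $t$-side; hence $v_d$ splits $r_i$ into an $s$-side piece $\overline{u_d v_d}$ and a $t$-side piece, and $x$, being on the $s$-side, is an interior point of $\overline{u_d v_d}$. As $r_i$ is an extension segment, $\geod(s,u_d)\cdot\overline{u_d v_d}$ is a geodesic, so $\geodlen(s,v_d)=\geodlen(s,x)+|x v_d|$. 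Symmetrically $f_j\in G_t^s$ runs from its source $u_c$ on the $t$-side of $b_{st}$ to an endpoint on the $s$-side, so reading $f_j$ from $u_c$ we meet $v_c$ before $x$, $\overline{u_c x}$ extends a geodesic of $t$, and $\geodlen(t,x)=\geodlen(t,v_c)+|v_c x|$.

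Now, using $\geodlen(s,v_c)=\geodlen(t,v_c)$ and $\geodlen(s,v_d)=\geodlen(t,v_d)$ (both $v_c,v_d\in b_{st}$), the two identities give
\[
  \geodlen(s,v_d)-\geodlen(s,v_c)=|x v_d|+|v_c x|-\bigl(\geodlen(t,x)-\geodlen(s,x)\bigr).
\]
Put $\delta=\geodlen(t,x)-\geodlen(s,x)>0$. Since $\overline{v_d x}\subset r_i\subset P$ and $\overline{v_c x}\subset f_j\subset P$ are straight segments (hence shortest paths), the triangle inequality yields
\[
  \geodlen(t,x)\le\geodlen(t,v_d)+|v_d x|=\geodlen(s,x)+2|x v_d|,\qquad
  \geodlen(t,x)\le\geodlen(t,v_c)+|v_c x|\le\geodlen(s,x)+2|v_c x|,
\]
so $\delta\le 2\min(|x v_d|,|v_c x|)\le|x v_d|+|v_c x|$, whence $\geodlen(s,v_d)\ge\geodlen(s,v_c)$. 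As $v_c\ne v_d$ and $\geodlen(s,\cdot)$ is injective on $\br_{st}$ (next paragraph), in fact $\geodlen(s,v_c)<\geodlen(s,v_d)$.

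Finally I would turn this into the ordering statement. The function $\geodlen(s,\cdot)$ is strictly increasing along $\br_{st}$ in the direction $w\to z$: for every $p\in b_{st}$ one has $2\geodlen(s,p)=\geodlen(s,p)+\geodlen(t,p)\ge\geodlen(s,t)$, with equality exactly on $b_{st}\cap\geod(s,t)$, which by Lemma~\ref{lem:bisector_sp_intersect} is the single point $w'=b_{st}\cap\geod(s,t)\in P_\ell$; thus $\geodlen(s,\cdot)$ attains its unique minimum on $b_{st}$ at $w'$, and since $\br_{st}$ lies entirely on the $z$-side of $w'$ it is monotone there. Hence $\geodlen(s,v_c)<\geodlen(s,v_d)$ gives $c<d$. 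The step I expect to be the main obstacle is this last one: pinning down that $\geodlen(s,\cdot)$ has no interior local extremum on $\br_{st}$ (equivalently, that each geodesic ellipse $\{p:\geodlen(s,p)+\geodlen(t,p)=\text{const}\}$ meets $\br_{st}$ at most once)—the remark above locates the minimum at $w'$, but monotonicity on the $z$-side still needs a short separate argument. Steps 1–3, by contrast, are immediate from Lemma~\ref{lem:intersect_bst}, Lemma~\ref{lem:bisector_sp_intersect}, and the fact that the clipped extension segments are straight chords of $\hat{\P}(z,s,t)\subset P$.
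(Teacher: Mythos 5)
Your distance computation in the first half is correct: with $x=r_i\cap f_j$ on the $s$-side of $b_{st}$, the order $u_d,x,v_d$ along $r_i$ and $u_c,v_c,x$ along $f_j$ give the two identities, and the triangle-inequality bounds on $\delta=\geodlen(t,x)-\geodlen(s,x)$ do yield $\geodlen(s,v_c)\le\geodlen(s,v_d)$. But the gap you flag at the end is the whole content of the observation, not a loose end. The index $d$ is, by definition, the \emph{position} of the vertex along $\br_{st}$ oriented from $w$ to $z$, not its distance to $s$; to convert $\geodlen(s,v_c)\le\geodlen(s,v_d)$ into $c<d$ you need $\geodlen(s,\cdot)$ to be strictly monotone along $\br_{st}$, and nothing in the paper establishes this. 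Knowing that $\geodlen(s,\cdot)$ has its unique global minimum on $b_{st}$ at $w'=b_{st}\cap\geod(s,t)\in P_\ell$ does not rule out interior local extrema on the $z$-side: inside one cell of the shortest path map the bisector is a hyperbolic arc, and the distance to the anchor vertex along a hyperbola branch is a $\cosh$-type function with an interior minimum, so one has to separately argue that the hyperbola vertex never lies on the relevant sub-arc. The strictness you need for $v_c\neq v_d\Rightarrow\geodlen(s,v_c)\neq\geodlen(s,v_d)$ is a second, smaller hole riding on the same missing lemma.

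The paper's own argument is purely topological and avoids all of this. By the choice of $j$, the intersection $v_d=\br_{st}\cap r_i$ lies on $r_i$ between $r_i\cap f_j$ and $r_i\cap f_{j+1}$ (or beyond $r_i\cap f_j$ toward the $\geod(t,z)$-end of $r_i$ when $j$ is the last index), so $v_d$ sits on the $z$-side of the chord $f_j$ inside $\hat{\P}(z,s,t)$. The clipped segments in $G_t^s$ are pairwise disjoint (the $F_t^s$-analogue of Observation~\ref{obs:extension_segments_pw_disjoint}) and each is crossed by $\br_{st}$ exactly once, so $\br_{st}$, oriented from $w$ to $z$, crosses $f_1,\dots,f_{h'}$ in that order---a fact the paper already relies on in the proof of Lemma~\ref{lem:random_access_bst}. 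Hence $\br_{st}$ must cross $f_j$, at $v_c$, before it can reach any point on the $z$-side of $f_j$, and in particular before reaching $v_d$; this gives $c<d$ with no metric claim at all. If you want to keep your metric route you would first have to prove a monotonicity lemma for $\geodlen(s,\cdot)$ along the bisector; the topological route is shorter and stays inside what the paper has already established.
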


\begin{proof}
  By definition of $j$ it follows that the intersection point $v_d$ of $r_i$
  and $\br_{st}$ lies between the intersection of $r_i$ with $f_j$ and
  $f_{j+1}$. See Fig.~\ref{fig:intersection_order}. Thus, the intersection
  point
\end{proof}

\begin{lemma}
  \label{lem:random_access_bst}
  Let $j$ be the number of extension segments in $G_t^s$ that intersect
  $r_i$ in a point closer to $s$ than to $t$. Then $r_i$ contains vertex
  $v_d = v_{i+j}$ of $\br_{st}$.
\end{lemma}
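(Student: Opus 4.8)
I would count how many vertices of $\br_{st}$ precede $v_d$ along the bisector oriented from $w$ to $z$, and show this count is exactly $i+j-1$, so that $v_d=v_{i+j}$. The whole argument rests on the fact that $\br_{st}$ crosses the two fans $G_s^t$ and $G_t^s$ in the orders in which those fans are listed, and then on Observation~\ref{obs:intersection_order} to pin down where $v_d$ sits relative to the $G_t^s$-vertices.

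\textbf{Step 1: the two fans are crossed in order.} First I would record that, by Corollary~\ref{cor:suffix} and its analogue for $F_t^s$ together with Lemma~\ref{lem:extension_segments_F}, every vertex of $\br_{st}$ lies on exactly one segment of $G_s^t\cup G_t^s$ and, conversely, every such segment carries exactly one vertex (Corollary~3.29 of~\cite{aronov1989geodesic}); hence $\br_{st}$ has $g'+h'$ vertices. I would then argue that the vertices carried by $G_s^t$ occur along $\br_{st}$ in the order $r_1,r_2,\dots,r_{g'}$, and those carried by $G_t^s$ in the order $f_1,f_2,\dots,f_{h'}$. Since the segments of $F_s^t$ are pairwise disjoint and consistently ordered along $\hat{\P}(z,s,t)$ (Observation~\ref{obs:extension_segments_pw_disjoint}), they cut $\hat{\P}(z,s,t)$ into a linear sequence of regions, with $r_m$ and $r_{m+1}$ bounding one such region. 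Because $\br_{st}$ lies in $\F(z,s,t)\subseteq\hat{\P}(z,s,t)$ (Lemma~\ref{lem:bisector_in_funnel}) and is a simple curve meeting each $r_m$ exactly once, it must traverse these regions in their linear order, so it meets $r_1$ before $r_2$, and so on; the identical argument for $F_t^s$ gives the claim for $G_t^s$.

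\textbf{Step 2: locating $v_d$ among the $G_t^s$-vertices.} By Step 1 the $G_t^s$-vertices preceding $v_d$ form a prefix, coming from $f_1,\dots,f_\ell$ for some $\ell$, and I need $\ell=j$. Observation~\ref{obs:intersection_order} directly gives $\ell\ge j$: if $f_j$ is the last segment of $G_t^s$ meeting $r_i$ at a point closer to $s$ than to $t$, its vertex $v_c$ has $c<d$, and by Step 1 the vertices of $f_1,\dots,f_{j-1}$ precede it. For $\ell\le j$ I would use the symmetric version of the observation: if $f_{j+1}$ exists it meets $r_i$ at a point closer to $t$, i.e.\ on the $t$-side of $\br_{st}$, and re-running the region/crossing-order argument of Step 1 forces the vertex of $f_{j+1}$ to lie after $v_d$; hence no $f_k$ with $k>j$ contributes a vertex before $v_d$. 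As a by-product this shows the segments of $G_t^s$ meeting $r_i$ closer to $s$ are exactly $f_1,\dots,f_j$, so $j$ is computable by a binary search along $G_t^s$. Finally, among $v_1,\dots,v_d$ exactly the $i$ vertices on $r_1,\dots,r_i$ lie on $G_s^t$ (Step 1) and exactly the $j$ vertices on $f_1,\dots,f_j$ lie on $G_t^s$ (just shown); since $v_d$ is the last of these, $d=i+j$.

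\textbf{The hard part.} The main obstacle is the reverse inequality $\ell\le j$ in Step 2 — equivalently, the full strength of Step 1 — namely making rigorous that $\br_{st}$ crosses the fan $F_t^s$ in precisely the listed order and that the crossing point $v_d$ of $r_i$ falls in the ``strip'' delimited by $f_j$ and $f_{j+1}$. This is a planarity argument about simple arcs inside the pseudo-triangle; the delicate case is the segments of $G_t^s$ that $r_i$ does not cross at all, for which one must still argue that their bisector vertices all lie beyond $v_d$. Everything else is bookkeeping layered on Observation~\ref{obs:intersection_order} and the pairwise disjointness of the fans.
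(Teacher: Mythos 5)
Your proposal follows essentially the same route as the paper's own proof. Both count the vertices of $\br_{st}$ preceding $v_d$ in two pieces — the $i-1$ carried by $r_1,\dots,r_{i-1}$ and the $j$ carried by a prefix of $G_t^s$ — using (a) that each segment of $G_s^t\cup G_t^s$ carries exactly one vertex of $\br_{st}$ (Corollary~\ref{cor:suffix} plus Corollary~3.29 of Aronov), (b) that $\br_{st}$ crosses each fan in its listed order, and (c) Observation~\ref{obs:intersection_order} to locate $v_d$ relative to the $G_t^s$-vertices. Where the paper tersely writes ``we then again use that $\br_{st}$ intersects $f_1,\dots,f_{h'}$ in order, and thus $\ell=j$,'' you make the two directions $\ell\ge j$ and $\ell\le j$ explicit and correctly flag the delicate residual point (the $f_k$ that do not meet $r_i$ at all, and the implicit identification of ``last index'' with ``count''); the paper's proof leaves this at the same level of rigor as Observation~\ref{obs:intersection_order} itself, whose proof in the paper is in fact truncated. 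So this is the same argument, filled in with a bit more care rather than a genuinely different decomposition.
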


\begin{proof}
  It follows from Corollary~\ref{cor:suffix} and the definition of $G_s^t$ and
  $G_t^s$ that all vertices of $\br_{st}$ lie on extension segments in
  $\{r_1,..,r_{g'}\} \cup \{f_1,..,f_{h'}\}$. Together with Corollary 3.29 of
  Aronov~\cite{aronov1989geodesic} we get that every such extension segment defines exactly
  one vertex of $\br_{st}$. Since the bisector intersects the segments
  $r_1,..,r_{g'}$ in order, there are exactly $i-1$ vertices of $\br_{st}$
  before $v_d$, defined by the extension segments in $G_s^t$. Let $f_\ell$ be
  the last extension segment in $G_t^s$ that intersects $r_i$ in a point closer
  to $s$ than to $t$. Observation~\ref{obs:intersection_order} gives us that
  this extension segment defines a vertex $v_c$ of $b_{st}$ with $c < d$. We
  then again use that $\br_{st}$ intersects the segments $f_1,..,f_{h'}$ in
  order, and thus $\ell = j$. Hence, $v_d$ is the $(i+j)^\text{th}$ vertex of
  $\br_{st}$.
\end{proof}

It follows from Lemma~\ref{lem:random_access_bst} that if we have
$\hat{\P}(z,s,t)$ and we have efficient random access to its vertices, we also
have efficient access to the vertices of the bisector $\br_{st}$. Next, we
argue with some minor augmentations the preprocessing of $P$ into a two-point
query data structure by Guibas and Hershberger gives us such access.




\paragraph{Accessing $\hat{\P}(z,s,t)$.} The data structure of Guibas and
Hershberger can return the shortest path between two query points $p$ and $q$,
represented as a balanced
tree~\cite{guibas1989query,hershberger_new_1991}. This tree is essentially a
persistent balanced search tree on the edges of the path. Every node of the
tree can access an edge $e$ of the path in constant time, and the edges are
stored in order along the path. The tree is balanced, and supports
concatenating two paths efficiently. To support random access to the vertices
of $\hat{\P}(z,s,t)$ we need two more operations: we need to be able to access
the $i^\text{th}$ edge or vertex in a path, and we need to be able to find the
longest prefix (or suffix) of a shortest path that forms a convex chain. This
last operation will allow us to find the corners $\hat{s}$ and $\hat{t}$ of
$\hat{\P}(z,s,t)$. The data structure as represented by Guibas and Hershberger
does not support these operations directly. However, with two simple
augmentations we can support them in $O(\log m)$ time. In the following, we use
the terminology as used by Guibas and Hershberger~\cite{guibas1989query}.

The geodesic between $p$ and $q$ is returned as a balanced tree. The leaves of
this tree correspond to, what Guibas and Hershberger call, \emph{fundamental
  strings}: two convex chains joined by a tangent. The individual convex chains
are stored as balanced binary search trees. The internal nodes have two
or three children, and represent \emph{derived strings}: the concatenation of
the fundamental strings stored in its descendant leaves. See
Fig.~\ref{fig:shortest_path_ds} for an illustration.

\begin{figure}[tb]
  \centering
  \includegraphics{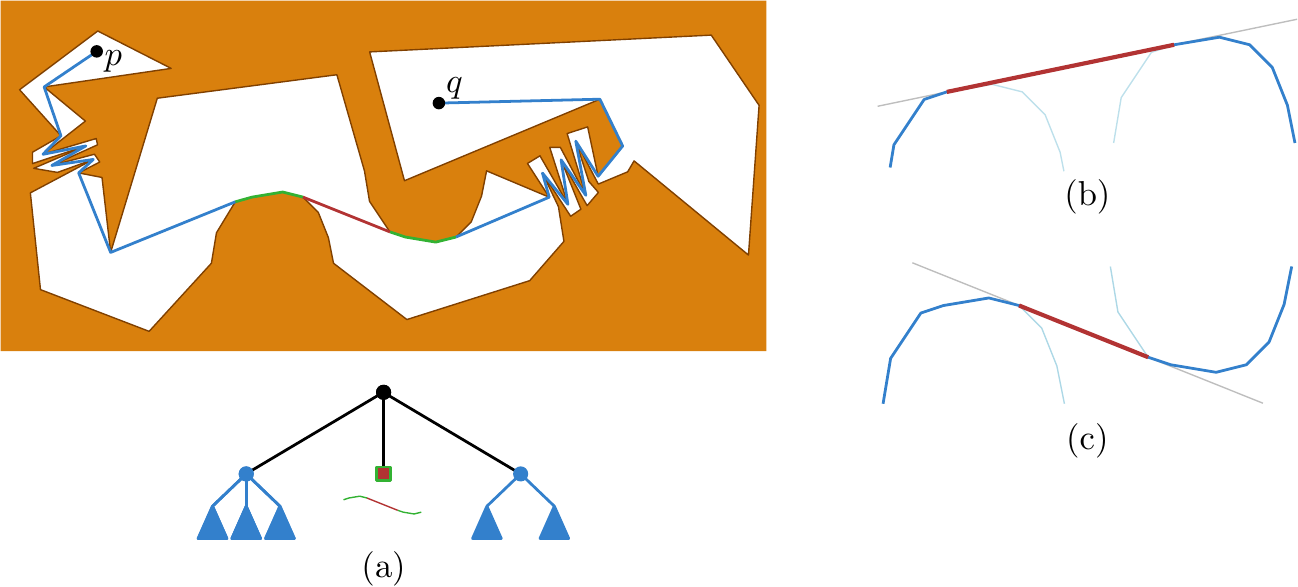}
  \caption{The data structure of Guibas and Hershberger~\cite{guibas1989query}
    can return the geodesic between two query points $p$ and $q$ as a balanced
    tree (a). The leaves of the tree correspond to fundamental strings: two
    convex chains joined by a tangent. The internal nodes represent derived
    strings: the concatenation of two or three sub-paths (strings). A
    fundamental string can be convex (b) or non-convex (c).}
  \label{fig:shortest_path_ds}
\end{figure}

To make sure that we can access the $i^\text{th}$ vertex or edge on a shortest
path in $O(\log m)$ time, we augment the trees to store subtree sizes. It is
easy to see that we can maintain these subtree sizes without affecting the
running time of the other operations.

To make sure that we can find the longest prefix (suffix) of a shortest path
that is convex we do the following. With each node $v$ in the tree we store a
boolean flag $v\mathit{.convex}$ that is true if and only if the sub path it
represents forms a convex chain. It is easy to maintain this flag without
affecting the running time of the other operations. For leaves of the tree
(fundamental strings) we can test this by checking the orientation of the
tangent with its two adjacent edges of its convex chains. These edges can be
accessed in constant time. Similarly, for internal nodes (derived strings) we
can determine if the concatenation of the shortest paths represented by its
children is convex by inspecting the $\mathit{convex}$ field of its children,
and checking the orientation of only the first and last edges of the shortest
paths. We can access these edges in constant time. This augmentation allows us
to find the last vertex $v$ of a shortest path $\geod(p,q)$ such that
$\geod(p,v)$ is a convex chain in $O(\log m)$ time. We can then obtain
$\geod(p,v)$ itself (represented by a balanced tree) in $O(\log m)$ time
by simply querying the data structure with points $p$ and $v$. Hence, we can
compute the longest prefix (or suffix) on which a shortest path forms a convex
chain in $O(\log m)$ time.


\begin{wrapfigure}[20]{r}{0.425\textwidth}
  \centering
  \vspace{-1.2\baselineskip}
  \includegraphics{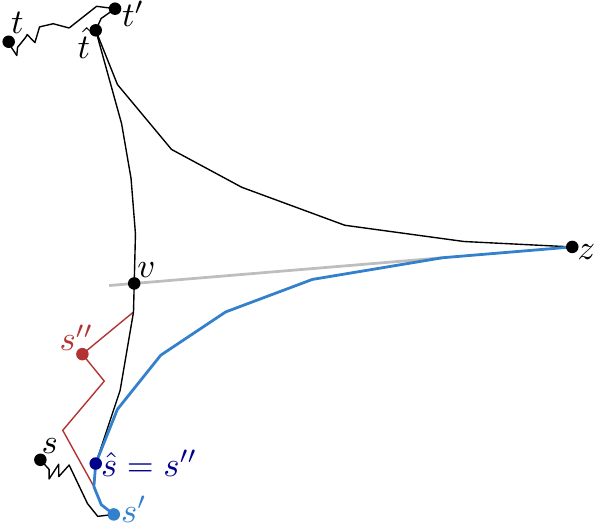}
  \vspace{-.3cm}
  \caption{We can find the first point $s'$ on $\protect\geod(s,z)$ such that
    $\protect\geod(s',z)$ is convex. When we have a point $v$ of
    $\protect\geod(s',t')$ known to be in $\hat{\P}(z,s,t)$, we can find
    $s''=\hat{s}$ (darkblue). The first point $s''$ on $\protect\geod(s',t')$
    such that $\protect\geod(s'',v)$ is convex has to lie on
    $\protect\geod(s',z)$. If this is not the case (as shown in red), then we
    can shortcut the shortest path to avoid $s''$, leading to a contradiction.
  }
  \label{fig:pseudo-triangle}
\end{wrapfigure}
Given point $z$, the above augmentation allow us to access $\hat{\P}(z,s,t)$ in
$O(\log m)$ time. We query the data structure to get the tree representing
$\geod(s,z)$, and, using our augmentations, find the longest convex suffix
$\geod(s',z)$. Similarly, we find the longest convex suffix $\geod(t',z)$ of
$\geod(t,z)$. Observe that the corners $\hat{s}$ and $\hat{t}$ of
$\hat{\P}(z,s,t)$ lie on $\geod(s',z)$ and $\geod(t',z)$, respectively
(otherwise $\geod(\hat{s},z)$ and $\geod(\hat{t},z)$ would not be convex
chains). Unfortunately, we cannot directly use the same approach to find the
part $\geod(s',t')$ that is convex, as it may both start and end with a piece
that is non-convex (with respect to $\geod(s',t')$). However, consider the
extension segment of the first edge of the shortest path from $z$ to $s$ (see
Fig.~\ref{fig:pseudo-triangle}). This extension segment intersects the shortest
path $\geod(s',t')$ exactly once in a point $v$. By construction, this point
$v$ must lie in the pseudo-triangle $\hat{\P}(z,s,t)$. Thus, we can decompose
$\geod(s',t')$ into two sub-paths, one of which starts with a convex chain and
the other ends with a convex chain. Hence, for those chains we can use the
$\mathit{convex}$ fields to find the vertices $s''$ and $t''$ such that
$\geod(s'',v)$ and $\geod(v,t'')$ are convex, and thus $\geod(s'',t'')$ is
convex. Finally, observe that $s''$ must occur on $\geod(s',z)$, otherwise we
could shortcut $\geod(s',t')$. See Fig.~\ref{fig:pseudo-triangle}. Hence, $s'' = \hat{s}$
and $t'' = \hat{t}$ are the two corners of the pseudo-triangle
$\hat{\P}(z,s,t)$. We can find $v$ in $O(\log m)$ time by a binary search on
$\geod(s',t')$. Finding the longest convex chains starting and ending in $v$
also takes $O(\log m)$ time, as does computing the shortest path
$\geod(\hat{s},\hat{t})$. It follows that given $z$, we can compute (a
representation of) $\hat{\P}(z,s,t)$ in $O(\log m)$ time.

With the above augmentations, and using Lemma~\ref{lem:random_access_bst}, we
then obtain the following result.

\begin{lemma}
  \label{lem:random_access_bst_final}
  Given the points $w$ and $z$ where $\br_{st}$ intersects $d$ and the outer
  boundary of $P_r$, respectively,
  we can access the $i^\text{th}$ vertex of $\br_{st}$ in $O(\log m)$ time.
\end{lemma}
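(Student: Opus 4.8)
The plan is to combine the structural results of this section into a concrete algorithm that, on query $i$, returns the $i^\text{th}$ vertex of $\br_{st}$. First I would recall what we have available: given $z$, we can compute a balanced-tree representation of the pseudo-triangle $\hat{\P}(z,s,t)$ in $O(\log m)$ time (by the preceding discussion, using the subtree-size and $\mathit{convex}$ augmentations of the Guibas--Hershberger structure), and this representation supports accessing the $k^\text{th}$ edge or vertex of any of its bounding chains in $O(\log m)$ time. Given $w$, Corollary~\ref{cor:suffix} together with the binary-search procedure described right after it lets us identify in $O(\log m)$ time the index $a$ delimiting the suffix $G_s^t = e_a,\dots,e_g$ of extension segments of $\hat{\P}(z,s,t)$ that actually contribute a vertex to $\br_{st}$; symmetrically we obtain the suffix $G_t^s$. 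Note every extension segment in these suffixes is determined by a single vertex of $\hat{\P}(z,s,t)$, so we can evaluate the $k^\text{th}$ element of $G_s^t$ or $G_t^s$ — as an actual line segment clipped to $\hat{\P}(z,s,t)$ — in $O(\log m)$ time via random access into the chain, and we can also compute in $O(\log m)$ time (by a two-point shortest-path query) the bisector point lying on a given extension segment, since that point is the apex of the funnel determined by $s$, $t$, and the clipped segment's far endpoint.

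The heart of the argument is Lemma~\ref{lem:random_access_bst}: the vertex $v_d$ lying on $r_i \in G_s^t$ is the $(i+j)^\text{th}$ vertex of $\br_{st}$, where $j$ is the number of segments of $G_t^s$ that cross $r_i$ on the $s$-side (closer to $s$ than to $t$). Reading this in reverse gives the algorithm for the $d^\text{th}$ vertex: I would binary search over the index $i$ into $G_s^t$. For a trial value $i$, access $r_i$ (one random access into a chain of $\hat{\P}(z,s,t)$, $O(\log m)$ time), then binary search over $G_t^s$ for the number $j$ of its segments intersecting $r_i$ in a point closer to $s$ than to $t$ — this inner search is monotone because $G_t^s$ is a fan of pairwise-disjoint extension segments ordered along $\hat{\P}(z,s,t)$, so the side (relative to the perpendicular bisector of $s$ and $t$ restricted to the line of $r_i$) on which $f_\ell$ meets $r_i$ changes exactly once as $\ell$ increases; each test compares two geodesic distances in $O(\log m)$ time. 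Compare $i+j$ against $d$: if $i+j < d$ the target vertex lies on a later segment of $G_s^t$, and if $i+j > d$ on an earlier one, because $i \mapsto i+j(i)$ is (weakly) increasing — moving to a later $r_i$ cannot decrease the count of earlier-in-order $f_\ell$'s crossing it, by the same order-consistency of the two fans with $\br_{st}$ that underlies Lemma~\ref{lem:random_access_bst}. Once $i$ is pinned down, recompute $j$ and report the bisector point on $r_i$ obtained via the two-point query above. If instead $d$ overshoots all of $G_s^t$, the target vertex lies on a segment of $G_t^s$ not crossed on the $s$-side by any remaining segment of $G_s^t$, and the symmetric search over $G_t^s$ (swapping the roles of $s$ and $t$) finds it; the two cases are handled uniformly by running the search from whichever side $d$ falls in, or simply running both and taking the consistent answer.

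The cost is $O(\log m)$ for building $\hat{\P}(z,s,t)$ and locating $a$, then an outer binary search of $O(\log m)$ steps — there are $O(m)$ extension segments — each performing an inner binary search of $O(\log m)$ steps with an $O(\log m)$-time geodesic-distance comparison at each step, giving $O(\log^3 m)$ overall. To reach the claimed $O(\log m)$ I would sharpen this: the outer and inner searches can be run as a single simultaneous search. Concretely, maintain one pointer into $G_s^t$ and one into $G_t^s$ and advance them in a coordinated fashion — since the vertices of $\br_{st}$ appear as the merge, in bisector-order, of the vertices contributed by $G_s^t$ and by $G_t^s$ (Lemma~\ref{lem:random_access_bst} says exactly that the $d^\text{th}$ vertex corresponds to index $i$ in one fan and $j$ in the other with $i+j=d$), we are really doing a search in the Cartesian-product order, which can be resolved with $O(\log m)$ comparisons by a standard "fractional-cascading-free" double binary search (at each step halve the larger of the two remaining intervals, using one geodesic-distance comparison to decide which way, which validly shrinks the product because the predicate "is $v_d$ before the intersection of $r_i$ and $f_j$?" is monotone in both coordinates). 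Each comparison is one $O(\log m)$-time two-point shortest-path query, so the total is $O(\log m)$ per random-access query, proving the lemma. The main obstacle I anticipate is precisely this last step: verifying that the coordinated double search terminates in $O(\log m)$ steps requires checking that the relevant predicates are genuinely monotone in both fan-indices simultaneously — i.e., that the two fans $G_s^t$ and $G_t^s$ induce exactly the same linear order on the vertices of $\br_{st}$ as the bisector's own orientation from $w$ to $z$ does — which follows from Observation~\ref{obs:intersection_order} and the disjointness of each fan but needs to be stated carefully to justify that no backtracking is ever needed.
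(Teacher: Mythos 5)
Your high-level plan matches the paper's (very terse) proof: build $\hat{\P}(z,s,t)$ via the augmented Guibas--Hershberger trees, identify the suffixes $G_s^t$ and $G_t^s$, and recover the $d$-th vertex of $\br_{st}$ by a coordinated search in the two fans guided by the rank formula $d = i + j$ from Lemma~\ref{lem:random_access_bst}. That is indeed what the paper means by ``binary search on the subtree sizes, using Lemma~\ref{lem:random_access_bst} to guide the search.''

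However, the final cost accounting contains a genuine error. You correctly argue that the coordinated double search uses $O(\log m)$ comparison steps (the product of the two remaining ranges halves each step, starting from $O(m^2)$), and then write that ``each comparison is one $O(\log m)$-time two-point shortest-path query, so the total is $O(\log m)$.'' That multiplication gives $O(\log^2 m)$, not $O(\log m)$. With your own premises, the bound you have actually proved is $O(\log^2 m)$, and if that were the truth it would propagate: Lemma~\ref{lem:binary_search_funnel} would become $O(\log^3 m)$, the reconstruction time $O(k\log^3 m)$, and so on.

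To actually reach $O(\log m)$ you must arrange for each comparison during the coordinated descent to cost $O(1)$, not $O(\log m)$. The way to do this is to augment the balanced trees returned by the Guibas--Hershberger structure not only with subtree sizes (as the paper states) but also with cumulative chain lengths, so that as you descend a tree you maintain in $O(1)$ extra work per step the geodesic distance from $s$ (resp.\ $t$) to the vertex $u_i$ (resp.\ $v_j$) currently under consideration; the sign test ``is the intersection $p$ of $r_i$ and $f_j$ closer to $s$ than to $t$'' then reduces to comparing $\geodlen(s,u_i)+\|u_ip\|$ with $\geodlen(t,v_j)+\|v_jp\|$, which is $O(1)$ once $u_i$, $v_j$ and the incident edges are in hand. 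Likewise, the indices you probe must be reached by descending both trees from the root (amortised $O(1)$ per step), not by fresh $O(\log m)$-time random accesses at each iteration. Stated this way, your coordinated search gives the claimed $O(\log m)$ bound; as written, it does not.
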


\begin{proof}
  Recall that the data structure of Guibas and
  Hershberger~\cite{guibas1989query} reports the shortest path between query
  points $p$ and $q$ as a balanced tree. We augment these trees such that each
  node knows the size of its subtree. It is easy to do this using only constant
  extra time and space, and without affecting the other operations. We can then
  simply binary search on the subtree sizes, using
  Lemma~\ref{lem:random_access_bst} to guide the search.
\end{proof}

\paragraph{Finding $w$ and $z$.} We first show that we can find the point $w$ where
$b_{st}$ enters $P_r$ (if it exists), and then show how to find $z$, the other
point where $b_{st}$ intersects $\partial P_r$.

\begin{lemma}
  \label{lem:finding_w}
  We can find $w$ in $O(\log^2 m)$ time.
\end{lemma}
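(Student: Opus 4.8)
The plan is to locate $w = b_{st} \cap d$ by a binary search along the diagonal $d$. Recall $d$ is a vertical segment; parametrize its points by $y$-coordinate. For any candidate point $p \in d$ we can decide in $O(\log m)$ time, using the two-point shortest path data structure, which of $s$ or $t$ is closer to $p$: simply query $\geodlen(s,p)$ and $\geodlen(t,p)$. This sidedness test is monotone along $d$: by Lemma~\ref{lem:bst_intersect_d} the bisector crosses $d$ in at most one point $w$, so as we move along $d$ from one endpoint to the other the sign of $\geodlen(s,p) - \geodlen(t,p)$ changes at most once. Hence a binary search on $d$ works --- but since $d$ is a continuous segment, not a discrete list, I cannot directly binary-search over its points; I need to discretize it into $O(m)$ candidate breakpoints.

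The key observation is that the function $y \mapsto \geodlen(s,p(y)) - \geodlen(t,p(y))$ is piecewise of a simple (hyperbolic/linear) form, with breakpoints exactly where the combinatorial structure of $\geod(s,p(y))$ or $\geod(t,p(y))$ changes, i.e.\ where $d$ crosses an extension segment of $E_s$ or $E_t$ (equivalently, where the last vertex on the shortest path from $s$, resp.\ $t$, to $p(y)$ changes). Within one such piece, the two geodesic-distance functions are each of the form (Euclidean distance from a fixed apex) plus a constant, so their difference is monotone and we can solve for the crossing in $O(1)$ time once we know the relevant apexes. So it suffices to (i) identify, among the $O(m)$ extension-segment crossings with $d$, the one interval in which the sign flips, and then (ii) solve a constant-size equation inside that interval. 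Step (i) is again a monotone search: the ordered sequence of shortest-path "apex" vertices seen from $s$ along $d$, and likewise from $t$, can be extracted from the shortest-path trees returned by the Guibas--Hershberger structure. Concretely, I would query $\geod(s, p_{\mathrm{top}})$ and $\geod(s, p_{\mathrm{bot}})$ for the two endpoints of $d$, and use the shortest-path-tree representation (which, via the subtree-size augmentation of Lemma~\ref{lem:random_access_bst_final}, supports random access and hence binary search) to binary-search for the extension segment bounding the piece containing the sign change; each probe costs $O(\log m)$.

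Putting it together: one outer binary search, with $O(\log m)$ probes, each probe costing $O(\log m)$ time (a two-point geodesic-distance query, plus a lookup in the augmented shortest-path tree), gives the target interval on $d$; then an $O(1)$ closed-form computation yields the exact point $w$ (or reports that no crossing exists, which happens precisely when the sign does not change, i.e.\ $b_{st}$ does not enter $P_r$). Total: $O(\log^2 m)$. The main obstacle --- and the place where care is needed --- is step (i): making the search over extension-segment crossings genuinely a monotone binary search with $O(\log m)$-time probes. The subtlety is that the breakpoints come from two different trees ($E_s$ and $E_t$), so the combined breakpoint sequence along $d$ is a merge of two sorted sequences; I expect to handle this by running the monotone "which side" test directly at each probe point (which already accounts for both trees simultaneously) rather than by explicitly merging, so that the only thing I extract from the trees is, at the end, the two apex vertices needed for the $O(1)$ closed-form step inside the final interval.
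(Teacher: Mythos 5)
Your proposal is essentially the paper's proof: both reduce the problem to locating the sign change of $\geodlen(s,\cdot)-\geodlen(t,\cdot)$ along $d$ (monotone by Lemma~\ref{lem:bst_intersect_d}), discretize $d$ at the $O(m)$ extension-segment crossings obtained from the two Guibas--Hershberger funnels from $s$ and $t$ to $d$, binary-search among these breakpoints with $O(\log m)$-time distance probes, and finally solve the single-piece hyperbolic equation in $O(1)$ time. The one place the paper is slightly more explicit is step (i): it performs a \emph{simultaneous} binary search over the two sorted breakpoint sequences $p_1,\dots,p_h$ and $q_1,\dots,q_k$ (rather than your ``probe and decide which side'' sketch), but this is a routine implementation of the same idea and also yields $O(\log^2 m)$.
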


\begin{proof}
  Consider the geodesic distance of $s$ to diagonal $d$ as a function $f_s$,
  parameterized by a value $\lambda \in [0,1]$ along $d$. Similarly, let $f_t$
  be the distance function from $t$ to $d$. Since $b_{st}$ intersects $d$
  exactly once --namely in $w$-- the predicate
  $\hat{\P}(\lambda) = f_s(\lambda) < f_t(\lambda)$ changes from \textsc{True}
  to \textsc{False} (or vice versa) exactly once. Query the data structure of
  Guibas and Hershberger~\cite{guibas1989query} to get the funnel representing
  the shortest paths from $s$ to the points in $d$. Let $p_1,..,p_h$, with
  $h=O(m)$, be the intersection points of the extension segments of vertices in
  the funnel with $d$. Similarly, compute the funnel representing the shortest
  paths from $t$ to $d$. The extension segments in this funnel intersect $d$ in
  points $q_1,..,q_k$, with $k=O(m)$. We can now simultaneously binary search
  among $p_1,..,p_h$ and $q_1,..,q_k$ to find the smallest interval $I$ bounded
  by points in $\{p_1,..,p_h,q_1,..,q_k\}$ in which $\hat{\P}$ flips from
  \textsc{True} to \textsc{False}. Hence, $I$ contains $w$. Computing the
  distance from $s$ ($t$) to some $q_i$ ($p_i$) takes $O(\log m)$ time, and
  thus we can find $I$ in $O(\log^2 m)$ time. On interval $I$ both $f_s$ and
  $f_t$ are simple hyperbolic functions consisting of a single piece, and thus
  we can compute $w$ in constant time.
\end{proof}

Consider the vertices $v_1,..,v_h$ of $P_r$ in clockwise order, where
$d=\overline{v_hv_1}$ is the diagonal. Since the bisector $b_{st}$ intersects
the outer boundary of $P_r$ in only one point, there is a vertex $v_a$ such
that $v_1,..,v_a$ are all closer to $t$ than to $s$, and $v_{a+1},..,v_h$ are
all closer to $s$ than to $t$. We can thus find this vertex $v_a$ using a
binary search. This takes $O(\log^2 m)$ time, as we can compute
$\geodlen(s,v_i)$ and $\geodlen(t,v_i)$ in $O(\log m)$ time. It then follows
that $z$ lies on the edge $\overline{v_a,v_{a+1}}$. We can find the exact
location of $z$ using a similar approach as in Lemma~\ref{lem:finding_w}. This
takes $O(\log^2 m)$ time. Thus, we can find $z$ in $O(\log^2 m)$ time. We
summarize our results from this section in the following theorem.


\begin{theorem}
  \label{thm:represent_bisector}
  Let $P$ be a simple polygon with $m$ vertices that is split into $P_\ell$ and
  $P_r$ by a diagonal $d$. In $O(m)$ time we can preprocess $P$, such that for
  any pair of points $s$ and $t$ in $P_\ell$, we can compute a representation
  of $\br_{st}=b_{st} \cap P_r$ in $O(\log^2 m)$ time. This representation
  supports accessing any of its vertices in $O(\log m)$ time.
\end{theorem}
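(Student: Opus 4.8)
The plan is to assemble Theorem~\ref{thm:represent_bisector} from the pieces developed earlier in this section, treating it essentially as a bookkeeping of time bounds. First I would describe the preprocessing: we run the Guibas--Hershberger two-point shortest path preprocessing on $P$, which takes $O(m)$ time, and we apply the two augmentations described above (subtree sizes and the $\mathit{convex}$ flags at every node of the balanced trees), each of which adds only constant overhead per node and hence does not change the $O(m)$ bound. This is the ``preprocess $P$'' part of the statement.

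Next I would spell out what happens on a query with points $s,t \in P_\ell$. The representation of $\br_{st}$ we hand back is implicit: it consists of the points $w$ and $z$, the pseudo-triangle $\hat{\P}(z,s,t)$ (stored as the three balanced-tree paths $\geod(\hat{s},\hat{t})$, $\geod(\hat{s},z)$, $\geod(\hat{t},z)$ returned by the augmented data structure), together with the index $a$ from Corollary~\ref{cor:suffix} delimiting the relevant suffixes $G_s^t$ and $G_t^s$. To build this I would: (1) find $w$ in $O(\log^2 m)$ time by Lemma~\ref{lem:finding_w}; if $b_{st}$ does not cross $d$ we report that $\br_{st}$ is empty. (2) find $z$ in $O(\log^2 m)$ time by the binary search over the outer boundary vertices $v_1,\dots,v_h$ of $P_r$ described just before the theorem. (3) with $z$ in hand, compute $\hat{\P}(z,s,t)$ in $O(\log m)$ time using the convex-suffix trick (finding $\geod(s',z)$, $\geod(t',z)$, the point $v$ on the first extension segment from $z$, and then $\hat{s}=s''$, $\hat{t}=t''$). (4) compute the starting index $a$ of $G_s^t$ (and symmetrically of $G_t^s$) in $O(\log m)$ time by the binary search along $\geod(t,s)$ for the first vertex closer to $s$ than $t$, followed by the search for the first extension segment crossing $d$ below $w$. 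The dominant terms are the two $O(\log^2 m)$ steps, so the total construction time is $O(\log^2 m)$, as claimed.

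Finally I would justify the random-access bound. Given the stored representation, to access the $i^{\text{th}}$ vertex $v_i$ of $\br_{st}$ we invoke Lemma~\ref{lem:random_access_bst}: $v_i$ lies on $r_p \in G_s^t$ where $p$ is chosen so that $p$ plus the number $j$ of segments of $G_t^s$ meeting $r_p$ on the $s$-side equals $i$. Since subtree sizes let us address the $p$-th element of $G_s^t$ (a suffix of the edge list of a balanced tree path) in $O(\log m)$ time, and since ``number of segments of $G_t^s$ intersecting $r_p$ in a point closer to $s$ than to $t$'' is a monotone predicate along $G_t^s$ that we can evaluate per segment in $O(\log m)$ time using the two-point distance oracle, a binary search matching $p + j = i$ costs $O(\log^2 m)$ in the naive implementation; but because both $G_s^t$ and $G_t^s$ are sorted consistently along $b_{st}$ we can instead binary search directly, guided by Lemma~\ref{lem:random_access_bst}, on the subtree sizes of the tree representing $G_s^t$, spending $O(\log m)$ total. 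I would present the clean version: one binary search over $G_s^t$, at each probed segment $r_p$ computing $j$ by one more binary search — or, as in the proof of Lemma~\ref{lem:random_access_bst_final}, observing that the guiding comparison is itself a constant number of $O(\log m)$-time distance evaluations — to get the $O(\log m)$ access time.

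The part that needs the most care, and which I would write out most carefully, is this last step: making sure the binary search for the $i^{\text{th}}$ bisector vertex really runs in $O(\log m)$ rather than $O(\log^2 m)$. The subtlety is that the quantity $j$ in Lemma~\ref{lem:random_access_bst} is not something we can read off a single node; it depends on where $r_p$ meets the chain $\geod(t,z)$ relative to the endpoints of the segments of $G_t^s$. The resolution is that the intersection of $r_p$ with $\geod(t,z)$ has a well-defined position along that convex chain, and the segments of $G_t^s$ emanate from $\geod(t,z)$ in sorted order, so $j$ is determined by locating one point on a convex chain — which, with subtree sizes, is an $O(\log m)$ operation, and this single cost is incurred once, not once per step of an outer binary search, because the two searches can be interleaved (a standard fractional-cascading-free ``binary search on two monotone sequences'' argument, exactly the kind of simultaneous search already used in Lemma~\ref{lem:finding_w}). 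Everything else is a direct citation of the lemmas above.
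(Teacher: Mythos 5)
Your overall structure is exactly the paper's: the theorem is a summary of Lemmas~\ref{lem:finding_w} and~\ref{lem:random_access_bst_final} together with the paragraph on finding $z$, and your bookkeeping of the preprocessing and the $O(\log^2 m)$ construction cost (find $w$, find $z$, build $\hat{\P}(z,s,t)$, find the suffix index $a$) is correct.

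The problem is the last part, where you try to re-derive the $O(\log m)$ access bound instead of simply invoking Lemma~\ref{lem:random_access_bst_final}. Your argument there is internally inconsistent and contains a misleading citation. You first describe the access as ``one binary search over $G_s^t$'' with each guiding comparison costing $O(\log m)$ (either an inner binary search for $j$, or ``a constant number of $O(\log m)$-time distance evaluations''); both of these honestly give $O(\log m)\cdot O(\log m)=O(\log^2 m)$, yet you then assert the total is $O(\log m)$. You then appeal to ``the kind of simultaneous search already used in Lemma~\ref{lem:finding_w}'' to rescue this, but that lemma's simultaneous search over $p_1,\dots,p_h$ and $q_1,\dots,q_k$ explicitly costs $O(\log m)$ per probe and $O(\log^2 m)$ in total, so it is an example of precisely the bound you are trying to beat, not of the bound you want. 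Finally, the claim that ``$j$ is determined by locating one point on a convex chain'' is not clearly correct: the segments of $G_t^s$ emanate from vertices on \emph{both} $\geod(s,t)$ and $\geod(t,z)$, so reducing $j(p)$ to a single point-location on one convex chain requires an argument you do not give. None of this invalidates the theorem (the paper's own Lemma~\ref{lem:random_access_bst_final} asserts the $O(\log m)$ access and your proof can and should just cite it), but the paragraph as written asserts $O(\log m)$ while the steps you actually describe only support $O(\log^2 m)$; either cite the lemma directly, or supply a genuine $O(\log m)$ argument (e.g.\ a single coordinated top-down descent of the two balanced path trees, maintaining accumulated geodesic lengths so that each of the $O(\log m)$ comparisons costs $O(1)$ rather than $O(\log m)$).
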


\section{Rebuilding the Forest \VD}
\label{sec:rebuilding}

Consider a level in the balanced decomposition of $P$ at which a diagonal $d$
splits a subpolygon of $P$ into $P_\ell$ and $P_r$, and assume without loss of
generality that $d$ is vertical and that $P_r$ lies right of $d$. Recall that
we partition the sites $S_\ell \subset P_\ell$ into groups (subsets). When we
insert a new site into a group $S^*$ of size $k$, or delete a site from $S^*$,
we rebuild the forest $\VD=\VD(S^*)$, representing the topology of the Voronoi
Diagram of $S^*$ in $P_r$, from scratch. We will now show that we can compute
\VD efficiently by considering it as an abstract Voronoi
diagram~\cite{klein1993avd}. Assuming that certain geometric primitives like
computing the intersections between ``related'' bisectors take $O(X)$ time we
can construct an abstract Voronoi diagram in expected $O(Xk\log k)$
time~\cite{klein1993avd}. We will show that \VD is a actually a
\emph{Hamiltonian} abstract voronoi diagram, which means that it can be
constructed in $O(Xk)$ time~\cite{klein1994hamiltonian_vd}. We show this in
Section~\ref{sub:Hamiltonian_avd}. In Section~\ref{sub:Geometric_Primitives} we
discuss the geometric primitives used by the algorithm of Klein and
Lingas~\cite{klein1994hamiltonian_vd}; essentially computing (a representation
of) the concrete Voronoi diagram of five sites. We show that we can implement
these primitives in $O(\log^2 m)$ time by computing the intersection point
between two ``related'' bisectors $\br_{st}$ and $\br_{tu}$. This then gives us
an $O(k\log^2 m)$ time algorithm for constructing \VD. Finally, in
Section~\ref{sub:query} we argue that having only the topological structure \VD
is sufficient to find the site in $S^*$ closest to a query point
$q \in P_r$.

\subsection{Hamiltonian abstract Voronoi Diagrams}
\label{sub:Hamiltonian_avd}

In this section we show that we can consider $\VD$ as a Hamiltonian abstract
Voronoi diagram. A Voronoi diagram is \emph{Hamiltonian} if there is a curve
--in our case the diagonal $d$-- that intersects all regions exactly once, and
furthermore this holds for all subsets of the
sites~\cite{klein1994hamiltonian_vd}. Let $S^*$ be the set of sites in $P_\ell$
that we consider, and let $T^*$ be the subset of sites from $S^*$ whose Voronoi
regions intersect $d$, and thus occur in \VD.

\begin{lemma}
  \label{lem:hamiltonian_avd}
  The Voronoi diagram $\VD(T^*)$ in $P_r$ is a Hamiltonian abstract Voronoi diagram.
\end{lemma}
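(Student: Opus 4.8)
The plan is to verify the two defining properties of a Hamiltonian abstract Voronoi diagram for the system of bisectors $\{b_{st} : s,t \in T^*\}$ restricted to $P_r$, using the diagonal $d$ as the Hamiltonian curve. Recall that an abstract Voronoi diagram is determined by a system of bisecting curves that (a) are simple curves, each partitioning the plane (here: $P_r$) into two regions, one "belonging" to $s$ and one to $t$; (b) satisfy the axioms that Voronoi regions are nonempty and pathwise connected, and that the regions partition the domain; and (c) for the Hamiltonian property, there is a curve — the diagonal $d$ — meeting every nonempty Voronoi region in exactly one (connected) piece, and this must hold for every subset of $T^*$ as well. Since $T^*$ is by definition the set of sites of $S^*$ whose region meets $d$, and since the Hamiltonian property is required to be hereditary, I would state and prove everything for an arbitrary subset $U \subseteq T^*$; the claim for $T^*$ itself is then the special case $U = T^*$.

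First I would handle the bisector axioms. By Lemma~\ref{lem:bisector_prop}, each $b_{st}$ is a smooth (in particular simple) curve; restricted to $P_r$ it is, by Lemma~\ref{lem:bst_intersect_d} together with the discussion following it, a single arc $\br_{st}$ from a point $w$ on $d$ to a point $z$ on $\partial P_r \setminus d$ (or it misses $P_r$ entirely, in which case one site's region covers all of $P_r \cap$ (relevant part) and there is nothing to check). Thus $\br_{st}$ partitions $P_r$ into exactly two connected pieces, the $s$-side and the $t$-side, as already exploited in the proof of Lemma~\ref{lem:intersect_bst}. Pathwise connectedness of geodesic Voronoi regions inside a simple polygon is classical (Aronov~\cite{aronov1989geodesic}): for any site $s$ and any point $p$ in its region, the whole geodesic $\geod(s,p)$ lies in the region, since by Lemma~\ref{lem:bisector_sp_intersect} that path crosses no bisector $b_{st}$; hence regions are even geodesically star-shaped, so certainly pathwise connected, and they clearly partition $P_r$. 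This establishes that $\VD(U)$ is an abstract Voronoi diagram in $P_r$ in the sense of Klein~\cite{klein1993avd}.

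The heart of the argument — and the step I expect to be the main obstacle — is the Hamiltonian property: that $d$ crosses each region of $\VD(U)$ in exactly one connected interval, for every $U \subseteq T^*$. That $d$ meets the region of each $s \in U \subseteq T^*$ is immediate from the definition of $T^*$ (monotonicity: adding sites only shrinks regions, but $s \in T^*$ means it meets $d$ when all of $T^*$ is present — so I must instead argue it directly for $U$; the cleaner route is: the nearest site in $U$ to any point of $d$ has a region meeting $d$, and more generally for each $s \in U$ whose region is nonempty in the full diagram the part of $d$ closest to $s$ among $U$ is nonempty). The key geometric fact making the intersection a single interval is that along $d$, parameterized by $\lambda \in [0,1]$, and for any two sites $s,t \in U$, the sign of $f_s(\lambda) - f_t(\lambda)$ changes at most once — this is exactly Lemma~\ref{lem:bst_intersect_d} ($b_{st}$ meets $d$ in at most one point). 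A standard argument then shows that a family of total orders on $U$ "induced" by pairwise comparison functions each flipping at most once forces the lower envelope (the nearest-site function along $d$) to be a sequence of intervals with no site appearing twice: if site $s$ were nearest on two disjoint intervals of $d$ separated by an interval where $t$ is nearest, then $f_s - f_t$ would be negative, then positive, then negative, contradicting the single-crossing property. Hence each region meets $d$ in one interval, and this reasoning used only pairwise properties, so it holds verbatim for every subset $U$. I would cite Lemma~\ref{lem:finding_w}'s setup for the parameterization of $f_s, f_t$ along $d$ and their piecewise-hyperbolic structure. Assembling these pieces — bisector axioms, connectivity of regions, and the single-interval property via single-crossing of distance functions, all stated for arbitrary $U \subseteq T^*$ — yields that $\VD(T^*)$ is a Hamiltonian abstract Voronoi diagram.
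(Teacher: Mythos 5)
Your proof is correct and follows the same route as the paper: Lemma~\ref{lem:bst_intersect_d} yields the ``at most one crossing'' half for every subset, and the definition of $T^*$ together with monotonicity of Voronoi regions under removal of sites yields the ``at least one crossing'' half. One remark: the monotonicity argument you dismiss mid-parenthetical (``so I must instead argue it directly for $U$'') in fact works exactly as you need — for $U \subseteq T^*$ one has $V(s,U) \supseteq V(s,T^*)$, so $V(s,U)$ inherits the intersection with $d$ — and it is precisely the paper's argument; your substitute (``$s$ is nearest among $T^*$ to some point of $d$, hence a fortiori nearest among $U$'') is an equivalent rephrasing, while your extra verification of the base abstract-Voronoi-diagram axioms is sound but not part of the paper's (shorter) proof.
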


\begin{proof}
  By Lemma~\ref{lem:bst_intersect_d} any bisector $b_{st}$ intersects the
  diagonal $d$ at most once. This implies that for any subset of sites
  $T \subseteq S^*$, so in particular for $T^*$, the diagonal $d$ intersects
  all Voronoi regions in $\VD(T)$ at most once. By definition, $d$ intersects
  all Voronoi regions of the sites in $T^*$ at least once. What remains is to
  show that this holds for any subset of $T^*$.  This follows since the Voronoi
  region $V(s,T_1 \cup T_2)$ of a site $s$ with respect to a set $T_1 \cup T_2$
  is contained in the voronoi region $V(s,T_1)$ of $s$ with respect to $T_1$.
\end{proof}

\paragraph{Computing the Order Along $d$.} We will use the algorithm of Klein
and Lingas~\cite{klein1994hamiltonian_vd} to construct
$\VD=\VD(S^*)=\VD(T^*)$. To this end, we need the set of sites $T^*$ whose
Voronoi regions intersect $d$, and the order in which they do so. Next, we show
that we can maintain the sites in $S^*$ so that we can compute this information
in $O(k\log^2 m)$ time.


\begin{lemma}
  \label{lem:order}
  Let $s_1,..,s_k$ denote the sites in $S^*$ by increasing distance from the
  bottom-endpoint $p$ of $d$, and let $t_1,..,t_z$ be the subset $T^*
  \subseteq S^*$ of sites whose Voronoi regions intersect $d$, ordered along $d$
  from bottom to top. For any pair of sites $t_a=s_i$ and $t_c=s_j$, with
  $a < c$, we have that $i < j$.
\end{lemma}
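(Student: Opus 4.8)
The plan is to prove the claimed monotonicity by contradiction, exploiting the fact (Lemma~\ref{lem:bst_intersect_d}) that any bisector crosses the diagonal $d$ at most once, together with the ordering of the sites $s_1,\dots,s_k$ by distance from the bottom-endpoint $p$ of $d$. Suppose for contradiction that there are sites $t_a = s_i$ and $t_c = s_j$ with $a < c$ (so $t_a$'s Voronoi region appears strictly below $t_c$'s along $d$) but $i > j$ (so $s_i$ is \emph{farther} from $p$ than $s_j$). First I would pin down what it means for a site's Voronoi region to intersect $d$: the point $w_a = V(t_a,T^*) \cap d$ is a point on $d$ that is closer to $t_a = s_i$ than to every other site of $T^*$, and in particular closer to $s_i$ than to $s_j = t_c$; symmetrically the point $w_c = V(t_c,T^*) \cap d$ is closer to $s_j$ than to $s_i$. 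Since $a < c$, $w_a$ lies below $w_c$ on $d$.

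The next step is to look at the bisector $b_{s_is_j}$ and the sign of the predicate ``closer to $s_i$ than to $s_j$'' along $d$. At $w_a$ this predicate is true, at $w_c$ it is false, and $w_a$ is below $w_c$; hence the predicate flips from true to false as we move up $d$ from $w_a$ to $w_c$. By Lemma~\ref{lem:bst_intersect_d}, $b_{s_is_j}$ meets $d$ in at most one point, so this is the \emph{only} flip: below the crossing point the closer site is $s_i$, above it the closer site is $s_j$. In particular, the bottom-endpoint $p$ of $d$ lies below $w_a$ (well, at worst at $w_a$, but $p \in \partial P_r$ while $w_a$ is interior to the region, so strictly below), and therefore $p$ is on the side where $s_i$ is the closer of the two, i.e.\ $\geodlen(s_i,p) < \geodlen(s_j,p)$. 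But this says $s_i$ is closer to $p$ than $s_j$ is, contradicting $i > j$, which by the defining order of $s_1,\dots,s_k$ means $\geodlen(s_i,p) > \geodlen(s_j,p)$.

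The one gap to fill carefully is the claim that $p$ lies (strictly) below $w_a$ on $d$, equivalently that the Voronoi vertex $w_a$ of $t_a$'s region on $d$ is not the endpoint $p$ itself; this is where I expect the only real friction. Here I would argue that $p$ is an endpoint of the diagonal and hence lies on $\partial P_r$, whereas a Voronoi region $V(t_a,T^*)$ restricted to $d$ is a (relatively open, or at least non-degenerate) subinterval of $d$ by the Hamiltonian property established in Lemma~\ref{lem:hamiltonian_avd} — each region meets $d$ in exactly one nonempty piece — so its witness point $w_a$ can be taken in the interior of that piece, hence strictly above $p$ unless $t_a = t_1$ is the very first site; and if $t_a = t_1$ we may instead pick any interior point of $V(t_1,T^*)\cap d$ above $p$, at which $s_i$ is still the closer site, and run the same argument. (Alternatively, one can phrase the whole argument directly in terms of the unique crossing point of $b_{s_is_j}$ with $d$ and the fact that the portion of $d$ below it is exactly $\{x \in d : \geodlen(s_i,x) < \geodlen(s_j,x)\}$, which contains $p$ since $w_a$ is in that portion and $p$ is below $w_a$.) Once this is in place, the contradiction with the distance order of $s_i$ and $s_j$ from $p$ is immediate, and the lemma follows.
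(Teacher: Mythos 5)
Your proof is correct and follows essentially the same approach as the paper's: locate the unique crossing point $w$ of $b_{t_a t_c}$ with $d$ (using Lemma~\ref{lem:bst_intersect_d}), observe that $p$ lies below $w$ and hence on the $t_a$-side, and conclude $\geodlen(s_i,p) < \geodlen(s_j,p)$. The paper states this directly rather than by contradiction and does not spell out the degenerate endpoint case, but the underlying reasoning is identical; your extra care about $w_a$ versus $p$ is a minor tightening, not a different route.
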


\begin{proof}
  Since $t_a$ and $t_c$ both contribute Voronoi regions intersecting $d$,
  their bisector must intersect $d$ in some point $w$ in between these two
  regions. Since $a < c$ it then follows that all points on $d$ below $w$,
  so in particular the bottom endpoint $p$, are closer to $t_a=s_i$ than to
  $t_c = s_j$. Thus, $i < j$.
\end{proof}

Lemma~\ref{lem:order} suggests a simple iterative algorithm for
extracting $T^*$ from $S^*=s_1,..,s_k$.

\begin{lemma}
  \label{lem:compute_sequence}
  Given $S^*=s_1,..,s_k$, we can compute $T^*$ from $S^*$ in $O(k\log^2 m)$
  time.
\end{lemma}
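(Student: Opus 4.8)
The plan is to turn Lemma~\ref{lem:order} into an incremental sweep along $d$. We process the sites $s_1,\dots,s_k$ in order of increasing geodesic distance from the bottom endpoint $p$ of $d$, maintaining the current candidate sequence $t_1,\dots,t_\ell$ of sites whose Voronoi regions (restricted to the sites seen so far) intersect $d$, together with the breakpoints along $d$ at which consecutive regions meet. This is exactly the classical Graham-scan-style construction of a lower envelope: when we add $s_{i}$, Lemma~\ref{lem:order} guarantees that if $s_i$ contributes a region at all, that region must appear as a \emph{suffix} of the current sequence along $d$ (since $s_i$ is farther from $p$ than all earlier sites, so near $p$ it loses). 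So I would, from the top of the current sequence, repeatedly test whether the last site $t_\ell$ still owns a nonempty portion of $d$ once $s_i$ is inserted; if not, pop it and continue. Each such test amounts to computing the intersection of the bisector $b_{t_\ell s_i}$ with $d$ (the point $w$ of Lemma~\ref{lem:bst_intersect_d}), which we can do in $O(\log^2 m)$ time by Lemma~\ref{lem:finding_w}, and comparing it to the current top breakpoint and to the geodesic distances $\geodlen(t_{\ell-1},\cdot)$ etc., each computable in $O(\log m)$ time. Finally, either $s_i$ is discarded (it never wins), or we push $s_i$ with its new breakpoint onto the sequence.

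Next I would bound the running time. Each site is pushed onto the sequence at most once and popped at most once over the whole execution, so the total number of bisector-$d$ intersection computations is $O(k)$. Each costs $O(\log^2 m)$ by Lemma~\ref{lem:finding_w} (and the supporting distance queries are $O(\log m)$ via the Guibas--Hershberger two-point structure), so the sweep runs in $O(k\log^2 m)$ time. Sorting $s_1,\dots,s_k$ by $\geodlen(p,s_i)$ is a one-time $O(k\log k\log m)$ cost, which is dominated by $O(k\log^2 m)$ since $k = O(\sqrt{n})$ is polynomially bounded in $m$ — or, if one prefers, we maintain the sorted order incrementally across updates so that it is available for free; in either case it does not dominate. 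At the end we output the surviving sequence $t_1,\dots,t_z$, which by construction is precisely $T^*$ in order along $d$.

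The correctness argument is the part that needs the most care. I would argue by induction on the prefix $s_1,\dots,s_i$ that after processing it the maintained sequence is exactly the ordered set of sites among $\{s_1,\dots,s_i\}$ whose Voronoi region \emph{with respect to that prefix} meets $d$, with the correct breakpoints. The monotonicity from Lemma~\ref{lem:order} (applied to each prefix, which is legitimate since it only used that bisectors cross $d$ once, true for every subset) is what makes "the newcomer's region is a suffix" valid and makes the pop-from-the-top strategy both sound and complete: once $t_\ell$ is dominated by $s_i$ on the top portion of $d$, and $s_i$ is even farther from $p$, every later-pushed site is farther still, so $t_\ell$ can never re-enter — hence popping it permanently is safe. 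I would also need the standard envelope fact that a popped site, together with the sites strictly between its two neighbors in the sequence, really has an empty region; this follows from the suffix structure plus transitivity of the "closer along $d$" comparisons, using that each bisector meets $d$ at most once so the relevant distance functions cross at most once on $d$. The main obstacle is making this suffix/popping invariant airtight — in particular handling the degenerate-looking but real case where inserting $s_i$ removes a contiguous block of old sites from the interior rather than just a suffix's endpoints, and confirming that Lemma~\ref{lem:order} indeed forces that block to be a suffix of the current sequence.
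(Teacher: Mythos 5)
Your proposal is essentially the same as the paper's proof: process the sites in order of increasing geodesic distance from the bottom endpoint $p$ of $d$, maintain the ordered intersection sequence as a stack with breakpoints along $d$, use Lemma~\ref{lem:order} (applied to every prefix) to justify that the new site's interval on $d$ is a suffix so that popping from the top is sound, charge each push/pop once, and pay $O(\log^2 m)$ per bisector--$d$ intersection via Lemma~\ref{lem:finding_w}. The "interior block" worry you raise at the end does not actually arise, for the reason you yourself state: since $s_i$ is farthest from $p$ among the processed sites, its region restricted to $d$ is a (possibly empty) suffix interval of $d$, so the sites it evicts are exactly a suffix of the current stack.
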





\begin{proof}
  We consider the sites in $S^*$ in increasing order, while maintaining $T^*$
  as a stack. More specifically, we maintain the invariant that when we start
  to process $s_{j+1}$, $T^*$ contains exactly those sites among $s_1,..,s_j$
  whose Voronoi region intersects $d$, in order along $d$ from bottom to top.

  Let $s_{j+1}$ be the next site that we consider, and let $t=s_i$, for some
  $i \leq j$, be the site currently at the top of the stack. We now compute the
  distance $\geod(s_{j+1},q)$ between $s_{j+1}$ and the topmost endpoint $q$ of
  $d$. If this distance is larger than $\geod(t,q)$, it follows that the
  Voronoi region of $s_{j+1}$ does not intersect $d$: since the bottom endpoint
  $p$ of $d$ is also closer to $t=s_i$ than to $s_{j+1}$, all points on $d$ are
  closer to $t$ than to $s_{j+1}$.

  If $\geod(s_{j+1},q)$ is at most $\geod(t,q)$ then the Voronoi region of
  $s_{j+1}$ intersects $d$ (since $t$ was the site among $s_1,..,s_j$ that was
  closest to $q$ before). Furthermore, since $p$ is closer to $t=s_i$ than to
  $s_{j+1}$ the bisector between $s_{j+1}$ and $t$ must intersect $d$ in some
  point $a$. If this point $a$ lies above the intersection point $c$ of $d$
  with the bisector between $t$ and the second site $t'$ on the stack, we have
  found a new additional site whose Voronoi region intersects $d$. We push
  $s_{j+1}$ onto $T^*$ and continue with the next site $s_{j+2}$. Note that the
  Voronoi region of every site intersects $d$ in a single segment, and thus
  $T^*$ correctly represents all sites intersecting $d$. If $a$ lies below $c$
  then the Voronoi region of $t$, with respect to $s_1,..s_{j+1}$, does not
  intersect $d$. We thus pop $t$ from $T^*$, and repeat the above procedure,
  now with $t'$ at the top of the stack.

  Since every site is added to and deleted from $T^*$ at most once the
  algorithm takes a total of $O(k)$ steps. Computing $\geod(s_{j+1},q)$ takes
  $O(\log m)$ time, and finding the intersection between $d$ and the bisector
  of $s_{j+1}$ and $t$ takes $O(\log^2 m)$ time
  (Lemma~\ref{lem:finding_w}). The lemma follows.
\end{proof}

We now simply maintain the sites in $S^*$ in a balanced binary search tree on
increasing distance to the bottom endpoint $p$ of $d$. It is easy to maintain
this order in $O(\log m + \log k)$ time per upate. We then extract the set of sites $T^*$
that have a Voronoi region intersecting $d$, and thus $P_r$, ordered along $d$
using the algorithm from Lemma~\ref{lem:compute_sequence}.

\subsection{Implementing the Required Geometric Primitives}
\label{sub:Geometric_Primitives}

In this section we discuss how to implement the geometric primitives needed by
the algorithm of Klein and Lingas~\cite{klein1994hamiltonian_vd}. 
They describe their algorithm in terms of the following two basic operations:
(i) compute the concrete Voronoi diagram of five sites, and (ii) insert a new
site $s$ into the existing Voronoi diagram $\VD(S)$. In their analysis, this
first operation takes constant time, and the second operation takes time
proportional to the size of $\VD(S)$ that lies inside the Voronoi region of $t$
in $\VD(S\cup\{t\})$. We observe that that to implement these operations it is
sufficient to be able to compute the intersection between two ``related''
bisectors $b_{st}$ and $b_{tu}$ --essentially computing the Voronoi diagram of
three sites-- and to test if a given point $q$ lies on the $s$-side of the
bisector $b_{st}$ (i.e.~testing if $q$ is ``closer to'' $s$ than to $t$). We
then show that in our setting we can implement these operations in
$O(\log^2 m)$ time, thus leading to an $O(k\log^2 m)$ time algorithm to compute
$\VD$.

\begin{observation}
  \label{obs:intersect_once}
  Let $S$ be a set of sites whose Voronoi diagram in domain $D$ is Hamiltonian,
  and let $s$, $t$, and $u$ be three sites in $S$. The bisectors $b_{st}$ and
  $b_{tu}$ intersect at most once inside $D$.
\end{observation}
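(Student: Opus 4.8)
The plan is to use the defining property of a Hamiltonian Voronoi diagram — the existence of a curve (the diagonal $d$, or more generally the Hamiltonian curve) that crosses every Voronoi region exactly once, and for every subset of sites — together with the fact that each bisector $b_{uv}$ among three sites $s,t,u$ separates the plane (or domain $D$) into two connected half-regions. First I would restrict attention to the three-site subset $\{s,t,u\}$ of $S$; since the Voronoi diagram is Hamiltonian for all subsets, the diagram $\VD(\{s,t,u\})$ inside $D$ is also Hamiltonian, so the Hamiltonian curve $\gamma$ (here a subsegment of $d$) meets each of the three Voronoi regions $V(s)$, $V(t)$, $V(u)$ in exactly one connected piece. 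This gives a cyclic (in fact linear, along $\gamma$) order of the three regions along $\gamma$.

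Next I would argue that $b_{st}$ and $b_{tu}$ each cross $\gamma$ at most once: $b_{st}$ can only meet $\gamma$ at the unique boundary point between the piece $V(s)\cap\gamma$ and the piece $V(t)\cap\gamma$ (if these pieces are adjacent along $\gamma$), and similarly for $b_{tu}$. Then I would observe that in $\VD(\{s,t,u\})$, by general position the three bisectors either are pairwise disjoint or meet in a common Voronoi vertex; the edges of this diagram consist of exactly the portions of $b_{st}$, $b_{tu}$, $b_{su}$ bounding the three regions. The key step is: any intersection point $p$ of $b_{st}$ and $b_{tu}$ inside $D$ is a point equidistant from $s$, $t$, and $u$, hence a Voronoi vertex of $\VD(\{s,t,u\})$ where all three regions meet. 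If $b_{st}$ and $b_{tu}$ met twice inside $D$, we would have two such triple points, and the three bisector curves would bound a bounded region inside $D$ that is equidistant-free but surrounded by all three regions — forcing one of the three regions to be disconnected, or to be visited twice by $\gamma$, contradicting the Hamiltonian property applied to $\{s,t,u\}$.

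The main obstacle I anticipate is making the last topological step rigorous: two triple-intersection points of $b_{st}$ and $b_{tu}$ would partition $D$ into cells, and I need to show that no matter how the third bisector $b_{su}$ runs, at least one of $V(s),V(t),V(u)$ fails to be a single connected region crossed once by $\gamma$. I would handle this by noting that between the two intersection points, one of $b_{st}$ or $b_{tu}$ has the $t$-side locally on one side and the other has the $t$-side on the other side, so region $V(t)$ would appear on ``both sides'' of a separating arc — concretely, walking along $\gamma$ one would enter $V(t)$, leave it, and re-enter it, contradicting that $\gamma$ crosses $V(t)$ exactly once. I would phrase this cleanly using the fact (from the Hamiltonian definition, applied to the subset $\{s,t\}$, $\{t,u\}$, and $\{s,t,u\}$) that each single bisector crosses $\gamma$ at most once, so the sign pattern of ``closer to $t$ than to $s$'' and ``closer to $t$ than to $u$'' along $\gamma$ each flips at most once, whence $V(t)\cap\gamma$ is a single interval; a second crossing of $b_{st}\cap b_{tu}$ would be incompatible with this interval structure, giving the contradiction.
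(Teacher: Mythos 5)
Your opening moves match the paper exactly: restrict to the three-site subset $\{s,t,u\}$, note that the Hamiltonian property (holding for all subsets) passes to $\VD(\{s,t,u\})$, and observe that any point where $b_{st}$ and $b_{tu}$ cross is equidistant from $s,t,u$ and hence a degree-three Voronoi vertex of $\VD(\{s,t,u\})$. Your parenthetical guess — ``forcing one of the three regions to be disconnected'' — is also essentially the right closing idea.

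Where you go astray is in the resolution you then commit to. You try to derive the contradiction from the \emph{sign pattern along $\gamma$}: each bisector crosses $\gamma$ at most once, so $V(t)\cap\gamma$ is a single interval, and you claim a second triple point is ``incompatible with this interval structure.'' This does not close the argument. The two hypothetical triple points $p_1,p_2$ need not lie on or near $\gamma$ at all, and nothing in the interval structure of $\gamma\cap V(t)$ forces $\gamma$ to re-enter $V(t)$. A disconnected $V(t)$ (or an extra face) could in principle coexist with $\gamma$ meeting only one of its pieces; the $\gamma$-crossing statistics alone do not rule that out.

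The paper's proof instead finishes with a face-counting argument: since $\VD(\{s,t,u\})$ is Hamiltonian, it is a forest partitioning $D$ into exactly three simply connected regions (one per site); a forest embedded in a disk-like domain with $k$ degree-three vertices has $k+2$ leaves on $\partial D$ and hence creates more than three faces as soon as $k\ge 2$. So there can be at most one degree-three vertex, i.e.\ at most one intersection of $b_{st}$ and $b_{tu}$. That Euler-type count is the step you were missing; the $\gamma$ sign-pattern reasoning should be replaced by it (or by an explicit appeal to the connectedness axiom for abstract Voronoi regions together with the ``too many faces'' observation).
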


\begin{proof}
  Consider an intersection point $p$ between $b_{st}$ and $b_{tu}$ in $D$. This
  point also appears on $b_{su}$ (see
  e.g.~\cite{klein1989avd,klein1994hamiltonian_vd}), and thus it appears as a
  degree three vertex in the Voronoi diagram $\VD(\{s,t,u\})$. Since
  $\VD(\{s,t,u\})$ is also Hamiltonian, it is a forest that partitions the
  domain $D$ into three simply connected regions: one for each site. It follows
  that there can be at most one degree three vertex in $\VD(\{s,t,u\})$,
  otherwise we get more than three regions in $D$.
\end{proof}

\begin{wrapfigure}[25]{r}{0.28\textwidth}
  \vspace{-2\baselineskip}
  \centering
  \includegraphics{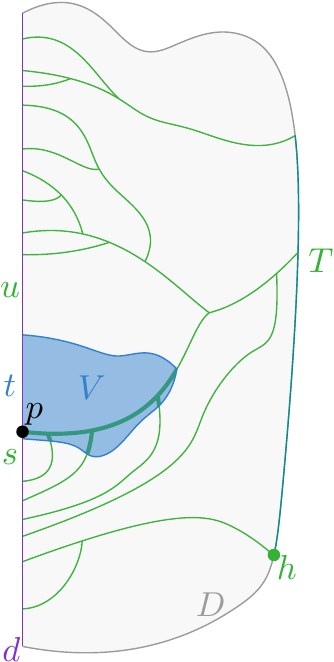}
  \caption{The part of $T$, the tree representing the Hamiltonian Voronoi
    diagram (green), that lies inside the Voronoi region $V$ (blue) of a new
    site $t$ is a subtree $T'$ (fat). We can compute $T'$, by exploring $T$
    from a point $p$ inside $V$. In case $t$ is the first site in the ordering
    along $d$ we can start from the root $h$ of the ``first'' tree in \VD.
  }
  \label{fig:insert_avd}
\end{wrapfigure}

\paragraph{Inserting a new site.} Klein and
Lingas~\cite{klein1994hamiltonian_vd} sketch the following algorithm to insert
a new site $t$ into the Hamiltonian Voronoi diagram $\VD(S)$ of a set of sites
$S$. We provide some missing details of this procedure, and briefly argue that
we can use it to insert into a diagram of three sites. Let $D$ denote the
domain in which we are interested in $\VD(S)$ (in our application, $D$ is the
subpolygon $P_r$) and let $d$ be the curve that intersects all regions in
$\VD(S)$. Recall that $\VD(S)$ is a forest. We root all trees such that the
leaves correspond to the intersections of the bisectors with $d$. The roots of
the trees now corresponds to points along the boundary $\partial D$ of $D$. We
connect them into one tree $T$ using curves along $\partial D$. Now consider
the Voronoi region $V$ of $t$ with respect to $S\cup\{t\}$, and observe that
$T \cap V$ is a subtree $T'$ of $T$. Therefore, if we have a starting point $p$
on $T$ that is known to lie in $V$ (and thus in $T'$), we can compute $T'$
simply by exploring $T$. To obtain $\VD(S \cup \{t\})$ we then simply remove
$T'$, and connect up the tree appropriately. See Fig.~\ref{fig:insert_avd} for
an illustration. We can test if a vertex $v$ of $T$ is part of $T'$ simply by
testing if $v$ lies on the $t$-side of the bisector between $t$ and one of
the sites defining $v$. We can find the exact point $q$ where an edge
$(u,v)$ of $T$, representing a piece of a bisector $b_{su}$ leaves $V$ by
computing the intersection point of $b_{su}$ with $b_{tu}$ and $b_{st}$.

We can find the starting point $p$ by considering the order of the Voronoi
regions along $d$. Let $s$ and $u$ be the predecessor and successor of $t$ in
this order. Then the intersection point of $d$ with $b_{su}$ must lie in
$V$. This point corresponds to a leaf in $T$. In case $t$ is the first site in
the ordering along $d$ we start from the root $h$ of the tree that contains the
bisector between the first two sites in the ordering; if this point is not on
the $t$-side of the bisector between $t$ and one of the sites defining $h$ then
$b_{tu}$ forms its own tree (which we then connect to the global root). We do
the same when $t$ is the last point in the ordering. This procedure requires
$O(|T'|)$ time in total (excluding the time it takes to find $t$ in the
ordering of $S$; we already have this information when the procedure is used in
the algorithm of Klein and Lingas~\cite{klein1994hamiltonian_vd}).

We use the above procedure to compute the Voronoi diagram of five sites in
constant time: simply pick three of the sites $s$, $t$, and $u$, ordered along
$d$, compute their Voronoi diagram by computing the intersection of $b_{st}$
and $b_{tu}$ (if it exists), and insert the remaining two sites. Since the
intermediate Voronoi diagrams have constant size, this takes constant time.

\begin{wrapfigure}[8]{r}{0.45\textwidth}
  \vspace{-2.1\baselineskip}
  \centering
  \includegraphics{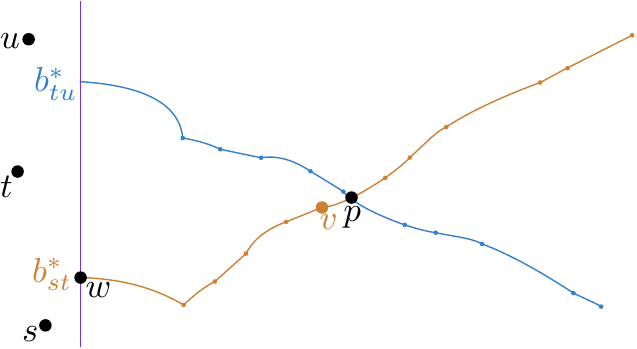}
  \caption{We find the intersection point of the two bisectors by binary
    searching along $\br_{st}$. }
  \label{fig:find_intersection_binsearch}
\end{wrapfigure}
\paragraph{Computing the intersection of bisectors $\br_{st}$ and $\br_{tu}$. }
Since \VD is a Hamiltonian Voronoi diagram, any any pair of bisectors
$\br_{st}$ and $\br_{tu}$, with $s,t,u \in T^*$, intersect at most once
(Observation~\ref{obs:intersect_once}). Next, we show how to compute
this intersection point (if it exists).




\begin{lemma}
  \label{lem:binary_search_funnel}
  Given $\hat{\P}(z,s,t)$ and $\hat{\P}(z',t,u)$, we can find the
  intersection point $p$ of $\br_{st}$ and $\br_{tu}$ (if it exists) in
  $O(\log^2 m)$ time.
\end{lemma}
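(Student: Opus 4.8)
The plan is to find the intersection point $p$ of $\br_{st}$ and $\br_{tu}$ by a binary search over the vertices of one of the two bisectors, say $\br_{st}$, using the random-access primitive of Lemma~\ref{lem:random_access_bst_final} together with the ability to determine, for a query point $q \in P_r$, whether $q$ lies on the $s$-side or $t$-side of $b_{tu}$ (equivalently, whether $q$ is closer to $t$ or to $u$; this is a single distance comparison, computable in $O(\log m)$ time via two-point shortest path queries). The key structural fact is that, by Observation~\ref{obs:intersect_once}, $\br_{st}$ and $\br_{tu}$ cross at most once inside $P_r$, so as we walk along $\br_{st}$ from its endpoint $w$ on $d$ toward its endpoint $z$ on $\partial P_r$, the predicate ``the current point of $\br_{st}$ is on the $u$-side of $b_{tu}$'' is monotone: it flips at most once. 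This monotonicity is exactly what makes a binary search correct.

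Concretely, I would first recall (from Theorem~\ref{thm:represent_bisector} and its proof, built on Lemma~\ref{lem:finding_w} and the paragraph on finding $z$) that in $O(\log^2 m)$ time we obtain the endpoints $w$ and $z$ of $\br_{st}$ and a representation of $\hat{\P}(z,s,t)$ supporting $O(\log m)$-time access to the $i$-th vertex of $\br_{st}$; likewise for $\br_{tu}$. Let $v_1 = w, v_2, \ldots, v_r = z$ be the vertex sequence of $\br_{st}$, where $r = O(m)$. I would binary search on the index $i$: at each step I take the candidate vertex $v_i$ and test whether it lies on the $t$-side or the $u$-side of $b_{tu}$; by monotonicity this tells me whether the crossing point $p$ (if any) lies before or after $v_i$ along $\br_{st}$. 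After $O(\log m)$ steps, each costing $O(\log m)$, I have isolated the single edge (a straight or hyperbolic arc) of $\br_{st}$ that contains $p$, or I have concluded that the entire $\br_{st}$ lies on one side of $b_{tu}$ and the two bisectors do not cross. I then need to repeat the same kind of reasoning to isolate the corresponding edge of $\br_{tu}$: the arc of $\br_{st}$ containing $p$ is bounded by two consecutive extension segments (from $F_s^t \cup F_t^s$), and $p$ is the intersection of this arc with $b_{tu}$; I binary search analogously on the vertices of $\br_{tu}$ to find the arc of $b_{tu}$ on which $p$ lies. Once both bisectors are reduced to single arcs — each a piece of a conic defined by two of $s,t,u$ — the intersection $p$ of the two arcs is the solution of a constant-size system and can be computed in $O(1)$ time.

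A subtlety I would address explicitly: the test ``is $q$ closer to $t$ or to $u$?'' is well-defined for any $q \in P_r$ since both $b_{tu}$ and $b_{st}$ pass through $P_r$, but I must be careful when $q$ happens to lie outside the funnel $\F(z',t,u)$ or the region where the arc comparison is meaningful. However, since the sign of $\geodlen(q,t) - \geodlen(q,u)$ is globally well-defined and changes exactly where $q$ crosses $b_{tu}$, and since $\br_{st}$ is a connected curve inside $P_r$ crossing $b_{tu}$ at most once (Observation~\ref{obs:intersect_once}), the predicate along $\br_{st}$ is genuinely monotone regardless of funnel membership, so the binary search is sound. I also need the two extension segments bounding the located arc of $\br_{st}$: these come for free from Lemma~\ref{lem:random_access_bst} / the proof of Lemma~\ref{lem:random_access_bst_final}, which identifies each vertex of $\br_{st}$ with an extension segment in $F_s^t \cup F_t^s$.

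The main obstacle I anticipate is not the monotone binary search itself but pinning down the two arcs simultaneously in a way that is clearly $O(\log^2 m)$: once I know $p$ lies on arc $\alpha \subset \br_{st}$, I must locate the arc $\beta \subset \br_{tu}$ containing $p$ without already knowing $p$. The clean way is to observe that $\alpha$ lies between two known extension segments; its endpoints are two known vertices $v_i, v_{i+1}$ of $\br_{st}$ whose positions I have computed; the arc $\alpha$ is therefore an explicit conic arc (known defining sites $s,t$ and known endpoints). I can then test, for any vertex $v'_j$ of $\br_{tu}$, whether it lies on the $w$-side or $z$-side of $\br_{st}$ near $\alpha$ — or more simply, test whether $v'_j$ is closer to $s$ or to $t$ — which is again a monotone predicate along $\br_{tu}$ (since $\br_{tu}$ crosses $b_{st}$ at most once, by Observation~\ref{obs:intersect_once} applied with the roles of the bisectors swapped). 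Hence a second binary search on the $O(m)$ vertices of $\br_{tu}$, each test costing $O(\log m)$, isolates $\beta$ in $O(\log^2 m)$ time. Adding the $O(\log^2 m)$ preprocessing to obtain both bisector representations, the total is $O(\log^2 m)$, which is the claimed bound; I then finish by solving the constant-size system for $p$.
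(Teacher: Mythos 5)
Your proposal is correct and follows essentially the same approach as the paper's proof: binary search over the vertex sequence of $\br_{st}$ (accessed in $O(\log m)$ time each via Lemma~\ref{lem:random_access_bst_final}) using the distance comparison $\geodlen(t,v)$ vs.\ $\geodlen(u,v)$ as the monotone predicate, then repeat symmetrically for $\br_{tu}$, and finally intersect the two isolated arcs in $O(1)$ time. The paper is terser and does not spell out the monotonicity justification via Observation~\ref{obs:intersect_once} that you give, but the algorithm is the same.
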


\noindent{\textit{Proof.} }
  We will find the edge of $\br_{st}$ containing the intersection point $p$ by
  binary searching along the vertices of $\br_{st}$. Analogously we find the
  edge of $\br_{tu}$ containing $p$. It is then easy to compute the exact
  location of $p$ in constant time.

  Let $w$ be the starting point of $\br_{st}$, i.e.~the intersection of
  $b_{st}$ with $d$, and assume that $t$ is closer to $w$ than $u$, that is,
  $\geodlen(t,w) < \geodlen(u,w)$ (the other case is symmetric). In our binary
  search, we now simply find the last vertex $v=v_k$ for which
  $\geodlen(t,v) < \geodlen(u,v)$. It then follows that $p$ lies on the edge
  $(v_k,v_{k+1})$ of $\br_{st}$. See
  Fig.~\ref{fig:find_intersection_binsearch}. Using
  Lemma~\ref{lem:random_access_bst_final} we can access any vertex of
  $\br_{st}$ in $O(\log m)$ time. Thus, this procedure takes $O(\log^2 m)$ time
  in total.  \hfill\qed

Note that we can easily extend the algorithm from
Lemma~\ref{lem:binary_search_funnel} to also return the actual edges of
$\br_{st}$ and $\br_{tu}$ that intersect. With this information we can
construct the cyclic order of the edges incident to the vertex of \VD
representing this intersection point. It now follows that for every group $S^*$
of sites in $P_\ell$, we can compute a representation of \VD of size $O(k)$ in
$O(k\log^2 m)$ time.



\subsection{Planar Point Location in \VD}
\label{sub:query}

In this section we show that we can efficiently answer point location queries,
and thus nearest neighbor queries using \VD.


\begin{lemma}
  \label{lem:bisectors_x-monotone}
  For $s,t \in S_\ell$, the part of the bisector $\br_{st} = b_{st} \cap P_r$
  that lies in $P_r$ is $x$-monotone.
\end{lemma}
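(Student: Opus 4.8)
The plan is to argue that $\br_{st}$ cannot have a vertical tangent in $P_r$, which is equivalent to $x$-monotonicity since $\br_{st}$ is a smooth curve (Lemma~\ref{lem:bisector_prop}). Recall the setup: $d$ is vertical, $P_r$ lies to the right of $d$, and $\br_{st}$ is oriented from the point $w$ on $d$ to the point $z$ on the outer boundary $\partial P_r \setminus d$. By Lemma~\ref{lem:bisector_in_funnel}, $\br_{st}$ lies inside the funnel $\F(z,s,t)$, which is bounded by the two convex chains $\geod(t',z)$ and $\geod(z,s')$ (the apex being $z$) and the subsegment $\overline{s't'}$ of $d$. A natural approach: show that at every point $v$ of $\br_{st}$, the tangent direction of $\br_{st}$ has a positive horizontal component, i.e. points rightward (toward $z$).

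The key step is a local analysis of the bisector direction. At a point $v$ on $\br_{st}$, let $v_s$ be the last vertex of $\geod(s,v)$ (the ``anchor'' for $s$) and $v_t$ the last vertex of $\geod(t,v)$; then locally $\br_{st}$ is the bisector of the two points $v_s$ and $v_t$, so its tangent at $v$ is perpendicular to the segment $\overline{v_sv_t}$ (it's a hyperbolic arc with foci $v_s,v_t$, and the tangent at $v$ bisects the angle $\angle v_svv_t$ externally — perpendicular to the angle bisector of $\overline{v_sv},\overline{v_tv}$). The first thing I would establish is that both anchors $v_s$ and $v_t$ lie on the $s$-side chain and $t$-side chain of the funnel respectively, and crucially that both lie weakly to the \emph{left} of $v$ (i.e. $\max(x(v_s),x(v_t)) \le x(v)$), because $\geod(s,v)$ and $\geod(t,v)$ both enter the funnel through $d$ and reach $v$, and the funnel chains are ``unwinding'' toward the apex $z$. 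Given that, the segment $\overline{v_sv_t}$ has both endpoints to the left of $v$; one then needs that $\overline{v_sv_t}$ is not itself horizontal — equivalently that $v_s$ and $v_t$ don't have the same $x$-coordinate as each other in a way forcing a vertical tangent at $v$. Actually the cleaner formulation: the tangent at $v$ is perpendicular to $\overline{v_sv_t}$, so the tangent is vertical iff $\overline{v_sv_t}$ is horizontal, i.e. $y(v_s)=y(v_t)$. I would rule this out by using that $v_s$ sits on the lower boundary chain $\geod(z,s')$ with $s'_y \le t'_y$ and $v_t$ on the upper chain $\geod(t',z)$, together with the general position assumption, to conclude $v_s$ is strictly below $v_t$; hence $\overline{v_sv_t}$ is never horizontal and the tangent at $v$ is never vertical.

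The main obstacle I anticipate is making the ``anchors lie to the left of $v$'' claim fully rigorous, since the funnel chains need not be monotone in a naive sense, and one must handle the boundary case where $v$ is very close to $w$ on $d$ itself (there the relevant anchors could be $s'$ and $t'$ on $d$, or the first funnel vertices). A robust way to handle this is to use Lemma~\ref{lem:bst_intersect_d} directly: $b_{st}$ meets the vertical line $d$ exactly once, and more generally I would like to show $b_{st}$ meets \emph{any} vertical line at most once inside $P_r$. One can try to port the proof of Lemma~\ref{lem:bst_intersect_d} verbatim — replacing $d$ by an arbitrary vertical chord $\ell$ of $P_r$: if $b_{st}$ crossed $\ell$ twice at $w_1$ above $w_2$, the arc between them bounds an empty region $R$, the segments $\overline{tw_1},\overline{tw_2}$ (or $\overline{sw_1},\overline{sw_2}$) lie in $R$ by Lemma~\ref{lem:bisector_sp_intersect}, and then the equidistant site on the other side is forced to route around $R$, contradicting Lemma~\ref{lem:bisector_sp_intersect} again. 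If that argument goes through on arbitrary vertical chords, $x$-monotonicity of $\br_{st}$ is immediate. I would present the proof in that form, citing the structure of the proof of Lemma~\ref{lem:bst_intersect_d}, and fall back on the tangent-direction analysis above only if the ``empty region $R$'' step needs $\ell$ to be a genuine diagonal rather than an interior chord.
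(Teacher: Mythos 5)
Both of your proposed routes have a concrete flaw, and neither matches the paper's argument. The tangent-direction approach rests on a false claim: the tangent to the hyperbolic bisector arc at $v$ is \emph{not} perpendicular to the focal chord $\overline{v_sv_t}$, nor is it the external bisector of $\angle v_s v v_t$. Writing the bisector locally as the level set $\{x : |x-v_s| - |x-v_t| = c\}$, its normal at $v$ is $\hat{w}_s - \hat{w}_t$, where $\hat{w}_s$ is the unit vector from $v_s$ toward $v$; the tangent is therefore along $\hat{w}_s + \hat{w}_t$, the \emph{internal} angle-bisector direction of the two focal rays. So the criterion you derive (``vertical tangent iff $y(v_s)=y(v_t)$'') is wrong except in the isoceles case $|vv_s|=|vv_t|$, and the subsequent height comparison of the two anchors does not prove the lemma. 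The ``anchors lie weakly to the left of $v$'' claim that you flag as the main obstacle would in fact be exactly the right ingredient for a corrected version of this route (it forces $\hat{w}_s+\hat{w}_t$ to have nonnegative $x$-component, strictly positive away from degenerate configurations), but you leave that claim unproved.

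The fallback of porting Lemma~\ref{lem:bst_intersect_d} to an arbitrary vertical chord $\ell$ of $P_r$ fails for the reason you anticipate. That proof crucially uses that $s$ and $t$ lie in $P_\ell$, on the same side of $d$ as the pocket $R$: the geodesics $\geod(s,w_i),\geod(t,w_i)$ then cannot escape $R$ and, since $R$ is empty of obstacles, are straight segments with $s,t \in R$. When $\ell$ is strictly interior to $P_r$, the pocket $R$ lies inside $P_r$ while $s$ and $t$ remain in $P_\ell$; the geodesics approach $w_1,w_2$ from the left, are not trapped in $R$, and the contradiction evaporates. The paper's actual proof is a third, shorter route: from a local $x$-maximum $p$ of $\br_{st}$ and a second crossing $p'$ of the vertical line through $p$, it picks any $q$ inside the enclosed pocket and argues that one of $\geod(s,q),\geod(t,q)$ must be non-$x$-monotone and hence turn at $p$, so $p$ would be a reflex polygon vertex lying on $b_{st}$ --- contradicting the general-position assumption that $b_{st}$ meets $\partial P$ only at its two endpoints.
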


  \begin{wrapfigure}[14]{r}{0.31\textwidth}
    \vspace{-.5cm}
    \centering
    \includegraphics{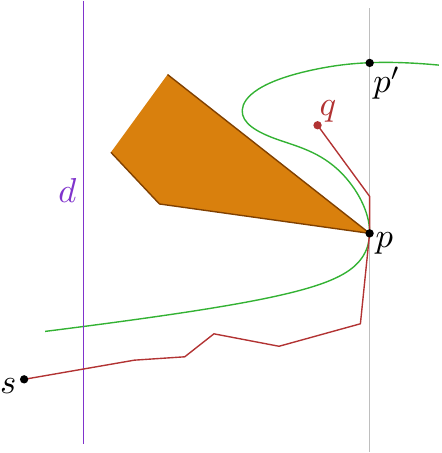}
    \caption{A non $x$-monotone bisector can occur only in \mbox{degenerate} inputs.}
    \label{fig:non_x-monotone}
  \end{wrapfigure}

\noindent\textit{Proof.}
  Assume, by contradiction, that $b_{st}$ is not $x$-monotone in $P_r$, and let
  $p$ be a point on $b_{st}$ such that $p_x$ is a local maximum. Since
  $\br_{st}$ is not $x$-monotone, it intersects the vertical line through $p$
  also in another point $p'$ further along $\br_{st}$. Let $q$ be a point in
  the region enclosed by the subcurve along $\br_{st}$ from $p$ to $p'$ and
  $\overline{pp'}$. See Fig.~\ref{fig:non_x-monotone}. This means that either
  $\geod(s,q)$ or $\geod(t,q)$ is non $x$-monotone. Assume without loss of
  generality that it is $\geod(s,q)$. It is now easy to show that $\geod(s,q)$
  must pass through $p$. However, that means that $\br_{st}$ (and thus
  $b_{st}$) touches the polygon boundary in $p$. By the general position
  assumption $b_{st}$ has no points in common with $\partial P$ other than its
  end points. Contradiction.
\hfill\qed

Since the (restriction of the) bisectors are $x$-monotone
(Lemma~\ref{lem:bisectors_x-monotone}) we can preprocess \VD for point location
using the data structure of Edelsbrunner and
Stolfi~\cite{edelsbrunner_optimal_1986}. Given the combinatorial embedding of
\VD, this takes $O(|\VD|)$ time. To decide if a query point $q$ lies above or
below an edge $e \in \VD$ we simply compute the distances $\geodlen(s,q)$ and
$\geodlen(t,q)$ between $q$ and the sites $s$ and $t$ defining the bisector
corresponding to edge $e$. This takes $O(\log m)$ time. Point $q$ lies on the
side of the site that has the shorter distance. It follows that we can
preprocess \VD in $O(k)$ time, and locate the Voronoi region containing a query
point $q$ in $O(\log k\log m)$ time. We summarize our results in the following Lemma.

\begin{lemma}
  \label{lem:reconstruct_v}
  Given a set of $k$ sites $S^*$ in $P_\ell$, ordered by increasing distance
  from the bottom-endpoint of $d$, we can construct the forest \VD representing
  the Voronoi diagram of $S^*$ in $P_r$ in $O(k\log^2 m)$ time. Given \VD, we
  can find the site $s \in S^*$ closest to a query point $q \in P_r$ in
  $O(\log k\log m)$ time.
\end{lemma}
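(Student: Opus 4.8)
The plan is to assemble the pieces developed earlier in Sections~\ref{sec:Bisector} and~\ref{sec:rebuilding} into a single statement. For the construction of \VD, I would proceed in three stages. First, from the input ordering of $S^*$ by distance to the bottom endpoint of $d$, extract the subset $T^* \subseteq S^*$ of sites whose Voronoi regions meet $d$, together with their order along $d$; by Lemma~\ref{lem:compute_sequence} this takes $O(k\log^2 m)$ time. Second, invoke the algorithm of Klein and Lingas~\cite{klein1994hamiltonian_vd} for Hamiltonian abstract Voronoi diagrams: Lemma~\ref{lem:hamiltonian_avd} certifies that $\VD(T^*)$ is Hamiltonian with $d$ as the transversal curve, so the algorithm runs in $O(Xk)$ time where $X$ is the cost of the geometric primitives. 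Third, observe that Section~\ref{sub:Geometric_Primitives} implements every required primitive — computing the intersection of two related bisectors $\br_{st}$, $\br_{tu}$, and deciding the side of a bisector on which a point lies — in $X = O(\log^2 m)$ time, the former via Lemma~\ref{lem:binary_search_funnel} (which in turn relies on Lemma~\ref{lem:random_access_bst_final} and the $O(\log^2 m)$-time computation of the funnels $\hat{\P}(z,s,t)$ from Section~\ref{sec:Bisector}), and the latter by two $O(\log m)$-time geodesic distance queries. Multiplying out gives the $O(k\log^2 m)$ construction bound, and the output has size $O(k)$ since \VD has $O(k)$ vertices of degree one or three.

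For the query bound, I would note that \VD is a planar subdivision whose edges are the bisector arcs $\br_{st}$, which are $x$-monotone by Lemma~\ref{lem:bisectors_x-monotone}. Hence the point-location structure of Edelsbrunner and Stolfi~\cite{edelsbrunner_optimal_1986} applies and can be built in $O(|\VD|) = O(k)$ time given the combinatorial embedding produced above. A point-location query descends through $O(\log k)$ edges, and at each edge $e$ (corresponding to a bisector $b_{st}$) we decide which side $q$ lies on by comparing $\geodlen(s,q)$ and $\geodlen(t,q)$, each computable in $O(\log m)$ time using the two-point query structure of Guibas and Hershberger~\cite{guibas1989query}. This yields $O(\log k\log m)$ query time; the region finally located gives the site in $S^*$ (equivalently, in $T^*$, since sites not in $T^*$ have empty Voronoi region in $P_r$) closest to $q$.

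The step I expect to carry the real weight is making precise that the abstract Voronoi diagram framework of Klein and Lingas genuinely applies here — that our bisectors $\br_{st}$ satisfy the axioms required of an admissible bisector system (each bisector is an unbounded-type curve in $P_r$ separating the plane into two regions, related bisectors meet at most once, etc.). Observation~\ref{obs:intersect_once} and Lemma~\ref{lem:bst_intersect_d} supply the "meet at most once" and Hamiltonicity conditions, and the remaining axioms follow from the basic geometric facts about geodesic bisectors recorded in Lemma~\ref{lem:bisector_prop} and Lemma~\ref{lem:bisector_sp_intersect}; but spelling this out carefully — and confirming that Klein and Lingas's primitive "compute the Voronoi diagram of five sites in $O(1)$ time" can be replaced, without loss, by our $O(\log^2 m)$-time emulation via insertions into a constant-size diagram (as described in Section~\ref{sub:Geometric_Primitives}) — is where the argument must be handled with care rather than by citation alone. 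Everything else is bookkeeping of the running times already established.
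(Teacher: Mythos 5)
Your proposal is correct and matches the paper's own assembly of the argument exactly: Lemma~\ref{lem:compute_sequence} to extract $T^*$ in $O(k\log^2 m)$ time, Lemma~\ref{lem:hamiltonian_avd} to justify the $O(Xk)$ Klein--Lingas construction, Lemma~\ref{lem:binary_search_funnel} together with $O(\log m)$ geodesic-distance side tests giving $X = O(\log^2 m)$, and Lemma~\ref{lem:bisectors_x-monotone} plus the Edelsbrunner--Stolfi structure for $O(\log k\log m)$ point location. Your closing caveat about fully verifying the admissible-bisector-system axioms is a fair observation, but the paper treats this at the same level of detail (via Lemma~\ref{lem:bisector_prop}, Lemma~\ref{lem:bisector_sp_intersect}, and Observation~\ref{obs:intersect_once}), so there is no gap relative to the paper's argument.
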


\section{Putting Everything Together}
\label{sec:Together}

For every level of the balanced decomposition, we split $S_\ell$ into
$O(\sqrt{n})$ groups, and for each group build the Voronoi diagram in $P_r$
using the approach from Section~\ref{sec:rebuilding}. We process the sites in
$S_r$ analogously. This leads to the following result.

\begin{theorem}
  \label{thm:nn_search}
  Given a polygon $P$ with $m$ vertices, we can build a fully dynamic data
  structure of size $O(n\log m + m)$ that maintains a set of $n$ point sites in
  $P$ and allows for geodesic nearest neighbor queries in
  $O(\sqrt{n}\log n\log^2 m)$ time. Inserting or deleting a site takes
  $O(\sqrt{n}\log^3 m)$ time.
\end{theorem}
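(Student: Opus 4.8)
The plan is to assemble the pieces developed in Sections~\ref{sec:Global_Approach}--\ref{sec:rebuilding} into the balanced decomposition and account for space, query time, and update time separately. First I would set up the static part: preprocess $P$ for two-point shortest path queries in $O(m)$ time using Guibas--Hershberger~\cite{guibas1989query} (with the augmentations from Section~\ref{sec:Bisector}), and build a balanced decomposition~\cite{guibas1987balanced} of $P$, which has $O(\log m)$ levels and whose every node carries a diagonal $d$ splitting its subpolygon $P'$ into $P_\ell$ and $P_r$ of roughly equal size. At each internal node I would partition $S_\ell = S \cap P_\ell$ into $O(\sqrt{n})$ groups of size $O(\sqrt{n})$, and for each group $S^*$ build the forest $\VD(S^*)$ representing its Voronoi diagram in $P_r$, preprocessed for planar point location; by Lemma~\ref{lem:reconstruct_v} each such build costs $O(|S^*|\log^2 m) = O(\sqrt{n}\log^2 m)$ time, and the representation has size $O(|S^*|)$. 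I would handle $S_r$ inside $P_r$ symmetrically, and at each leaf (a triangle) maintain the sites there in a dynamic Euclidean nearest-neighbor structure~\cite{bentley1980decomposable}.

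For the space bound I would observe that each site appears in exactly one group per level, and each group's Voronoi forest has size linear in the group; summing over the $O(\log m)$ levels gives $O(n\log m)$, plus the $O(m)$ for the polygon preprocessing, hence $O(n\log m + m)$. For a query with point $q$, I would walk down the decomposition: at a node whose diagonal splits $P'$ into $P_\ell, P_r$ with $q \in P_r$, I query the point-location structure of each of the $O(\sqrt{n})$ groups of $S_\ell$ (each query $O(\log n\log m)$ by Lemma~\ref{lem:reconstruct_v}), obtaining the closest site in $P_\ell$ in $O(\sqrt{n}\log n\log m)$ time, and recurse into $P_r$; the symmetric case is analogous, and at the leaf I query the Euclidean structure in $O(\sqrt{n}\log n)$ time~\cite{bentley1980decomposable}. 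Taking the best candidate over all $O(\log m)$ levels yields total query time $O(\sqrt{n}\log n\log^2 m)$. For an update (insert or delete $s$), at each level $s$ belongs to exactly one group $S^*$, and I rebuild only $\VD(S^*)$: by Lemma~\ref{lem:compute_sequence} I first extract the ordered subset $T^*$ whose regions meet $d$ in $O(|S^*|\log^2 m)$ time (maintaining $S^*$ in a balanced BST keyed by distance to the bottom endpoint of $d$, updated in $O(\log m + \log n)$ time), then rebuild by Lemma~\ref{lem:reconstruct_v} in $O(|S^*|\log^2 m) = O(\sqrt{n}\log^2 m)$ time; inserting/deleting in the leaf Euclidean structure costs $O(\sqrt{n})$. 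Over $O(\log m)$ levels this is $O(\sqrt{n}\log^3 m)$.

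The main subtlety is not any single calculation but making the bookkeeping consistent: one must verify that when $s$ crosses from $P_\ell$ to $P_r$ relative to some diagonal it is still handled correctly (it always lies on exactly one side, so it is stored in exactly one group at that level), that the groups can be kept balanced at size $\Theta(\sqrt{n})$ under insertions and deletions by the standard rebalancing trick (periodically redistributing, charged $O(\log^2 m)$ per site moved, which is absorbed), and that rebuilding a single group's forest does not require touching the rest of the level. I also want the $\log n$ factors to appear only in the query (from point location in a group of size $O(\sqrt{n})$) and not in the update, which is why the update bound is $O(\sqrt{n}\log^3 m)$ rather than carrying an extra $\log n$; this follows because rebuilding $\VD(S^*)$ via Lemmas~\ref{lem:compute_sequence} and~\ref{lem:reconstruct_v} costs $O(|S^*|\log^2 m)$ with no $\log n$ term. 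With these points checked, the stated bounds on size, query time, and update time follow directly, proving the theorem.
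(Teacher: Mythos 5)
Your proposal is correct and follows essentially the same approach as the paper's proof: account for space by observing each site lies in exactly one group per level of the $O(\log m)$-level balanced decomposition, bound query time by the $O(\sqrt{n})$ point-location queries at $O(\log n\log m)$ each per level, and bound update time by the single group rebuild at $O(\sqrt{n}\log^2 m)$ per level via Lemmas~\ref{lem:compute_sequence} and~\ref{lem:reconstruct_v}. You add some welcome bookkeeping that the paper's one-paragraph proof leaves implicit (keeping group sizes $\Theta(\sqrt{n})$ under a changing $n$, and maintaining the sorted order incrementally so no $\log n$ enters the update bound), but the argument and all bounds coincide with the paper's.
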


\begin{proof}
  Since every site occurs in exactly one group per level of the balanced
  decomposition, and there are $O(\log m)$ levels, the total space required for
  our data structure is $O(n\log m + m)$. Similarly, when we insert or delete a
  site, we have to insert or delete it in $O(\log m)$ levels. Each such
  insertion or deletion takes $O(\sqrt{n}\log^2 m)$ time, as we have to rebuild
  a Voronoi diagram for one group. For a query, we have to query all
  $O(\sqrt{n})$ groups in $O(\log m)$ levels. This leads to a total query time
  of $O(\sqrt{n}\log n\log^2 m)$.
\end{proof}

\paragraph{Insertions only.} In case our data structure has to support only
insertions, we can improve the insertion time to $O(\log n\log^3 m)$, albeit
being amortized. Recall that at every node of the balanced decomposition
our data structure partitions the sites in $S_\ell$ in $O(\sqrt{n})$ groups of
size $O(\sqrt{n})$ each. Instead, we now maintain these sites in groups of size
$2^i$, for $i \in [1..O(\log n)]$. When we insert a new site, we may get two
groups of size $2^i$. We then remove these groups, and construct a new group of
size $2^{i+1}$. For this group we rebuild the Voronoi diagram $\VD$ that these
sites induce on $P_r$ from scratch. 
Using a standard binary counter argument it can be shown that every data
structure of size $2^{i+1}$ gets rebuild (at a cost of $O(2^i\log^2 m)$) only
after $2^i$ new sites have been inserted~\cite{overmars1983design}. So, if we
charge $O(\log n\log^2 m)$ to each site, it can pay for rebuilding all of the
structures it participates in. We do this for all $O(\log m)$ levels in the
balanced decomposition, hence we obtain the following result.

\begin{corollary}
  \label{cor:insert_only}
  Given a polygon $P$ with $m$ vertices, we can build an insertion-only data
  structure of size $O(n\log m + m)$ that stores a set of $n$ point sites in
  $P$, and allows for geodesic nearest neighbor queries in worst-case
  $O(\log^2 n\log^2 m)$ time, and allows inserting a site in amortized
  $O(\log n\log^3 m)$ time.
\end{corollary}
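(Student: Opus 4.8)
The plan is to combine the static Voronoi-diagram construction of Lemma~\ref{lem:reconstruct_v} with the standard logarithmic-method decomposition of Bentley and Saxe for insertion-only decomposable search problems, applied independently at each of the $O(\log m)$ levels of the balanced decomposition. Recall that at a level whose diagonal $d$ splits a subpolygon into $P_\ell$ and $P_r$, we previously partitioned $S_\ell$ into $O(\sqrt n)$ groups of size $O(\sqrt n)$; now I would instead maintain, for each $i \in [0..\lceil\log n\rceil]$, at most one group of size exactly $2^i$, together with its forest $\VD$ built via Lemma~\ref{lem:reconstruct_v}. (The sites $S_r$ in $P_r$ are handled symmetrically, and the recursion bottoms out at triangles where an insertion-only Euclidean structure is used.)

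First I would describe the insertion procedure: to insert a site $s$, add the singleton $\{s\}$ as a group of size $2^0$; whenever two groups of the same size $2^i$ exist, merge them into one group of size $2^{i+1}$, discard the two old forests, and rebuild $\VD$ from scratch on the $2^{i+1}$ sites. Building a forest on $k$ sites costs $O(k\log^2 m)$ by Lemma~\ref{lem:reconstruct_v}, plus the $O(k\log m)$ cost of maintaining the sites of the new group sorted by geodesic distance to the bottom endpoint of $d$ (this dominates nothing new). Second, I would run the binary-counter amortization argument: a group of size $2^{i+1}$ is created only after $2^i$ sites have been inserted into that level since the last such creation, so over any sequence of $N$ insertions the total rebuilding cost at one level is $\sum_i O((N/2^i)\cdot 2^i\log^2 m) = O(N\log n\log^2 m)$; charging $O(\log n\log^2 m)$ per insertion per level, and summing over the $O(\log m)$ levels, gives amortized insertion time $O(\log n\log^3 m)$. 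Since each site lives in exactly one group per level and the forests have linear size, the space is $O(n\log m + m)$ as in Theorem~\ref{thm:nn_search}.

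Third I would handle the query bound. A nearest-neighbor query for $q$ descends the balanced decomposition; at each level it queries the $O(\log n)$ groups on the side of $d$ not containing $q$ (and recurses on the side containing $q$). Each group query is a point location in $\VD$, costing $O(\log k\log m) = O(\log n\log m)$ by Lemma~\ref{lem:reconstruct_v}, so one level costs $O(\log n\cdot\log n\log m) = O(\log^2 n\log m)$, and over $O(\log m)$ levels the total is $O(\log^2 n\log^2 m)$; the final Euclidean query at the triangle leaf is absorbed. Correctness follows because nearest-neighbor searching is a decomposable search problem: the true nearest site to $q$ lies in some group at some level, and we return the overall minimum over all groups queried.

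The only real subtlety — and the step I would be most careful about — is verifying that replacing "$O(\sqrt n)$ groups of size $O(\sqrt n)$" by "$O(\log n)$ groups of sizes $2^0,2^1,\dots$" does not break anything used in Section~\ref{sec:rebuilding}: the construction of $\VD$ via the Hamiltonian abstract Voronoi diagram only needs the sites of a single group sorted along $d$ (Lemma~\ref{lem:compute_sequence}), which is a purely intra-group property, so it is insensitive to how the sites are partitioned into groups. Likewise the amortized-versus-worst-case distinction must be stated honestly (insertions are amortized because a single insertion can trigger a cascade of merges culminating in an $\Omega(n)$-size rebuild, whereas queries stay worst-case because they never trigger rebuilds). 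Once these points are checked, Corollary~\ref{cor:insert_only} follows directly; I would end the proof by simply invoking Lemma~\ref{lem:reconstruct_v} together with the binary-counter bound and the decomposability of nearest-neighbor search.
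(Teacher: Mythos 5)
Your proof is correct and follows essentially the same approach as the paper: at each of the $O(\log m)$ levels, replace the $O(\sqrt n)$ groups by $O(\log n)$ groups of sizes $2^i$ via the Bentley--Saxe logarithmic method, rebuild a merged group's forest \VD from scratch with Lemma~\ref{lem:reconstruct_v}, and apply the standard binary-counter amortization to charge $O(\log n\log^2 m)$ per insertion per level. The paper's own proof is terser but identical in substance; your extra remarks (that the Hamiltonian construction only uses intra-group structure, and that queries remain worst-case because they never trigger rebuilds) are sound sanity checks that the paper leaves implicit.
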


\paragraph{Offline updates.} When we have both insertions and deletions, but the
order of these operations is known in advance, we can maintain $S$ in amortized
$O(\log n\log^3 m)$ time per update, where $n$ is the maximum number of sites
in $S$ at any particular time. Queries take $O(\log^2 n\log^2 m)$ time, and may
arbitrarily interleave with the updates. Furthermore, we do not have to know
them in advance.

For ease of description, we assume that the total number of updates $N$ is
proportional to the number of sites at any particular time, i.e.~$N \in
O(n)$. We can easily extend our approach to larger $N$ by grouping the updates
in $N/n$ groups of size $O(n)$ each. Consider a node of the balanced
decomposition whose diagonal that splits its subpolygon into $P_\ell$ and
$P_r$. We partition the sites in $S_\ell$ into groups such that at any time, a
query $q \in P_r$ can be answered by considering the Voronoi diagrams in $P_r$
of only $O(\log n)$ groups. We achieve this by building a segment
tree on the intervals during which the sites are
``alive''. More specifically, let $[t_1,t_2]$ denote a time interval in which
a site $s$ should occur in $S_\ell$ (i.e.~$s$ lies in $P_\ell$ and there is an
\acall{Insert}{$s$} operation at time $t_1$ and its corresponding
\acall{Delete}{$s$} at time $t_2$). We store the intervals of all sites in
$S_\ell$ in a segment tree~\cite{bkos2008}. Each node $v$ in this tree is
associated with a subset $S_v$ of the sites from $S_\ell$. We build the Voronoi
diagram that $S_v$ induces on $P_r$. Every site occurs in $O(\log n)$ subsets,
and in $O(\log m)$ levels of the balanced decomposition, so the total size of
our data structure is $O(n\log n\log m + m)$. Building the Voronoi diagram for
each node $v$ takes $O(|S_v|\log^2 m)$ time. Summing these results over all
nodes in the tree, and all levels of the balanced decomposition, the total
construction time is $O(n\log n\log^3 m + m)$ time. We conclude:

\begin{corollary}
  \label{cor:offline}
  Given a polygon $P$ with $m$ vertices, and a sequence of operations that
  either insert a point site inside $P$ into a set $S$, or delete a site from
  $S$, we can maintain a dynamic data structure of size $O(n\log n\log m + m)$,
  where $n$ is the maximum number of sites in $S$ at any time, that stores $S$,
  and allows for geodesic nearest neighbor queries in $O(\log^2 n\log^2 m)$
  time. Updates take amortized $O(\log n\log^3 m)$ time.
\end{corollary}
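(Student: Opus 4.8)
The plan is to combine the segment-tree-on-time-intervals idea, already sketched in the paragraph preceding the statement, with the per-group Voronoi-diagram machinery of Section~\ref{sec:rebuilding} and the balanced decomposition of Section~\ref{sec:Global_Approach}. First I would make precise the reduction to the case $N \in O(n)$: given an arbitrary sequence of $N$ updates, partition time into $O(N/n)$ consecutive blocks of $O(n)$ updates each, run the construction below independently on each block, and observe that a query at a given moment only needs to consult the block that is currently active, so both query time and (amortized, over the block) update time are unaffected. With $N \in O(n)$ fixed, I would set up, at every node of the balanced decomposition whose diagonal $d$ splits its subpolygon into $P_\ell$ and $P_r$, a segment tree on the set of ``alive'' intervals $[t_1,t_2]$ of the sites of $S_\ell$ (and symmetrically for $S_r$ in $P_\ell$). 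Each canonical node $v$ of that segment tree owns a subset $S_v \subseteq S_\ell$, and the key invariant — standard for segment trees~\cite{bkos2008} — is that for any fixed time $\tau$, the sites of $S_\ell$ alive at $\tau$ are exactly the disjoint union of the $S_v$ over the $O(\log n)$ canonical nodes on the root-to-leaf path for $\tau$.

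The second step is to build, for each such $v$, the forest $\VD(S_v)$ that $S_v$ induces on $P_r$, using Lemma~\ref{lem:reconstruct_v}: this costs $O(|S_v|\log^2 m)$ time and $O(|S_v|)$ space, and supports locating the nearest site of $S_v$ for a query $q \in P_r$ in $O(\log|S_v|\log m) = O(\log n\log m)$ time. Summing $|S_v|$ over all nodes $v$ of one segment tree gives $O(n\log n)$ (each site sits in $O(\log n)$ canonical nodes), and summing again over the $O(\log m)$ levels of the balanced decomposition and over both sides $P_\ell, P_r$ of every diagonal yields total space $O(n\log n\log m + m)$ — the extra $m$ being the two-point shortest-path structure of Guibas and Hershberger~\cite{guibas1989query} — and total preprocessing time $O(n\log n\log^3 m + m)$. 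Dividing the preprocessing cost over the $O(n)$ updates in the block gives the claimed amortized update bound $O(\log n\log^3 m)$; here I am using the offline assumption crucially, since the segment tree can only be built once the intervals are known.

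For a query with point $q$, I would walk down the balanced decomposition exactly as in the fully dynamic case: at each of the $O(\log m)$ levels, $q$ lies in one of $P_\ell, P_r$, say $P_r$; I consult the $O(\log n)$ canonical nodes of the $S_\ell$-segment-tree on the path for the current time $\tau$, doing an $O(\log n\log m)$ point-location in each $\VD(S_v)$, then recurse into $P_r$. Once the subpolygon is a triangle I query the auxiliary Euclidean structure as before. This gives $O(\log m)\cdot O(\log n)\cdot O(\log n\log m) = O(\log^2 n\log^2 m)$ per query, worst case, and queries may interleave arbitrarily with updates and need not be known in advance, since answering a query only reads the static structure and the current time. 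The main obstacle — and the only place any real argument is needed beyond bookkeeping — is verifying that Lemma~\ref{lem:reconstruct_v} genuinely applies to each $S_v$: its hypothesis requires the sites presented in order of increasing geodesic distance from the bottom endpoint of $d$, so I would note that we can extract that order for each $S_v$ in $O(|S_v|\log m)$ time (one distance computation per site, then a sort, or by threading the segment-tree nodes through the global sorted order of $S_\ell$), which is absorbed into the stated bounds; and I would observe that $\VD(S_v)$ is still Hamiltonian with respect to $d$ by Lemma~\ref{lem:hamiltonian_avd}, which holds verbatim for any subset of $S_\ell$. Everything else is the same counting already carried out in the proof of Theorem~\ref{thm:nn_search}, with the single factor $O(\sqrt n)$ replaced by $O(\log n)$.
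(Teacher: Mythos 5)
Your proposal is correct and follows essentially the same approach as the paper: a segment tree on the alive intervals at every node of the balanced decomposition, with the forest $\VD(S_v)$ from Lemma~\ref{lem:reconstruct_v} built at each canonical node, amortizing the $O(n\log n\log^3 m)$ construction cost over the $O(n)$ updates in a block. Your extra verification that Lemma~\ref{lem:hamiltonian_avd} still applies to each $S_v$ and that the distance-sorted order required by Lemma~\ref{lem:reconstruct_v} can be supplied within the time budget is a small but welcome tightening of what the paper leaves implicit.
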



\section*{Acknowledgments}

We would like to thank Constantinos Tsirogiannis for suggesting the migration
problem that originally started this research, and Luis Barba for useful
discussions about dynamic Voronoi diagrams. This work was supported by the
Danish National Research Foundation under grant nr.~DNRF84.


\bibliographystyle{abbrv}
\bibliography{geodesic_voronoi_diagram}

\end{document}